\documentclass[maxcitenames=2]{fullpaper}
\pdfoutput=1

\title{ARRIVAL in cube-root subexponential time}
\shorttitle{ARRIVAL in cube-root subexponential time}

\author{
\authorname{Ebrahim Ghorbani}
\affiliation{Hamburg University of Technology, Institute for Algorithms and Complexity, Hamburg, Germany}
\authormail{ebrahim.ghorbani@tuhh.de}
\authorname{Raul Lopes}
\affiliation{Hamburg University of Technology, Institute for Algorithms and Complexity, Hamburg, Germany}
\authormail{raul.lopes@tuhh.de}
\authorname{Matthias Mnich}
\affiliation{Hamburg University of Technology, Institute for Algorithms and Complexity, Hamburg, Germany}
\authormail{matthias.mnich@tuhh.de}
}
\shortauthors{Ghorbani, Lopes, and Mnich}
\date{}

\usepackage{algorithm}
\usepackage[noend]{algpseudocode}
\algrenewcommand\algorithmicrequire{\textbf{Input:}}
\algrenewcommand\algorithmicensure{\textbf{Output:}}
\usepackage{mathrsfs}
\usepackage{xspace}
\usepackage{tikz}
\usetikzlibrary{arrows.meta}
\usetikzlibrary{calc}

\DeclareMathOperator{\dist}{dist}
\DeclareMathOperator{\poly}{poly}
\DeclareMathOperator{\sgn}{sgn}
\DeclareMathOperator{\bbox}{bbox}
\DeclareMathOperator{\dep}{dep}
\DeclareMathOperator{\err}{err}
\newcommand{\rd}{\varepsilon}
\DeclareMathOperator{\surplus}{\Delta}
\newcommand{\N}{\mathbb{N}}
\newcommand{\0}{\mathbf{0}}
\newcommand{\1}{\mathbf{1}}
\newcommand{\e}{\mathbf{e}}
\newcommand{\gmap}{\mathscr{F}}
\newcommand{\halfspace}[2]{\mathcal{H}(#1, #2)}
\newcommand{\garrival}{{\sc Multi-Arrival}\xspace}
\newcommand{\arrival}{{\sc Arrival}\xspace}

\tikzset{vertex/.style={draw,circle,minimum size=5pt,inner sep=0pt}}
\tikzset{blackvertex/.style={draw,circle,minimum size=5pt,inner sep=0pt, fill=black}}
\tikzset{arrow/.style={-{Latex[scale=.75]}, shorten >= 0pt}}

\definecolor{goodred}{HTML}{CC6677}

\begin{document}
\pagenumbering{roman}
\maketitle
\begin{abstract}
  In the \arrival problem, a token travels through a switch digraph: at every vertex, it leaves along one of two out-arcs, and the two are used alternately.
  The question is which of two terminals the token reaches first.
  \arrival lies in $\mathsf{NP} \cap \mathsf{coNP}$, but no polynomial-time algorithm is known.
  The fastest known algorithms turn some vertices into terminals and search for consistent guesses of the numbers of tokens that these vertices send out.
  The first, due to G\"artner et al. [ICALP 2021], searches all vertices at once with an algorithm for fixed points of monotone maps, and the second, due to Haslebacher [ICALP 2025], searches one vertex at a time by binary search.
  Both search for a fixed point of the same map, which returns how many tokens arrive at the guessed vertices.

  We guess the vertices in blocks of $k$, between these two extremes, which needs a search in $k$ dimensions with few evaluations of this map.
  To do so, we build on the connection between \arrival and $\ell_1$-contractions established in the second of these works.
  We search with $\ell_1$-centerpoints of the remaining guesses, which are guaranteed to exist by Haslebacher et al. [FOCS 2025], and which we approximate deterministically.
  In order to prove our results, we show how to find the least fixed points of maps from $\{0, \ldots, N\}^k$ to itself that have unit increments.

  For \arrival on $n$-vertex switch digraphs, this gives run time $2^{\mathcal{O}(n^{1/3} \cdot \log^{2/3} n)}$, improving the previous time bound of $n^{\mathcal{O}(\sqrt{n})}$.
  We work with the more general \garrival problem, with many tokens and several terminals, and solve it in the same time when there are at most $2^n$ tokens.
\end{abstract}
\metadatum{Keywords.}{ARRIVAL, switching flows, fixed points, subexponential-time algorithms, digraphs}
\metadatum{Funding.}{\emph{Raul Lopes}: HIDSS-0002 DASHH (Data Science in Hamburg - Helmholtz Graduate School for the Structure of Matter).}
\thispagestyle{empty}
\clearpage
\pagenumbering{arabic}

\section{Introduction}
\label{sec:introduction}
There are few well-studied problems which are known to lie in the intersection $\mathsf{NP} \cap \mathsf{coNP}$, yet for which no polynomial-time algorithm is known.
One such problem is {\sc Arrival}, a zero-player graph game that was introduced by \fulltextcite{DohrauGKMW2017}.
The game considers a token which travels autonomously through a ``switch digraph'' by always traversing one out-arc at each vertex, after which the vertex ``switches'' to its next out-arc in some fixed ordering.
Throughout the paper, we consider the version of \arrival introduced by \textcite{GartnerHH2021}, which takes as input a directed graph~$G$ with two designated terminal vertices~$t$ and~$t'$.
Each non-terminal vertex has out-degree two, with one ``active'' and one ``inactive'' outgoing arc, and has a directed path to some terminal.
The process begins by placing a token at an origin vertex.
As the token traverses~$G$, at each vertex $v$, it leaves through its active out-arc $a$, and then immediately switches the roles of active and inactive out-arc, so that the next time the token reaches $v$, its other out-arc $a'$ is traversed, upon which the ``first'' out-arc $a$ becomes active again.
The \arrival problem is to determine which terminal, $t$ or $t'$, the token reaches first.
One can show that the token always reaches a terminal, since every non-terminal vertex has a path to some terminal.

The term \emph{switch digraph} originates from the original formulation of {\sc Arrival}~\cite{DohrauGKMW2017}, where the assumption that each vertex has out-degree two allows for an interpretation of the vertices as switches in a railway network traversed by a train (token).
It is straightforward to see that, in this setting, {\sc Arrival} can be solved in roughly $n \cdot 2^n$ steps, where $n$ is the number of switches: there are only $2^n$ switch configurations, and checking all of them determines whether the token reaches $t$ or $t'$ first.
Following the token step by step may indeed take exponentially many steps.
In the instance of \zcref{fig:arrival-bad-instance}, for example, the vertex $u$ is at distance $d \ge 2$ from $t$, and a token with origin~$u$ leaves $u$ exactly $2^{d - 2}$ times before it reaches $t$, since every vertex $v_i$ with $i \le d - 2$ sends the token onwards only at every second visit and back to $u$ at the others.
The main result of \textcite{DohrauGKMW2017} was a proof that  {\sc Arrival} lies in $\mathsf{NP} \cap \mathsf{coNP}$.
This raises the question of whether {\sc Arrival} can be solved in polynomial time.

\begin{figure}[ht]
	\centering
	\begin{tikzpicture}
		\node[blackvertex, scale=.89, label=180:$u$] (v0) at (0,0) {};
		\foreach \i in {1,2,3} {
			\node[blackvertex, scale=.9, label=90:$v_\i$] (v\i) at (\i, 0) {};
		}
		\node (dots) at ($(v3) + (1,0)$) {$\cdots$};
		\node[blackvertex, scale=.9, label=90:$v_{d - 3}$] (u1) at ($(dots) + (1, 0)$) {};
		\node[blackvertex, scale=.9, label=90:$v_{d - 2}$] (u2) at ($(dots) + (2, 0)$) {};
		\node[blackvertex, scale=.9, label=90:$v_{d - 1}$] (u3) at ($(dots) + (3, 0)$) {};
		\node[blackvertex, scale=.9, label=0:$t$] (w1) at ($(dots) + (4,0.35)$) {};
		\node[blackvertex, scale=.9, label=0:$t'$] (w2) at ($(dots) + (4,-0.35)$) {};
		\draw[arrow] (v1) .. controls +(-0.3,-0.3) and +(0.4,-0.3) .. (v0);
		\draw[arrow] (v2) .. controls +(-0.3,-0.42) and +(0.5,-0.42) .. (v0);
		\draw[arrow] (v3) .. controls +(-0.3,-0.54) and +(0.6,-0.54) .. (v0);
		\draw[arrow] (u1) .. controls +(-0.3,-0.66) and +(0.7,-0.66) .. (v0);
		\draw[arrow] (u2) .. controls +(-0.3,-0.78) and +(0.8,-0.78) .. (v0);
		\draw[arrow] (w2) .. controls +(-0.5,-0.5) and +(0.9,-0.9) .. (v0);
		\draw[arrow] (w1) .. controls +(-0.5,0.5) and +(0.9,0.9) .. (v0);
		\draw[arrow, goodred,densely dashed] (v0) to [bend right=15] (v1);
		\draw[arrow] (v0) to [bend left=15] (v1);
		\draw[arrow, goodred,densely dashed] (v1) to (v2);
		\draw[arrow, goodred,densely dashed] (v2) to (v3);
		\draw[arrow, goodred,densely dashed] (v3) to (dots);
		\draw[arrow, goodred,densely dashed] (dots) to (u1);
		\draw[arrow, goodred,densely dashed] (u1) to (u2);
		\draw[arrow, goodred,densely dashed] (u2) to (u3);
		\draw[arrow] (u3) to (w1);
		\draw[arrow, goodred,densely dashed] (u3) to (w2);
	\end{tikzpicture}
	\caption{An instance of \arrival with destinations $t$ and $t'$, in which $d \ge 2$ is the distance from~$u$ to $t$. A token with origin $u$ leaves $u$ exactly $2^{d - 2}$ times before it reaches $t$. In \garrival, where $t$ and $t'$ are terminals whose out-arcs lead to $u$, the vertex $u$ is visited $2^{d - 2}$ times by a single token starting at $t$, and $2^{d - 1}$ times if $t$ and $t'$ hold one token each. Black (solid) arcs are the ones used next, and red (dashed) arcs are the other ones.}
	\label{fig:arrival-bad-instance}
\end{figure}

As of now, this question remains unresolved, despite a decade of extensive research~\cite{AugerCD2022,AugerCDE2023,DohrauGKMW2017,FearnleyGMS2021,GartnerHH2021,GartnerHHKMS2018,GhorbaniHM2025,Haslebacher2025,Karthik2017}.
The currently fastest known algorithm for {\sc Arrival} runs in subexponential time $n^{\mathcal{O}(\sqrt{n})}$, and is due to \textcite{GartnerHH2021}
and \textcite{Haslebacher2025}.
For special instances, faster algorithms have been discovered: polynomial-time algorithms exist for tree-like multi-digraphs~\cite{AugerCD2022}, for certain path-like switch digraphs with uniform switch orders~\cite{AugerCDE2023} and for ladders~\cite{GartnerHH2026}.
Quasi-polynomial-time algorithms are known for tree-like multi-digraphs with many tokens~\cite{GhorbaniHM2025}, and for digraphs whose underlying undirected graphs have bounded treewidth~\cite{Haslebacher2025}.

Many of these algorithms are obtained by working with a generalisation of \arrival, known as \garrival, which is due to \textcite{Hoang2022} and is also called {\sc G-Arrival}~\cite{Haslebacher2025}.
In \garrival, multiple tokens (potentially exponentially many in $n$) need to be routed from an arbitrary number of origins to an arbitrary number of terminals (see \zcref{sec:multiarrival} for the formal definition).
Each vertex $v$ has at most two out-arcs, ordered according to a labelling~$\rho$, which determines the switching rule at $v$: whenever a token arrives at $v$, it traverses the next out-arc in the cyclic order specified by $\rho$.
Initially, a total of~$t^+$ tokens are placed on the origin vertices.
Each token then traverses the switch digraph according to the common switching rules until it reaches a terminal, where it remains permanently.
The goal of \garrival\ is to determine the final distribution of tokens among the terminals.
The original \arrival\ problem corresponds to the specific case of a single token and switch digraphs with out-degree two.

Routing the tokens solves every instance of \garrival, but, as for \arrival, it may take exponential time (\zcref{fig:arrival-bad-instance}).
On the other hand, since a vertex sends its tokens along its out-arcs in turns, roughly a $2^{-d}$ fraction of the tokens that it sends follows a shortest path to a terminal, where at most $t^+$ tokens can arrive.
Hence, a vertex within distance~$d$ from a terminal sends at most about $2^d \cdot t^+$ tokens, which explains why instances in which every vertex is within distance $d$ of a terminal can be solved in time $2^d \cdot \poly(n, \log(t^+ + 2))$ (\zcref{prop:base}).

\subsection{Our contributions}
Our main contribution is an algorithm for \garrival that runs asymptotically faster than all previously known algorithms.
We assume that the number of tokens at each vertex of the switch digraph is encoded in binary, and that the total number of tokens is at most~$2^n$.

\begin{theorem}
\label{thm:main}
There is a deterministic algorithm that solves \garrival on $n$-vertex switch digraphs with at most $2^n$ tokens in time $2^{\mathcal{O}(n^{1/3} \cdot \log^{2/3} n)}$.
In particular, \arrival can be solved in time $2^{\mathcal{O}(n^{1/3} \cdot \log^{2/3} n)}$.
\end{theorem}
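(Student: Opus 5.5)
We follow the paradigm of \textcite{GartnerHH2021} and \textcite{Haslebacher2025}. Choose a set $S$ of vertices such that every vertex is within distance $d$ of $S$ or of a terminal; such a set of size $|S| = \mathcal{O}((n/d)\log n)$ exists, for instance by greedily picking vertices whose distance to the current terminals is large. We turn $S$ into additional terminals. A guess $x \in \{0,\ldots,N\}^S$ says how many tokens each vertex of $S$ sends out, with $N \le 2^{\mathcal{O}(n)}$ because there are at most $2^n$ tokens. For such a guess, the map $\gmap(x)$ returns how many tokens arrive at the vertices of $S$ when the original tokens together with the guessed tokens are routed in the modified instance. By \zcref{prop:base}, each evaluation of $\gmap$ costs $2^d \cdot \poly(n)$. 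The map $\gmap$ is monotone, and the true run is its least fixed point. We also use the $\ell_1$-contraction or unit-increment property of \textcite{Haslebacher2025}: raising one coordinate by $1$ raises $\gmap$ by at most $1$ in total, in the $\ell_1$ sense.

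Instead of searching all $|S|$ coordinates at once, or one coordinate at a time, we partition $S$ into $|S|/k$ blocks of size $k$ and nest the searches. The outer levels fix the guesses of earlier blocks. The innermost level solves a $k$-dimensional least-fixed-point problem for a monotone map with unit increments on $\{0,\ldots,N\}^k$. The key lemma to prove is that such a least fixed point can be found with $\poly(k)\cdot \log N$, or at least $(k\log N)^{\mathcal{O}(1)}$, evaluations.

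The search keeps a region of candidate points that is guaranteed to contain the least fixed point. It queries an approximate $\ell_1$-centerpoint $c$ of this region; such centerpoints exist by Haslebacher et al.\ and we approximate them deterministically, for example by a grid or LP-based computation in time $N^{o(1)}$ per step in dimension $k$. Comparing $\gmap(c)$ with $c$ and using the $\ell_1$-contraction discards an $\ell_1$-halfspace, $\halfspace{c}{\cdot}$. Monotonicity handles the ordering, so that we converge to the least fixed point and not an arbitrary one. Each step removes a constant fraction, say $1/(k+1)$, of the region, which gives $\mathcal{O}(k \cdot k\log N)$ queries.

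Each query at level $j$ recursively solves the remaining blocks. The total number of $\gmap$-evaluations is therefore $(\poly(k)\cdot n)^{|S|/k}$, so the total time is
\[
  2^{d} \cdot 2^{\mathcal{O}((n \log n/(dk))\log(kn))}.
\]
In the expected version with $N$ large, the per-level cost should look like $k^{\mathcal{O}(1)}\cdot n$, which the paper's bound suggests combines as $2^{d} + 2^{\mathcal{O}(k)}$-type terms with $(n/(dk))\log n$ levels. To balance, we take $k \approx d$, which gives a time exponent of about $d + (n/d^2)\log^2 n$. The choice $d = n^{1/3}\log^{2/3} n$ then yields $2^{\mathcal{O}(n^{1/3}\log^{2/3} n)}$; for this, the $k$-dimensional search must cost $2^{\mathcal{O}(k)}\cdot\poly(n)$ in time but only $\mathcal{O}(k\log n)$-ish evaluations per level, so that the recursion multiplies only about $\log n$ factors.

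The main obstacle is the $k$-dimensional least-fixed-point search. Two things are needed. First, the halfspace cuts must stay correct relative to the least fixed point, not merely to some fixed point; here the unit-increment structure must be used to show that the least fixed point lies on the kept side. Second, approximate centerpoints must still guarantee constant-fraction progress over integer points. My first attempt would be to prove that if $\gmap(c) \le c$ fails in some coordinate, then the least fixed point lies in $\{x : \|x\|_1$-type inequality relative to $c\}$, via the monotone iteration from below.
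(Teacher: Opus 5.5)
Your architecture is the paper's: pivots within distance $d$, blocks of size $k$, nested searches, and cuts by $\ell_1$-half-spaces at centerpoints. But the two places where the real work lies are left open, and your plan for the first cannot succeed.

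\textbf{The cut is not safe.} Unit increments make the map only non-expansive in $\ell_1$, not a contraction. So a fixed point is merely at least as close to $F(\mathbf w)$ as to $\mathbf w$, and at \emph{ties} (guesses whose surplus is non-zero but sums to zero) it can lie on the removed side. For $F(w_1,w_2)=(w_2,w_1)$ on $[0,1]^2$ and $\mathbf w=(1,0)$, the wrong side $\halfspace{\mathbf w}{-\surplus(\mathbf w)}$ contains both fixed points $(0,0)$ and $(1,1)$. Hence no argument, via monotone iteration from below or otherwise, can show that the least fixed point lies on the kept side.

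The paper instead searches with a leaky map $F'$ on $[0,\lambda N]^k$, $\lambda=16(kN)^2$, which discards the arrivals of four of $\lambda$ interleaved copies of $F$. Then every fixed point of $F'$ rounds to $\mathbf q^*$. Moreover, a fixed point of $F'$ can be cut away only in a round whose guess $\mathbf y$ already has $\rd(\mathbf y)=\mathbf q^*$; that round records it, and the search outputs it when the candidates run out.

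You also need the cap $\min\{\gmap_B,N\mathbf{1}\}$, since the guessing map need not map any box to itself, together with the fact that the least fixed point of the capped map is consistent. Finally, along the nested recursion the token counts, and hence $N$, grow to $2^{O(n^2)}$, not $2^{O(n)}$.

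\textbf{Your accounting does not yield the theorem.} Your balance $k\approx d$ presupposes a block search running in time $2^{O(k)}\cdot\poly(n)$, and nothing you describe achieves that. A deep-point computation costing $N^{o(1)}$ per step gives no bound of the form $2^{O(n^{1/3}\log^{2/3}n)}$. The explicit candidate region after $r=\Theta(k^2\log N)$ cuts is a union of up to $(2r+1)^k$ boxes, and this bound is $2^{\Theta(k\log n)}$.

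The paper's deterministic $\frac{1}{2(k+1)}$-deep guesses proceed in four steps:
\begin{itemize}
\item compress the boxes into weighted representatives;
\item thin these with the Chazelle--Matou\v{s}ek $\varepsilon$-approximation algorithm, whose subsystem oracle rests on a Sauer--Shelah bound of $(12m^2)^k$ traces;
\item brute-force over the traces of the small sample;
\item round down.
\end{itemize}
This costs $(2k\log(N+2))^{O(k)}=2^{O(k\log n)}$.

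With this overhead, the logarithm of the run time is about $\frac{|S|}{k}\log n+d+k\log n$. With your pivot bound $|S|=O(\frac{n}{d}\log n)$, even the best choice of $d$ and $k$ gives only $2^{O(n^{1/3}\log n)}$. The stated bound needs two things together:
\begin{itemize}
\item the sharper pivot set of size at most $n/(d+1)$, namely a smallest class of non-terminals by distance to $T$ modulo $d+1$;
\item $k\approx d/\log n$.
\end{itemize}
Then $\frac{n\log n}{kd}$, $d$ and $k\log n$ are all of order $n^{1/3}\log^{2/3}n$.
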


Compared with the previous time bound 
$n^{\mathcal{O}(\sqrt{n})} = 2^{\mathcal{O}(\sqrt{n}\log n)}$ of \textcite{GartnerHH2021} and
\textcite{Haslebacher2025}, our bound reduces the exponent
from $\sqrt{n}\log n$ to $n^{1/3}\log^{2/3}n$, a factor of $n^{1/6}\log^{1/3}n$.

Our second contribution, which we use to prove \zcref{thm:main}, is an algorithm for the least fixed points of maps with unit increments.
A map has \emph{unit increments} if increasing one entry of its argument by one increases at most one entry of its value, by one, and leaves the others unchanged (\zcref{def:unit-increments}).
\begin{theorem}
\label{thm:lfp}
There is a deterministic algorithm that, given integers $k \ge 1$ and $N \ge 0$ and oracle access to a map $F \colon \{0, \ldots, N\}^k \to \{0, \ldots, N\}^k$ with unit increments, outputs the least fixed point of $F$ after $\mathcal{O}(k^3 \cdot \log(kN + 2))$ evaluations of $F$ and additional time $(2k \cdot \log(N + 2))^{\mathcal{O}(k)}$.
\end{theorem}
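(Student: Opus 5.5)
The plan is to exploit the structure that unit increments impose. First I would derive two consequences of Definition~\ref{def:unit-increments}. The first is \emph{monotonicity}: $x \le y$ implies $F(x) \le F(y)$. This follows by walking from $x$ to $y$ one unit step at a time, since each step raises at most one coordinate of $F$ by one. The second is an \emph{$\ell_1$-nonexpansiveness} property: $\|F(x)-F(y)\|_1 \le \|x-y\|_1$. The same walk gives this directly when $x \le y$, and for general pairs we route the walk through $x \wedge y$ or $x \vee y$.

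By Tarski's theorem, monotonicity guarantees that the least fixed point $x^*$ exists. Moreover $x^* \le z$ for every $z$ with $F(z) \le z$, and $x^* \ge z$ for every $z$ with $F(z) \ge z$. The algorithm therefore maintains a search region
\[
R = \{x : \ell \le x,\ x \le u' \ \forall u' \in U\},
\]
where $\ell$ is a point with $F(\ell)\ge \ell$ (initially $\0$) and $U$ is a set of points with $F(u)\le u$ (initially containing $N\cdot\1$). The invariant is $x^* \in R$.

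Next I would define a potential on the candidate set $C = \{x\in R : F(x)\ge x\}$, or simply on $R$. The key observation is that a query point $q$ whose coordinates are mixed (some $F(q)_i>q_i$ and others $<$) still yields a halfspace cut via unit increments. Let $s = \sum_i (F(q)_i - q_i)$. Consider the sign of the $\ell_1$ surplus, as suggested by the macro $\surplus$ defined in the preamble. Take any $x \le q$ that is a fixed point. By nonexpansiveness, $\sum_i (q_i - x_i) \ge \sum_i (F(q)_i - x_i)$. Hence if $s>0$, no fixed point lies below $q$, and more strongly every fixed point $x$ satisfies $\1^\top x \ge \1^\top q + s$ on the relevant side. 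More generally, I would aim to show that $\halfspace{\1_S}{\cdot}$-type cuts hold for the coordinate sets $S$ where $F(q)>q$. This uses the fact that $F(x)=x$ together with monotonicity under $x\vee q$ forces $x_S \ge$ something.

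Each query then discards all candidates in a closed $\ell_1$-type halfspace/orthant through $q$. The remaining argument is to choose $q$ as an (approximate) $\ell_1$-centerpoint of the current candidate set, which Haslebacher et al. guarantee exists. Every such cut then removes at least a $1/\mathcal{O}(k)$ (or $1/2^{\mathcal O(1)}k$) fraction of the candidates, giving $\mathcal O(k\cdot\log((N+1)^k)) = \mathcal O(k^2\log N)$ rounds. An extra factor $k$ arises because deterministically computing the centerpoint only approximately requires discretizing, or locating $x^*$ within an exact fixed point via $\mathcal O(k)$ follow-up queries. That yields the stated $\mathcal O(k^3\log(kN+2))$ evaluations.

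Computing approximate centerpoints of a set in $\{0,\dots,N\}^k$ cut out by $\mathcal O(k^2\log N)$ orthant-type constraints is done by brute force over a grid of candidate depths. This costs $(2k\log(N+2))^{\mathcal O(k)}$ time by counting points in the arrangement cells. To finish, once the region contains few candidates, or once $\ell$ becomes a fixed point, I would iterate $\ell \mapsto F(\ell)$ from the maximal certified lower point. Monotonicity makes the least fixed point the limit of this iteration, and the cuts bound its length.

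The main obstacle I expect is the cut lemma. A query at a point $q$ that is neither a pre- nor post-fixed point must be shown to eliminate a centerpoint-deep halfspace of candidates for the \emph{least} fixed point, rather than merely for some fixed point. I would first try applying nonexpansiveness to $x^*\wedge q$ and $x^*\vee q$, combined with the minimality of $x^*$.
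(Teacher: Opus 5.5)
Your outline has the right overall structure: unit increments give monotonicity and $\ell_1$-non-expansiveness, you cut with $\ell_1$-half-spaces through deep query points, and this takes $\mathcal{O}(k^2\log(kN+2))$ rounds. But the step you flag as the main obstacle cannot be proved, because the natural cut is not safe. The half-space that non-expansiveness lets you discard is the wrong side $\halfspace{\mathbf w}{-\surplus(\mathbf w)}$. A fixed point $\mathbf q$ can lie there only if $\|F(\mathbf w)-\mathbf q\|_1=\|\mathbf w-\mathbf q\|_1$, and this equality does occur at \emph{ties}, where the surplus sums to zero without vanishing. Take the swap map $F(w_1,w_2)=(w_2,w_1)$ on $[0,1]^2$, which has unit increments. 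The guess $\mathbf w=(1,0)$ has surplus $(-1,1)$, and by \zcref{lem:orthant} both fixed points $(0,0)$ and $(1,1)$ lie in $\halfspace{\mathbf w}{-\surplus(\mathbf w)}$. So no argument via $x^*\wedge q$, $x^*\vee q$ and minimality can work.

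Your safe fallback cuts (the orthant below $q$ if $s>0$, above $q$ if $s<0$) are not $\ell_1$-half-spaces. A centerpoint therefore does not guarantee that they remove a $\frac1{k+1}$ fraction, and at $s=0$ you have no cut at all.

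The missing idea is the paper's leaky map $F'$ on the refined box $[0,\lambda N]^k$ with $\lambda=16(kN)^2$. It spreads the tokens of $\mathbf y$ over $\lambda$ copies, applies $F$ to each copy, and discards the arrivals of the four copies in $\Lambda$, so among any $\lambda/4$ consecutive tokens in one coordinate, one is absorbed. With \zcref{lem:tie}, this forces $\rd(\mathbf y)=\mathbf q^*$ for every fixed point $\mathbf y$ of $F'$. It also means a fixed point of $F'$ can lie on the wrong side of a guess $\mathbf w$ only if $\rd(\mathbf w)=\mathbf q^*$ (\zcref{lem:sinks}). The search therefore records $\rd(\mathbf w)$ whenever it is a fixed point of $F$, and if the candidate set empties, it outputs the recorded fixed point with the smallest entry sum. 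This is also where the third factor of $k$ in the evaluation count comes from: each evaluation of $F'$ costs up to $k+1$ evaluations of $F$. It does not come from approximate centerpoints or follow-up queries.

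Several other steps also fail.
\begin{itemize}
\item Tarski only gives $x^*\le z$ when $F(z)\le z$. Your claim that $x^*\ge z$ when $F(z)\ge z$ is false: the identity map has unit increments, every point is a fixed point, and $x^*=\0$. So your region $R$, whose lower corner $\ell$ is any point with $F(\ell)\ge\ell$, need not contain $x^*$, and reaching a fixed $\ell$ certifies nothing about $x^*$.
\item Your finishing Kleene iteration can take up to $kN$ evaluations, and nothing in your argument bounds it by the evaluation budget.
\item The overhead bound does not follow from brute force over arrangement cells. The depth of a center depends on its position inside the cells, not just on the grid, and exact depth computation on $m$ points costs $m^{\mathcal O(k)}$. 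That is already $2^{\mathcal O(k^2\log(k+1))}$ for the $g^k$ representatives of a single box.
\end{itemize}
The paper reaches $(2k\log(N+2))^{\mathcal O(k)}$ in four steps. It compresses the boxes into weighted sub-box representatives with error $2k/g$ (\zcref{lem:compress}) and rounds the weights to multiplicities. It then thins these to a deterministic $\varepsilon$-approximation of $\mathcal O(k^3\log(k+1))$ points via Chazelle--Matou\v{s}ek (\zcref{prop:cm}), using the trace bound $(12m^2)^k$ to build the subsystem oracle. Only then does it brute-force an exact centerpoint on this small sample, rounding it down so that it stays $\frac1{2(k+1)}$-deep.
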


Maps with unit increments are monotone, and the known algorithms for fixed points of monotone maps~\cite{FearnleyPS2022,ChenL2022,ChenLY2026} find some fixed point of them, but their numbers of evaluations are powers of $\log N$ whose exponents grow linearly in $k$.

\subparagraph*{Paper outline.}
\zcref{sec:approach} explains our approach and techniques.
\zcref{sec:preliminaries} collects notation for digraphs, vectors, norms, $\ell_1$-half-spaces and fixed points.
\zcref{sec:multiarrival} defines \garrival and switching flows, and states two known bounds.
\zcref{sec:guessing} defines guessing maps, unit increments and the cap.
\zcref{sec:recursion} gives the block recursion and bounds its run time.
\zcref{sec:counting} searches for the least fixed point by cutting $\ell_1$-half-spaces, and \zcref{sec:deep} computes the guesses of this search deterministically, which proves \zcref{thm:lfp}.
\zcref{sec:proof} proves \zcref{thm:main}.

\section{Overview}
\label{sec:approach}
As is the case with the previous results, our algorithm relies on \emph{switching flows} (\zcref{def:switching-flow}) which, as shown by \textcite{DohrauGKMW2017}, are certificates for \arrival (\zcref{prop:switching_flows}).
The algorithm by \textcite[Section~3.2]{Haslebacher2025}, which proposes an alternative view on the original result by \textcite{GartnerHH2021}, starts by finding, in the given switch digraph $(G, \rho)$, a set $S$ of $s$ non-terminals such that every vertex of $G$ is at distance at most $d$ from $S \cup T$, where~$T$ denotes the set of terminals.
In other words, $S \cup T$ is a \emph{distance-$d$ dominating set}, for some choice of $s$ and $d$, and the algorithm ensures that $sd$ is at most (roughly) $n \cdot \log n$~\cite[Lemma~12]{GartnerHH2021}.
A better bound for $sd$ follows from a theorem of \fulltextcite[Theorem~1.5]{NguyenSS2024} on small $k$-kernels of digraphs with a spanning arborescence, which gives $sd$ at most about $n$ while also guaranteeing that $S$ is independent.
In \zcref{lem:pivots} we show a simplified version of their result achieving a similar bound.

The recursion of this algorithm~\cite[Algorithm~1]{Haslebacher2025} then defines $S$ to be the set of \emph{pivots}, that is, vertices which are recursively turned into terminals to generate new instances until a base case is reached.
In this recursion, as in ours, the base case is an instance where every vertex is at distance at most $d$ from a terminal, which is then solved in time $2^d \cdot \poly(n, \log(t^+ + 2))$ by \zcref{prop:base}.
In short, the recursion produces guesses on how many tokens will pass through each pivot, then routes the tokens in the new instance, recursively down to the base case, and verifies if the guess is correct.
This is the case if, at the end of the routing, each pivot received exactly as many tokens as it sent.
If this is not the case, the guess is adjusted, and a new routing is tested.

The recursion adjusts the guess of each pivot by a binary search~\cite[Section~3.1]{Haslebacher2025}, which takes polynomially many routings per pivot, since, for \arrival, the number of tokens stays below $2^{\mathcal{O}(n^2)}$ along the recursion.
In addition, the depth $s$ of the recursion needed to ensure that every vertex of $S$ is turned into a terminal implies that the total run time is $n^{\mathcal{O}(s)} \cdot 2^d$, up to polynomial factors.
Balancing the choices of $s$ and $d$, a total run time of $2^{\mathcal{O}(\sqrt n \cdot \log n)}$ is thus obtained.

The algorithm by \textcite{GartnerHH2021} works similarly, and also starts by finding a set $S$ of $s$ pivots such that $S \cup T$ is distance-$d$ dominating.
In this case, however, all pivots are guessed at once.
They show that, in any instance of \arrival on a switch digraph with~$n$ vertices, each vertex is visited at most $N = 2^n$ times by the token~\cite[Lemma~1]{GartnerHH2021}, and hence a guess can be understood as a vector $\mathbf{w} = (w_b)_{b \in S}$ with entries in $[0, N]$.
For each such~$\mathbf{w}$, the map $F$ then defines the numbers $F(\mathbf{w})$ of tokens arriving at the vertices of $S$ in the instance $\mathcal{I}(S \leftarrow \mathbf{w})$ of \garrival where each $b \in S$ starts with $w_b$ tokens.
Hence, in order to compute $F(\mathbf{w})$ from $\mathbf{w}$, a routing of the tokens is needed, again through the base case, and the goal is to find a guess $\mathbf{w}^*$ such that $F(\mathbf{w}^*) = \mathbf{w}^*$, that is, a \emph{fixed point} of $F$.
Since $F$ is monotone~\cite[Theorem~8]{GartnerHH2021}, Tarski fixed point algorithms~\cite{DangQY2020,FearnleyPS2022} can be applied in order to speed up the search, after capping the values of $F$ at $N$ so that it maps the box $[0, N]^s$ to itself; the fixed point that they find is then also a fixed point of $F$~\cite[Lemma~11]{GartnerHH2021}.
This strategy yielded the first subexponential algorithm for \arrival.

In our case, in addition to \zcref{lem:pivots} for the improved bound, we employ a similar strategy with one difference: instead of restricting the guesses to one pivot at a time, as done by \textcite{Haslebacher2025}, or guessing all $s$ pivots at once, as done by \mbox{\textcite{GartnerHH2021}}, we guess the pivots in blocks of size $k$.
This allows us to reduce the depth of the recursion by a factor of $k$ when compared to the first approach, and the search dimension from $s$ to $k$ when compared to the second.
On the other hand, this strategy requires a far more difficult search procedure that needs to operate fast in the higher dimension~$k$.
Compared to the first approach, the difficulty is the higher dimension, and compared to the second, it is that the searches are now nested over several levels of the recursion, so that their numbers of evaluations multiply.
Each search can then afford only a number of evaluations that is polynomial in~$k$ and $\log N$, whereas the number of evaluations of the known Tarski fixed point algorithms~\mbox{\cite{FearnleyPS2022,ChenL2022,ChenLY2026}} is a power of $\log N$ whose exponent grows linearly in $k$.
As in the analyses of both previous algorithms, proving \zcref{thm:main} is a balancing act between the parameters~$d,s$ and now $k$, amongst others.

\subparagraph*{Block solvers.}
The map $F$ of \textcite{GartnerHH2021} is what we call the \emph{guessing map}~$\gmap_S$ of the pivots $S$ (\zcref{def:guessing-map}), and its fixed points are exactly the correct guesses (\zcref{lem:consistent}).

In our recursion, we split $S$ into $M = \lceil s/k \rceil$ \emph{blocks} $B_1, \ldots, B_M$ of at most $k$ pivots each, and we guess one block at a time (\zcref{algo:blocks}).
For a block $B$, the guessing map $\gmap_B$ is defined in the same way, with only the pivots of $B$ turned into terminals.
As with $F$~\cite{GartnerHH2021}, we cap it at a number $N$ so that it maps the box $[0, N]^B$ to itself, and our search finds the least fixed point of the capped map (\zcref{lem:capped}).

The search for a block is done by a \emph{block solver} (\zcref{def:block-solver}), and every \emph{evaluation} of the guessing map of the block is a recursive call that solves the instance in which the pivots of the block have become terminals, with the remaining blocks still to be guessed.
Hence, the depth of the recursion is $M$ instead of $s$.
If a block solver makes at most $Q$ evaluations and spends time at most $I$ besides them, which we call its \emph{overhead}, then the total run time is, up to polynomial factors, at most
\begin{equation*}
	(Q + 1)^M \cdot (2^d + I)
\end{equation*}
(\zcref{lem:recursion}).
The first factor bounds the number of calls of the recursion, since each of the~$M$ levels multiplies it by at most $Q + 1$: the evaluations of the block solver, and one last call at the fixed point that it finds.
The second factor bounds the time of a single call besides its recursive calls, which is the base case at the last level and the overhead of the block solver at the other levels.
With one pivot per block, as in the recursion of \textcite{Haslebacher2025}, a binary search for the least fixed point gives a polynomial $Q$ and a small overhead, but the depth is~$s$.
For larger blocks, the known Tarski fixed point algorithms~\cite{DangQY2020,FearnleyPS2022}, used in this way, give no asymptotic gain~\cite[Section~3]{Haslebacher2025}.
Our block solver, on the other hand, makes $Q = \mathcal{O}(k^3 \cdot \log(kN + 2))$ evaluations with overhead $I = (2k \cdot \log(N + 2))^{\mathcal{O}(k)}$ (\zcref{thm:lfp}), which is fast enough to pay off.
The next steps are to bound the impact of each of the parameters involved: the distance $d$ through the base case and, since $s \le n/(d + 1)$ (\zcref{lem:pivots}), through the number $M$ of blocks, the block size $k$ through $M$ and through the costs $Q$ and $I$ of the block solver, and the cap $N$, which grows along the recursion, through~$Q$ and $I$.
See~\zcref{fig:parameters} for a guide on the dependence of the parameters.

\begin{figure}[ht]
	\centering
	\begin{tikzpicture}[
        xscale=1.1,
		quantity/.style={draw, rounded corners, inner sep=4pt, font=\footnotesize, align=center},
		parameter/.style={draw, circle, inner sep=2pt, minimum size=0.75cm, font=\footnotesize}]
		\node[quantity] (time) at (0, 3) {run time $(Q + 1)^M \cdot (2^d + I)$\\(\zcref{lem:recursion})};
		\node[quantity] (base) at (-5, 1.4) {base case $2^d$\\(\zcref{prop:base})};
		\node[quantity] (M) at (-2.3, 1.4) {$M = \lceil s/k\rceil$\\(\zcref{sec:recursion})};
		\node[quantity] (Q) at (0.9, 1.4) {$Q = \mathcal{O}(k^3 \cdot \log(kN + 2))$\\(\zcref{thm:lfp})};
		\node[quantity] (I) at (5.3, 1.4) {$I = (2k \cdot \log(N + 2))^{\mathcal{O}(k)}$\\(\zcref{thm:lfp})};
		\node[quantity] (s) at (-3.6, 0) {$s \le n/(d + 1)$\\(\zcref{lem:pivots})};
		\node[parameter] (d) at (-5, -1.4) {$d$};
		\node[parameter] (k) at (0, -1.4) {$k$};
		\node[parameter] (N) at (3.2, -1.4) {$N$};
		\node[font=\footnotesize, right, align=center] at (3.7, -1.4) {$\log N = \mathcal{O}(n^2)$\\(\zcref{lem:recursion})};
		\foreach \a/\b in {base/time, M/time, Q/time, I/time, d/base, d/s, s/M, k/M, k/Q, k/I, N/Q, N/I} {
			\draw[arrow] (\a) -- (\b);
		}
	\end{tikzpicture}
	\caption{How the parameters enter the run time of the recursion. An arrow from one quantity to another means that the second one depends on the first. The distance $d$ and the block size $k$ are chosen in \zcref{sec:proof}, and the cap $N$, which grows along the recursion, satisfies $\log N = \mathcal{O}(n^2)$ when $t^+ \le 2^n$.}
	\label{fig:parameters}
\end{figure}

\subparagraph*{The base case and the pivots.}
We start with the distance $d$, which enters the run time through the base case and, via the size $s$ of the set of pivots, through the number $M$ of blocks.
Since the base case takes time exponential in $d$, the distance~$d$ cannot be large, but keeping $d$ small requires many pivots.
Our set $S$ of pivots is simple to obtain (\zcref{lem:pivots}): the non-terminals fall into $d + 1$ classes according to their distance to the set $T$ of terminals modulo $d + 1$, and $S$ is a smallest class, so that $s \le n/(d + 1)$.
Every vertex at distance $h > d$ from $T$ is then within distance $d$ of $S$, since the first $d + 1$ vertices of a shortest path from it to $T$ are non-terminals at the consecutive distances $h, \ldots, h - d$ from $T$, and one of them lies in $S$.
Hence, the number of blocks is $M = \lceil s/k \rceil \le n/(k(d + 1)) + 1$, and a larger $d$ reduces the depth of the recursion at the price of a more expensive base case.

\subparagraph*{The cap.}
We now choose the cap $N$ for the block solver: for a block $B$, it searches guesses in $[0,N]^B$, so this box must contain a consistent guess.
Let $\tau$ be the number of tokens of the instance in which $B$ is guessed.
By a result of \textcite[Lemma~2]{Haslebacher2025}, every switching flow of this instance sends fewer than $2^{n + 1} \cdot \tau$ tokens out of every vertex (\zcref{prop:flow-bound}), and hence, with $N = 2^{n + 1} \cdot \tau$, the correct guess read off from a switching flow lies in $[0, N]^B$, where none of its entries reaches the cap.
This is why the least fixed point of the capped map, which lies below this guess, is correct (\zcref{lem:capped}).
The price of the cap is that $N$ grows with the number of tokens, which grows along the recursion, since every evaluation turns the pivots of a block into terminals with up to $N$ starting tokens each.
Each level multiplies the number of tokens by at most $1 + k \cdot 2^{n + 1} \le 2^{2n + 1}$, so, over the at most $n$ levels, $\log N = \mathcal{O}(n^2)$ in every instance of the recursion when $t^+ \le 2^n$ (\zcref{lem:recursion}).
This is where \zcref{thm:main} uses its bound on the number of tokens.
On the other hand, the search depends on $N$ only mildly: as we point out below, the number $Q = \mathcal{O}(k^3 \cdot \log(kN + 2))$ of evaluations satisfies $\log Q = \mathcal{O}(\log n)$, and the overhead $I = (2k \cdot \log(N + 2))^{\mathcal{O}(k)}$ satisfies $\log I = \mathcal{O}(k \cdot \log n)$, which is the term that limits the block size $k$.

\subparagraph*{Searching by cutting half-spaces.}
For a block $B$ of $k$ pivots, the block solver searches the box $[0, N]^B$ for the least fixed point of the capped guessing map $H \colon [0, N]^B \to [0, N]^B$, which is the correct guess for the pivots of $B$ (\zcref{lem:capped}).
The only property of $H$ that the search uses is that it has \emph{unit increments} (\zcref{def:unit-increments}).
In other words, when a pivot sends one more token, this token either arrives at a single pivot, whose entry of $H$ grows by one, or it is absorbed, at a terminal of $T$ or at a pivot whose entry is capped.
This follows from the monotonicity of the arrivals in the starting tokens~\cite[Lemma~4]{Haslebacher2025} (\zcref{prop:monotone,lem:unit,lem:capped}).
In particular, $H(\mathbf{x})$ and $H(\mathbf{y})$ differ by at most as many tokens as $\mathbf{x}$ and $\mathbf{y}$.
That is, $\|H(\mathbf{x}) - H(\mathbf{y})\|_1 \le \|\mathbf{x} - \mathbf{y}\|_1$ (\zcref{def:norm} and~\eqref{eq:non-expansive}).
The search generalises binary search to $k$ dimensions.
It keeps a set $C$ of candidate guesses, initially the whole box, and in every round it evaluates $H$ at a guess $\mathbf{w}$.
If the \emph{surplus} $\surplus(\mathbf{w}) \coloneqq H(\mathbf{w}) - \mathbf{w}$ is zero, then $\mathbf{w}$ is a fixed point.
Otherwise, every pivot should send more tokens if it receives more than it sends, and fewer if it receives fewer.
Instead of moving~$\mathbf{w}$ towards~$H(\mathbf{w})$, we remove from $C$ the guesses on the \emph{wrong side} of $\mathbf{w}$, namely those whose $\ell_1$-distance to $\mathbf{w}$ does not shrink when $\mathbf{w}$ moves slightly towards $H(\mathbf{w})$.
These guesses form the \emph{$\ell_1$-half-space} $\halfspace{\mathbf{w}}{-\surplus(\mathbf{w})}$ in the sense of \fulltextcite[Definition~3.2]{HaslebacherLSW2025} (\zcref{def:half-space}), and the goal is to ensure that every such removal discards a meaningful fraction of the candidates.
These authors search in the same way for approximate fixed points of $\ell_1$-contractions~\cite[Section~4]{HaslebacherLSW2025}, that is, of maps that shrink $\ell_1$-distances by a constant factor.
The connection of \garrival to this search is due to \textcite[Section~4]{Haslebacher2025}, who reduces \garrival to finding such fixed points and observes that this search is efficient in the number of evaluations, but not in time~\cite[Section~1]{Haslebacher2025}.
For a single pivot, this is exactly the usual binary search, since $H(w) > w$ rules out every guess at or below $w$, and $H(w) < w$ every guess at or above~$w$.
The removal is almost safe, since a fixed point~$\mathbf{q}$ is always at least as close to $H(\mathbf{w})$ as to~$\mathbf{w}$, because $\mathbf{q} = H(\mathbf{q})$ and $H$ does not increase $\ell_1$-distances~\eqref{eq:non-expansive}, whereas the removed guesses are at least as close to $\mathbf{w}$ as to $H(\mathbf{w})$.
So $\mathbf{q}$ is removed only if both distances are equal.

The problem is that equality does occur, at guesses $\mathbf{w}$ whose surpluses sum to zero but that are \emph{not} fixed points, which we call \emph{ties}.
At a tie, the pivots receive in total exactly as many tokens as they send, but not every pivot receives as many tokens as it sends, because the tokens by which $\mathbf{w}$ and $\mathbf{q}$ differ only circulate among the pivots and never reach a terminal or the cap (\zcref{lem:tie} and the remark after it).
A round at a tie may remove every fixed point, as the map that swaps the tokens of two pivots shows (\zcref{sec:half-spaces}).
We stop the circulation with a leak: the search runs on a \emph{leaky map}, in which a small fraction of the tokens of every pivot disappears (\zcref{sec:sinks}).
Circulating tokens are then absorbed after a while, so a fixed point of the leaky map can be removed only in a round whose guess already rounds to the least fixed point~$\mathbf{q}^*$ of $H$, which the round records, and the search returns $\mathbf{q}^*$ (\zcref{lem:sinks,lem:cutting}).
The price paid is small: the search runs on the finer box $[0, \lambda N]^k$ with $\lambda = 16(kN)^2$, whose side still has a logarithm of $\mathcal{O}(\log(kN + 2))$, and every evaluation of the leaky map costs at most $k + 1$ evaluations of $H$ (\zcref{lem:sinks}).

The choice of the guesses has a direct impact on the run time of the algorithm.
We say that a guess $\mathbf{w}$ is \emph{$\alpha$-deep} in $C$ if every $\ell_1$-half-space through $\mathbf{w}$ contains at least an $\alpha$ fraction of $C$.
Then, independently of the surplus at $\mathbf{w}$, the round removes at least $\alpha|C|$ candidates when a fixed point is not found, and the \emph{$\ell_1$-centerpoint theorem} of \textcite[Theorem~3.18]{HaslebacherLSW2025} states that every finite set of points has a $\frac1{k + 1}$-deep point (\zcref{thm:hlsw}).
For a set of integer guesses, such as the candidates, rounding this point down keeps this property (\zcref{lem:rounding}).
Half of this depth suffices, and with $\frac1{2(k + 1)}$-deep guesses we obtain the count of evaluations promised in the part on the cap.
Since $C$ starts with at most $(\lambda N + 1)^k$ candidates, and every round removes at least a $\frac1{2(k + 1)}$ fraction of them, the search stops after $\mathcal{O}(k(k + 1) \cdot \log(\lambda N + 1)) = \mathcal{O}(k^2 \cdot \log(kN + 2))$ rounds.
With at most $k + 1$ evaluations of~$H$ per round, it makes $Q = \mathcal{O}(k^3 \cdot \log(kN + 2))$ evaluations (\zcref{lem:cutting,thm:lfp}).
It remains to find such guesses, which we discuss next.

\subparagraph*{Deep guesses.}
We now come to the heart of our algorithm and the main technical contribution of this paper: computing, in every round of the search, a $\frac1{2(k + 1)}$-deep guess, deterministically (\zcref{sec:deep}).
In one dimension, a $\frac12$-deep guess is a median of the candidates, and deep guesses play the role of medians in $k$ dimensions.
To compute them, we follow a simple scheme that is known for the usual half-spaces~\cite{Matousek1991center}, \mbox{\cite[Lemma~47.3.1]{MustafaV2017}}.
To find a point in the middle of a large set, that is, a point through which every half-space contains a large fraction of the set, one takes a small sample of the set that represents it, in the sense that every half-space contains about the same fraction of the sample as of the set.
Then a point in the middle of the sample is found by brute force, which is affordable since the sample is small, and this point is also in the middle of the set, since every half-space sees the sample and the set in about the same proportions.
A large enough uniformly random sample, whose size does not depend on the size of the set, represents the set with high probability~\cite[Theorem~47.2.2]{MustafaV2017}, and \textcite{ChazelleM1996} compute such a sample deterministically.

In our setting, the set is the set $C$ of candidates, the half-spaces are the $\ell_1$-half-spaces, and a point in the middle is a deep point, which we round down to a guess.
Our version of these steps rests on one property of $\ell_1$-half-spaces: whether a point lies in an $\ell_1$-half-space through $\mathbf{c}$ depends only on which of its coordinates are below, equal to or above those of $\mathbf{c}$ (\zcref{lem:orthant}).
\zcref{sec:centerpoints} shows that deep points exist, by the $\ell_1$-centerpoint theorem, and that rounding them down keeps them deep (\zcref{thm:hlsw,lem:rounding}).
\zcref{sec:exact} is the brute force: the $\ell_1$-half-spaces cut only $(12m^2)^k$ different subsets, their \emph{traces}, out of any $m$ points, and listing the traces gives an exact deepest point of few weighted points (\zcref{lem:traces,lem:exact-centerpoint}).
\zcref{sec:compression} gives the two steps that compute the sample.
After $r$ rounds, the candidates form a union of at most $(2r + 1)^k$ boxes (\zcref{lem:cutting}), but these few boxes contain far too many points for the deterministic sampling algorithm~\cite{ChazelleM1996}, whose time is linear in the number of points.
So we first cut every box into few sub-boxes and represent every sub-box by one point, weighted by its size, which changes the fraction of every $\ell_1$-half-space only slightly (\zcref{lem:compress}).
Then this algorithm~\cite[Theorem~2.3]{ChazelleM1996}, which needs the traces to be few and to be listable, thins these weighted points out to $\mathcal{O}(k^3 \cdot \log(k + 1))$ points that still represent the candidates (\zcref{prop:cm}).
\zcref{sec:solver} combines these steps (\zcref{lem:deep-point}).

The result is a $\frac1{2(k + 1)}$-deep guess for the candidates, computed in time \mbox{$(2k \cdot \log(N + 2))^{\mathcal{O}(k)}$} per round.
Computing it is where the overhead $I$ of the block solver comes from, and since the number of rounds is polynomial in $k$ and~$\log N$, the overhead of the search is of the same order, $I = (2k \cdot \log(N + 2))^{\mathcal{O}(k)}$ (\zcref{thm:lfp}).
In short, the block solver pays for its few evaluations with a large overhead, but only once per call, and the next part balances this overhead against the depth of the recursion.

\subparagraph*{The balancing act.}
We can now put the pieces together, following \zcref{fig:parameters}.
By \zcref{lem:recursion}, the run time of the recursion with block solvers is at most $(Q + 1)^M \cdot (2^d + I)$ up to polynomial factors, where $M$ is the number of blocks, that is, the depth of the recursion,~$Q$ is the number of evaluations that the block solver makes for one block, $I$ is its overhead, and~$2^d$ is the time of the base case, for the distance $d$ of every vertex to the nearest terminal or pivot.
Hence, the logarithm of the run time is at most $M \cdot \log(Q + 1) + d + \log I$ up to an additive $\mathcal{O}(\log n)$ term.
The previous parts bound each of these terms: the number of blocks satisfies $M \le n/(k(d + 1)) + 1$, while $\log(Q + 1) = \mathcal{O}(\log n)$ and $\log I = \mathcal{O}(k \cdot \log n)$, since $\log N = \mathcal{O}(n^2)$ when $t^+ \le 2^n$.
Hence, up to a constant factor and an additive $\mathcal{O}(\log n)$ term, the logarithm of the run time is
\begin{equation*}
	\frac{n \cdot \log n}{kd} + d + k \cdot \log n\enspace.
\end{equation*}
The first term, which counts the levels of the recursion, asks for large blocks and a large distance $d$, the second, which comes from the base case, asks for a small $d$, and the third, which comes from the overhead of the block solver, asks for small blocks.
The three terms balance for $d = n^{1/3} \cdot \log^{2/3} n$ and $k = d/\log n$, where each of them is of the order $n^{1/3} \cdot \log^{2/3} n$, and this proves \zcref{thm:main} (\zcref{sec:proof}).
\section{Preliminaries}
\label{sec:preliminaries}

\subparagraph*{Digraphs.}
For basic background on graph theory we refer the reader to \textcite{BondyM2008}.
For a digraph $G$, let $V(G)$ and $E(G)$ denote its sets of vertices and arcs, respectively.
For an arc $e = (u, v) \in E(G)$, we call $u$ its \emph{tail} and $v$ its \emph{head}, and we say that $e$ is an \emph{out-arc} of $u$.
We treat $E(G)$ as a multiset, so that two arcs may have the same tail and the same head.
Unless stated otherwise, digraphs have no loops, and $n$ denotes the number of vertices of the input digraph.
All paths in this paper are directed.
For $u, v \in V(G)$, the \emph{distance from $u$ to~$v$}, denoted $\dist(u, v)$, is the minimum length of a path from~$u$ to $v$, and $\infty$ if there is none.
For $X \subseteq V(G)$, the \emph{distance from $u$ to $X$} is $\dist(u, X) \coloneqq \min_{v \in X}\dist(u, v)$.
A vertex~$u$ is \emph{within distance~$d$ of} $X$ if $\dist(u, X) \le d$, and $X$ is \emph{distance-$d$ dominating} if every vertex of~$G$ is within distance $d$ of $X$.

\subparagraph*{Vectors.}
We denote by $\N$ and $\N_0$ the sets of positive and of non-negative integers, respectively.
For integers $a$ and $b$, we denote by $[a, b]$ the set $\{a, a + 1, \ldots, b\}$ of the integers from~$a$ to $b$, which is empty if $b < a$.
For $s \in \N_0$, we write $[s] \coloneqq [1, s]$.
We write vectors in bold and their entries in plain type, as in $\mathbf{w} = (w_b)_{b \in B}$.
The entries are indexed by a finite set, such as $[k]$ or a set $B$ of vertices, which will be clear from the context.
Inequalities between vectors hold entrywise.
We denote by $\0$ and $\1$ the vectors with only zeros and only ones, respectively.
For an index $i$, the $i^{\textnormal{th}}$ \emph{unit vector}~$\e_i$ has the $i^{\textnormal{th}}$ coordinate one and all other entries zero.
For a vector $\mathbf{z}$ of real numbers,~$\lfloor\mathbf{z}\rfloor$ is obtained by rounding every entry down to an integer, and $\sgn(\mathbf{z})$ is the vector of the signs $-1$, $0$ or $1$ of its entries.
A \emph{box} is a product $\prod_i[a_i, b_i]$ of sets of consecutive integers, one for every index $i$, which we call its \emph{sides}.
The \emph{length} of the side $[a_i, b_i]$ is $b_i - a_i + 1$.
For a finite non-empty set $X \subset \mathbb{R}^k$, the \emph{bounding box} $\bbox(X)$ is the smallest set of the form $\{\mathbf{z} \in \mathbb{R}^k : \mathbf{a} \le \mathbf{z} \le \mathbf{b}\}$, with $\mathbf{a}, \mathbf{b} \in \mathbb{R}^k$, that contains~$X$.

\subparagraph*{Norms and distances.}
For the remainder of this section, we assume that vectors have $k$ real entries.
We denote by $\mathbf{x} \cdot \mathbf{y}$ the \emph{dot product} $\sum_{i \in [k]}x_i y_i$  of $\mathbf{x}$ and $\mathbf{y}$.

\begin{definition}[$\ell_1$-norm and $\ell_1$-distance]
\label{def:norm}
The \emph{$\ell_1$-norm}, or \emph{Manhattan norm}, of a vector $\mathbf{x}$ is $\|\mathbf{x}\|_1 \coloneqq \sum_{i \in [k]}|x_i|$, and the \emph{$\ell_1$-distance}, or \emph{Manhattan distance}, between two vectors $\mathbf{x}$ and $\mathbf{y}$ is $\|\mathbf{x} - \mathbf{y}\|_1$.
\end{definition}

In our context, the vectors are guesses of the numbers of tokens that some vertices send, and the $\ell_1$-distance between two guesses is the total number of tokens by which they differ, summed over these vertices.
The $\ell_1$-distance satisfies the triangle inequality $\|\mathbf{x} - \mathbf{z}\|_1 \le \|\mathbf{x} - \mathbf{y}\|_1 + \|\mathbf{y} - \mathbf{z}\|_1$, since $|x_i - z_i| \le |x_i - y_i| + |y_i - z_i|$ for every $i$.
We also use the \emph{Euclidean norm} $\|\mathbf{x}\|_2 \coloneqq \sqrt{\mathbf{x} \cdot \mathbf{x}}$, but only to compare $\ell_1$-half-spaces with the usual half-spaces and to quote the $\ell_1$-centerpoint theorem~\cite{HaslebacherLSW2025} in \zcref{sec:centerpoints}.

\subparagraph*{Half-spaces.}
The usual closed half-space through a point $\mathbf{c}$ in a direction $\mathbf{v} \ne \0$ is the set $\{\mathbf{z} : \mathbf{v} \cdot (\mathbf{z} - \mathbf{c}) \ge 0\}$ of the points ahead of $\mathbf{c}$ in direction $\mathbf{v}$.
A point~$\mathbf{z}$ lies in it if and only if $\mathbf{z}$ does not come closer to $\mathbf{c}$, in the Euclidean distance, when~$\mathbf{c}$ is moved slightly backwards, to $\mathbf{c} - \varepsilon\mathbf{v}$ for a small $\varepsilon > 0$.
This follows from $\|\mathbf{c} - \varepsilon\mathbf{v} - \mathbf{z}\|_2^2 = \|\mathbf{c} - \mathbf{z}\|_2^2 + 2\varepsilon\,\mathbf{v} \cdot (\mathbf{z} - \mathbf{c}) + \varepsilon^2\|\mathbf{v}\|_2^2$.
\textcite[Definition~3.2]{HaslebacherLSW2025} use this description with other distances in place of the Euclidean distance, and we use it with the $\ell_1$-distance.

\begin{definition}[$\ell_1$-half-spaces~{\cite[Definition~3.2]{HaslebacherLSW2025}}]
\label{def:half-space}
  For a point $\mathbf{c} \in \mathbb{R}^k$ and a direction $\mathbf{v} \in \mathbb{R}^k \setminus \{\0\}$, the \emph{$\ell_1$-half-space through $\mathbf{c}$ in direction $\mathbf{v}$} is
  \begin{equation*}
	\halfspace{\mathbf{c}}{\mathbf{v}} \coloneqq \{\mathbf{z} \in \mathbb{R}^k : \|\mathbf{c} - \mathbf{z}\|_1 \le \|\mathbf{c} - \varepsilon\mathbf{v} - \mathbf{z}\|_1\text{ for all }\varepsilon > 0\}\enspace.
  \end{equation*}
\end{definition}

The set $\halfspace{\mathbf{c}}{\mathbf{v}}$ does not change when $\mathbf{v}$ is multiplied by a positive number.
Unlike a usual half-space, it is in general not bounded by a single hyperplane (\zcref{lem:orthant} and \zcref{fig:half-space}).

\subparagraph*{Fixed points.}
A \emph{fixed point} of a map $F$ is an element $\mathbf{x}$ of its domain with $F(\mathbf{x}) = \mathbf{x}$.
A map $F$ defined on a set $X$ of vectors is \emph{monotone} if $F(\mathbf{x}) \le F(\mathbf{y})$ for all $\mathbf{x}, \mathbf{y} \in X$ with $\mathbf{x} \le \mathbf{y}$.
Our searches are for fixed points of monotone maps $F \colon [0, N]^k \to [0, N]^k$.
By Tarski's fixed point theorem~\cite[Theorem~1]{Tarski1955}, such a map has a least fixed point.
We use the following form of this fact, in which the least fixed point is found by iterating the map from the corner $\0$.

\begin{lemma}
\label{lem:kleene}
  Let $N \in \N_0$, and let $F \colon [0, N]^k \to [0, N]^k$ be monotone.
  Then the sequence $\0, F(\0), F(F(\0)), \ldots$ is non-decreasing, it becomes stationary after at most~$kN$ steps, and its final value $\mathbf{q}^*$ is a fixed point of $F$ with $\mathbf{q}^* \le \mathbf{q}$ for each fixed point~$\mathbf{q}$ of~$F$.
\end{lemma}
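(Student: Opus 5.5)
We write $\mathbf{x}_0 \coloneqq \0$ and $\mathbf{x}_{j+1} \coloneqq F(\mathbf{x}_j)$, and prove the three claims of \zcref{lem:kleene} in order, each by a short induction using only monotonicity and the fact that $F$ maps the box $[0, N]^k$ to itself.

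\emph{Monotonicity of the sequence.} We show $\mathbf{x}_j \le \mathbf{x}_{j+1}$ by induction on $j$. The base case $\mathbf{x}_0 = \0 \le F(\0) = \mathbf{x}_1$ holds because $F(\0) \in [0, N]^k$ is entrywise non-negative. For the step, if $\mathbf{x}_j \le \mathbf{x}_{j+1}$, then monotonicity of $F$ gives $\mathbf{x}_{j+1} = F(\mathbf{x}_j) \le F(\mathbf{x}_{j+1}) = \mathbf{x}_{j+2}$.

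\emph{Stationarity within $kN$ steps.} Consider the potential $\Phi(\mathbf{x}) \coloneqq \sum_{i \in [k]} x_i$, which takes integer values in $[0, kN]$ on the box. Suppose $\mathbf{x}_j = \mathbf{x}_{j+1}$ for some $j$. Then $\mathbf{x}_{j+1} = F(\mathbf{x}_j) = F(\mathbf{x}_{j+1}) = \mathbf{x}_{j+2}$, and by induction the sequence is constant from index $j$ on. Otherwise, as long as the sequence has not become stationary, $\mathbf{x}_j \le \mathbf{x}_{j+1}$ with $\mathbf{x}_j \ne \mathbf{x}_{j+1}$. Since the entries are integers, this gives $\Phi(\mathbf{x}_{j+1}) \ge \Phi(\mathbf{x}_j) + 1$. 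As $\Phi$ ranges over $[0, kN]$, at most $kN$ strict increases are possible, so $\mathbf{x}_j = \mathbf{x}_{j+1}$ for some $j \le kN$.

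\emph{The final value is the least fixed point.} Let $\mathbf{q}^*$ be the final value. Then $\mathbf{q}^* = \mathbf{x}_j = \mathbf{x}_{j+1} = F(\mathbf{q}^*)$, so $\mathbf{q}^*$ is a fixed point. Now let $\mathbf{q}$ be any fixed point of $F$. We show $\mathbf{x}_j \le \mathbf{q}$ for all $j$ by induction: $\mathbf{x}_0 = \0 \le \mathbf{q}$ because $\mathbf{q} \in [0, N]^k$, and if $\mathbf{x}_j \le \mathbf{q}$, then monotonicity gives $\mathbf{x}_{j+1} = F(\mathbf{x}_j) \le F(\mathbf{q}) = \mathbf{q}$. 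Taking $j$ large yields $\mathbf{q}^* \le \mathbf{q}$.

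There is no real obstacle here. The only point needing care is the counting in the second step: integrality of the entries turns each strict step of the sequence into an increase of at least one in the coordinate sum, which is what bounds the number of steps by $kN$ rather than something larger.
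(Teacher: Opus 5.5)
Your proof is correct and follows essentially the same route as the paper's: induction for monotonicity of the iterates, the coordinate sum as an integer potential bounded by $kN$ to get stationarity, and induction below any fixed point $\mathbf{q}$ to get leastness. Your explicit remark that one equality $\mathbf{x}_j = \mathbf{x}_{j+1}$ forces the sequence to be constant from then on makes precise a step the paper leaves implicit.
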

\begin{proof}
  Let $\mathbf{u}_0 \coloneqq \0$ and $\mathbf{u}_{i + 1} \coloneqq F(\mathbf{u}_i)$ for $i \ge 0$.
  Then $\mathbf{u}_0 \le \mathbf{u}_1$, since $\mathbf{u}_1 \in [0, N]^k$.
  If $\mathbf{u}_i \le \mathbf{u}_{i + 1}$, then $\mathbf{u}_{i + 1} = F(\mathbf{u}_i) \le F(\mathbf{u}_{i + 1}) = \mathbf{u}_{i + 2}$ by monotonicity.
  So the sequence is non-decreasing.
  Every step that changes the vector increases the sum of its entries by at least one, and this sum is at most $kN$ in $[0, N]^k$.
  So the sequence becomes stationary after at most $kN$ steps, at a vector $\mathbf{q}^*$ with $F(\mathbf{q}^*) = \mathbf{q}^*$.
  For every fixed point $\mathbf{q}$ of $F$, we have $\mathbf{u}_0 = \0 \le \mathbf{q}$, and $\mathbf{u}_i \le \mathbf{q}$ gives $\mathbf{u}_{i + 1} = F(\mathbf{u}_i) \le F(\mathbf{q}) = \mathbf{q}$.
  So $\mathbf{q}^* \le \mathbf{q}$.
\end{proof}

We call $\mathbf{q}^*$ the \emph{least fixed point} of $F$.

\section{The \garrival problem}
\label{sec:multiarrival}

\label{sec:instances}
A \emph{switch digraph} is a digraph $G$ in which every vertex has at most two out-arcs, together with a labelling $\rho$.
At every vertex with two out-arcs, $\rho$ marks one of them as the \emph{first} and the other one as the \emph{second} out-arc.
In figures, the first out-arc of a vertex is black (solid) and the second one is red (dashed).

A \garrival instance~\cite{Hoang2022,Haslebacher2025} is a tuple $\mathcal{I} = (G, \rho, T, (t^+_v)_{v \in T})$, where $(G, \rho)$ is a switch digraph, $T \subseteq V(G)$ is the non-empty set of \emph{terminals}, and $t^+_v \in \N_0$ is the number of tokens that start at the terminal $v$.
The total number of tokens is $t^+ \coloneqq \sum_{v \in T}t^+_v$.
We assume throughout that every non-terminal reaches a terminal, since otherwise tokens might circulate forever.
We also assume that every terminal $v$ with $t^+_v \ge 1$ has an out-arc, since otherwise it cannot send its tokens, and that $t^+ \ge 1$.
In particular, $n \ge 2$, since a terminal with tokens has an out-arc, which is not a loop.

The tokens move as follows, which we call the \emph{routing} procedure.
Initially, every terminal~$v$ with $t^+_v \ge 1$ sends its tokens.
If $v$ has two out-arcs, it sends $\lceil t^+_v/2\rceil$ tokens along its first and $\lfloor t^+_v/2\rfloor$ tokens along its second out-arc.
If $v$ has one out-arc, it sends all its tokens along it.
After this, terminals send no tokens, and every token that reaches a terminal stays there.
This convention follows \textcite{Haslebacher2025}.
While some non-terminal $v$ holds a token, one token at $v$ is moved along the out-arc that $v$ has used the fewest times so far, and along the first out-arc if both have been used equally often.
So a non-terminal with two out-arcs sends its tokens along them alternately, starting with the first one.
The order in which the tokens at different non-terminals are moved is free.

\begin{definition}[Runs and run profiles]
\label{def:run}
A \emph{run} of the routing procedure is a sequence of moves that follows these rules and that continues as long as some non-terminal holds a token, and it \emph{ends} once all tokens have reached terminals.
The \emph{run profile} of a run that ends records, for every arc, how many tokens traverse it.
\end{definition}

The task of \garrival is to determine, for each terminal $v \in T$, the number~$t^-_v$ of tokens that arrive at $v$.
Every run ends, whatever the order in which the tokens are moved, and all runs produce the same numbers $t^-_v$ (\zcref{prop:switching_flows} below).
The known proofs of \zcref{prop:switching_flows}~\cite{DohrauGKMW2017,GartnerHH2021} use switching flows, which we now define.
For a function $x \colon E(G) \to \N_0$ and a vertex $v$, we denote by $x^+(v)$ the sum of $x(e)$ over the out-arcs $e$ of $v$, and by $x^-(v)$ the sum of $x(e)$ over the arcs $e$ with head $v$.
\begin{definition}[Switching flows~{\cite{DohrauGKMW2017,GartnerHH2021}}]
\label{def:switching-flow}
A \emph{switching flow} of the instance $\mathcal{I}$ is a function $x \colon E(G) \to \N_0$ such that
\begin{align}
	x(e_1) = \lceil x^+(v)/2\rceil \text{ and } x(e_2) = \lfloor x^+(v)/2\rfloor & \qquad\text{for $v$ with out-arcs $e_1, e_2$} \label{eq:rotor}\\
	x^+(v) = x^-(v) & \qquad\text{for } v \in V(G) \setminus T, \label{eq:conservation}\\
	x^+(v) = t^+_v & \qquad\text{for } v \in T\enspace, \label{eq:terminals}
\end{align}
where $e_1$ and $e_2$ denote the first and the second out-arc of $v$.
\end{definition}

A switching flow need not be a run profile (\zcref{fig:fixed-point-example}).
\textcite{DohrauGKMW2017} proved \zcref{prop:switching_flows} for \arrival.
\textcite{GartnerHH2021} generalised it to \arrival instances in which some vertices become terminals with given numbers of tokens, and they also prove that every run ends~\cite[Lemma~6]{GartnerHH2021}.
The result is also stated for \garrival~\cite[Theorem~1]{Haslebacher2025}.

\begin{proposition}[\cite{DohrauGKMW2017}, {\cite[Lemma~6 and Theorem~8]{GartnerHH2021}}, {\cite[Theorem~1]{Haslebacher2025}}]
\label{prop:switching_flows}
  For every \garrival instance $\mathcal{I} = (G, \rho, T, (t^+_v)_{v \in T})$, every run of the routing procedure ends, and its run profile is a switching flow of $\mathcal{I}$.
  The number $t^-_v$ of tokens that arrive at a terminal $v \in T$ is the same in every run, and every switching flow $x$ of $\mathcal{I}$ satisfies $x^-(v) = t^-_v$ for every $v \in T$.
\end{proposition}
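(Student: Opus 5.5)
The statement to prove is \zcref{prop:switching_flows}, and I would prove it in three parts: termination of every run, the run profile being a switching flow, and the fact that any switching flow determines the arrivals $x^-(v)$ at terminals.

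\emph{Termination.} I would argue by contradiction. Suppose some run does not end. There are finitely many tokens, so some non-terminal $v$ sends tokens infinitely often. Alternation then forces both out-arcs of $v$ to be used infinitely often. Hence every out-neighbour of $v$ receives infinitely many tokens. A terminal absorbs what it receives and never sends it on, so an out-neighbour that is a terminal would accumulate infinitely many tokens. That is impossible with only $t^+$ tokens, so every out-neighbour of $v$ is a non-terminal that also sends infinitely often. Propagating this along a path from $v$ to a terminal, which exists by assumption, gives a contradiction.

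\emph{Run profile is a switching flow.} Let $x$ be the profile of a run that ends. Condition~\eqref{eq:rotor} holds because a vertex uses its two out-arcs alternately, starting with the first. Condition~\eqref{eq:conservation} holds because a non-terminal that ends with no token has sent exactly what it received. Condition~\eqref{eq:terminals} is the initial sending rule at terminals. It follows that $x^-(v)=t^-_v$ for this run.

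\emph{Any switching flow gives the same arrivals.} This is the main obstacle. My plan is to compare an arbitrary switching flow $x$ with the profile $r$ of a fixed run and show $r\le x$ arcwise. I would induct over the moves of the run: as long as the number of traversals of each arc so far is at most $x(e)$, the next move from a non-terminal $v$ is its $(j+1)$-st departure. This departure requires that $v$ has received more than $j$ tokens so far. By induction, that is at most $x^-(v)=x^+(v)$, so $j<x^+(v)$. The rotor rule~\eqref{eq:rotor} then shows that the arc used by the $(j+1)$-st departure still has $r(e)\le x(e)$.

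Consequently $x-r\ge 0$ is itself a nonnegative integer flow. It is conserved at non-terminals, has zero outflow at terminals (both satisfy~\eqref{eq:terminals}), and so it must put nothing into terminals; otherwise the total inflow into terminals would exceed the total outflow, which is zero. The last step needs care: the difference sends $x(e_1)-r(e_1)$ and $x(e_2)-r(e_2)$ out of each vertex, which are not independent. The argument only uses flow conservation of $x-r$ plus a total-mass count at terminals, so this should be fine. It gives $x^-(v)=r^-(v)=t^-_v$ for every terminal $v$.

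Finally, applying this to the profile of any other run shows that all runs produce the same numbers $t^-_v$.
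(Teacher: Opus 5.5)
Your proof is correct. It is essentially the standard argument of the works cited with the statement; the paper gives no proof of its own and relies on them. The argument is termination by propagating infinitely many departures along a path to a terminal, then arcwise minimality of the run profile $r$ among switching flows, after which $x - r$ is a nonnegative flow conserved at non-terminals with no outflow at terminals, so it carries nothing into $T$. The only step you leave implicit is the base of your induction: the initial sends of the terminals already agree with $x$ on their out-arcs, by \eqref{eq:terminals} and \eqref{eq:rotor}.
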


We use Theorem~1 and Lemmas~2, 4 and~6 of \textcite{Haslebacher2025} as \zcref{prop:switching_flows}, \zcref{prop:flow-bound}, \zcref{prop:monotone} and \zcref{prop:base}.
They are stated there for switch digraphs in which every vertex has exactly two out-arcs.
Our instances can be brought to this form without changing the routing.
Every vertex with one out-arc $(v, u)$ gets a second arc parallel to it, so that $v$ sends $\lceil m/2\rceil$ and $\lfloor m/2\rfloor$ of its $m$ tokens along the two copies.
Every vertex without out-arcs, which is a terminal without tokens, gets two loops, which carry no tokens.
So the cited results apply to our instances.

An \arrival instance (\zcref{sec:introduction}) is a switch digraph $(G, \rho)$, possibly with loops, with two terminals $t$ and $t'$, which we call \emph{destinations}, and an origin $o$.
The initially active out-arc of every vertex is the first one.
Every vertex other than $t$ and $t'$ has two out-arcs and reaches a destination.
A single token starts at $o$ and moves along the out-arcs as in the routing procedure, and the question is which destination it reaches.
If a vertex $v \notin \{t, t'\}$ has a loop and another out-arc $(v, u)$, then $v$ uses its two out-arcs alternately, so the token leaves $v$ along $(v, u)$ at every visit, possibly after one traversal of the loop.
We therefore replace the loop by a second arc from~$v$ to~$u$.
No vertex outside $\{t, t'\}$ has two loops, since such a vertex would reach no destination.
The out-arcs of $t$ and $t'$, which may be loops, are irrelevant, so we delete them.
If~$o$ is a destination, the answer is $o$.
Otherwise, we add a new terminal $s$ with one out-arc $(s, o)$ and $t^+_s \coloneqq 1$, and we make $t$ and $t'$ terminals with no tokens.
This is a \garrival instance on $n + 1$ vertices, with out-degree at most two, that satisfies our assumptions.
The destination at which the token arrives is the one with $t^-_v = 1$.

The next statement, due to Haslebacher, bounds the values of any switching flow.

\begin{proposition}[{\cite[Lemma~2]{Haslebacher2025}}]
\label{prop:flow-bound}
  Every switching flow $x$ of a \garrival instance on $n$ vertices satisfies $x^+(v) < 2^{n + 1} \cdot t^+$ for every vertex $v$.
\end{proposition}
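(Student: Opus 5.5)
The plan is a direct argument along a shortest path to a terminal. It combines the halving rule~\eqref{eq:rotor} with the fact that the terminals receive in total exactly $t^+$ tokens.

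First, I would show that $\sum_{v \in T} x^-(v) = t^+$. Every arc has one tail and one head, so summing over all vertices gives $\sum_{v \in V(G)} x^+(v) = \sum_{v \in V(G)} x^-(v)$. By~\eqref{eq:conservation}, the non-terminal terms cancel, which leaves $\sum_{v \in T} x^-(v) = \sum_{v \in T} x^+(v) = t^+$ by~\eqref{eq:terminals}. In particular, every terminal $u$ satisfies $x^-(u) \le t^+$. The claim is immediate for a terminal $v$, since $x^+(v) = t^+_v \le t^+ < 2^{n+1} t^+$.

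Now let $v$ be a non-terminal. By assumption it reaches a terminal, so I would fix a shortest path $v = v_0, v_1, \ldots, v_h$ with $v_h \in T$. Then $v_0, \ldots, v_{h-1}$ are non-terminals and $h \le n - 1$. Write $a_i \coloneqq x^+(v_i)$ for $i < h$ and $a_h \coloneqq x^-(v_h)$. The arc from $v_i$ to $v_{i+1}$ carries $a_i$ tokens if it is the only out-arc of $v_i$, and at least $\lfloor a_i/2 \rfloor \ge (a_i - 1)/2$ tokens otherwise, by~\eqref{eq:rotor}. Hence $x^-(v_{i+1}) \ge (a_i - 1)/2$. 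For $i + 1 < h$, conservation~\eqref{eq:conservation} gives $a_{i+1} = x^-(v_{i+1})$, so in all cases $a_i + 1 \le 2(a_{i+1} + 1)$. Iterating this inequality gives
\begin{equation*}
  x^+(v) + 1 \le 2^h \bigl(x^-(v_h) + 1\bigr) \le 2^{n-1}(t^+ + 1) \le 2^n t^+,
\end{equation*}
where the last step uses $t^+ \ge 1$. Therefore $x^+(v) < 2^n t^+ < 2^{n+1} t^+$.

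No step here is a real obstacle. The only points needing care are the identity $\sum_{v \in T} x^-(v) = t^+$, which bounds the flow into the terminal at the end of the path, and the rounding in $\lfloor a_i/2 \rfloor$, which the ``$+1$'' shift in the recurrence absorbs. Parallel arcs and vertices with a single out-arc only make the lower bound on the flow along each path arc larger, so they cause no difficulty.
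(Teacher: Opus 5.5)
Your proof is correct. It takes a different route from the paper, which does not prove the statement itself. The paper quotes Haslebacher's Lemma~2, which bounds the flow on every \emph{arc} by less than $2^n \cdot t^+$ for switch digraphs in which every vertex has exactly two out-arcs. To apply it, the paper first normalises the instance: a parallel copy of the arc for vertices with one out-arc, and two empty loops for terminals without out-arcs. It then adds the bounds of the two out-arcs of $v$, which is where the factor $2^{n+1}$ comes from.

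Your argument combines the identity $\sum_{v \in T} x^-(v) = t^+$ with the halving rule along a shortest path to $T$. This is self-contained and works directly with the definition of switching flows for out-degree at most two, so no normalisation is needed. It also gives the sharper vertex bound
$x^+(v) \le 2^{n-1}(t^+ + 1) - 1 < 2^n \cdot t^+$.

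The citation route is shorter on the page and defers to the original sources. The factor of two it loses is harmless, since the paper only needs the cap $N = 2^{n+1} \cdot t^+$ to strictly exceed every entry of the guess read off a switching flow.
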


The original statement~\cite[Lemma~2]{Haslebacher2025} says that every arc carries fewer than $2^n \cdot t^+$ tokens, and attributes this bound to \textcite{DohrauGKMW2017}.
After the change of our instances described above, every vertex has two out-arcs, which gives the bound.

The following proposition is the base case of our recursion.
It is stated for \garrival without proof~\cite[Lemma~6]{Haslebacher2025}, with an attribution to \textcite{GartnerHH2021}, who prove a version of it for \arrival by simulating the tokens~\cite[Lemma~10]{GartnerHH2021}.

\begin{proposition}[{\cite[Lemma~6]{Haslebacher2025}}]
\label{prop:base}
Let $\mathcal{I}$ be a \garrival instance on~$n$ vertices in which $T$ is distance-$d$ dominating.
Then the numbers $t^-_v$ for $v \in T$ can be computed in time $2^d \cdot \poly(n, \log(t^+ + 2))$.
\end{proposition}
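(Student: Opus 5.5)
The plan is to simulate the routing procedure in \emph{batches} rather than token by token, and to bound the number of batch steps using the fact that every vertex has a short path to $T$. First, I would fix a distance-$d$ dominating structure: compute, for every non-terminal $v$, a shortest path $v = v_h, v_{h-1}, \ldots, v_0 \in T$ with $h \le d$, by one BFS in the reverse digraph. Next, I would bound the amount of traffic. Let $x$ be the run profile of any run. By \zcref{prop:switching_flows}, $x$ is a switching flow. The arc from $v_j$ to $v_{j-1}$ carries at least $\lfloor x^+(v_j)/2\rfloor \ge (x^+(v_j) - 1)/2$ tokens by~\eqref{eq:rotor}. By conservation~\eqref{eq:conservation}, $x^+(v_{j-1}) \ge x^-(v_{j-1})$ for non-terminals. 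Since at most $t^+$ tokens arrive at $v_0$, induction along the path gives
\begin{equation*}
x^+(v) \le 2^{h}(t^+ + 1) \le 2^{d}(t^+ + 1)
\end{equation*}
for every non-terminal $v$. This is the refinement of \zcref{prop:flow-bound} that uses $d$ in place of $n$. So every number that appears in the simulation has $\mathcal{O}(d + \log(t^+ + 2))$ bits.

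The simulation itself is a run of the routing procedure in which a non-terminal $v$ currently holding $m$ tokens sends all of them at once. It sends $\lceil m/2\rceil$ or $\lfloor m/2\rfloor$ along the first out-arc according to the parity of the number of tokens $v$ has sent so far, and the rest along the second. This is a legal sequence of single moves, so by \zcref{prop:switching_flows} it ends, and the resulting terminal counts are the numbers $t^-_v$. I would organise it in synchronous rounds, where in each round every non-terminal holding tokens empties itself. Each round costs $\poly(n, d + \log(t^+ + 2))$ arithmetic.

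The main obstacle is bounding the number of rounds by $2^d \cdot \poly(n, \log(t^+ + 2))$. My first attempt is a decay argument. Let $F$ be the number of tokens in flight. After $d$ consecutive rounds, the halving along the fixed shortest paths delivers to $T$ at least a $2^{-d}$ fraction of $F$, minus rounding losses of at most one token per vertex per round. This gives $F \mapsto (1 - 2^{-d})F + nd$ per block of $d$ rounds. After $\mathcal{O}(2^d \cdot n\log(t^+ + 2))$ blocks, $F$ is at most $\mathcal{O}(nd\,2^d)$.

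For the remaining phase, I would charge each round to at least one token move. Combined with the per-vertex traffic bound of the first step, applied to the residual instance with $F$ tokens, this bounds the remaining work by $n \cdot 2^{d} \cdot \mathcal{O}(nd\,2^d)$. That is only $4^d$, so it must be sharpened. The fix I would try is a potential argument. At each vertex, the residual sends needed from this point on are bounded by $2^{h}$ times the tokens still in flight plus one. Each round either halves the in-flight mass along some path, or finishes a vertex whose residual demand drops to zero. Summing the potential $\sum_v \log(1 + \text{residual}(v))$ then gives $\poly(n) \cdot (d + \log(t^+ + 2))$ rounds between consecutive deliveries to $T$. If this sharpening fails, I would fall back to simulating the last $\mathcal{O}(nd\,2^d)$ tokens one at a time along the same paths, whose cost is dominated by the decay phase up to polynomial factors.
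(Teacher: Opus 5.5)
The paper does not prove this proposition itself. It imports it from Haslebacher's Lemma~6, which is stated there without proof and goes back to the single-token simulation of G\"artner et al.\ for \arrival. So your argument has to stand on its own. Your first steps are sound: the bound $x^+(v) < 2^{\dist(v, T)}(t^+ + 1)$, the synchronous batch rounds as a legal run, and the block decay down to $\mathcal{O}(nd2^d)$ tokens in flight.

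The gap is the endgame, which you flag but do not close, and neither of your ways out works. The proposed sharpening is false. In the instance of \zcref{fig:arrival-bad-instance}, $T$ is distance-$d$ dominating, and a single token from $t$ visits $u$ exactly $2^{d-2}$ times before it is absorbed. So $2^{d-2}$ rounds pass before the first and only delivery, not $\poly(n)\cdot(d + \log 3)$. The dichotomy ``each round halves some in-flight mass or finishes a vertex'' is never justified, and no argument that bounds the rounds per delivery polynomially can be correct. The fallback is also unsupported. Simulating the remaining $F = \mathcal{O}(nd2^d)$ tokens one at a time costs the total number of residual moves, and your own bound caps this only at $n2^d(F + 1) = \mathcal{O}(n^2d\,4^d)$. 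Your analysis does not exclude this order: in the same instance, every absorbed token costs $2^{d-2}$ departures from $u$. So the claim that this phase is dominated by the decay phase has no basis. What your argument actually yields is $4^d\cdot\poly(n, \log(t^+ + 2))$, which misses the stated bound, even though it would still suffice for \zcref{thm:main}.

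The missing idea is to use the total residual demand as a potential from the first round on, which makes the two-phase split unnecessary. Let $F_t$ be the number of tokens in flight after round $t$ of your synchronous run, let $R_t(u)$ be the number of tokens that $u$ sends in later rounds, and let $\Phi_t \coloneqq \sum_{u \notin T}R_t(u)$. Every token in flight moves exactly once per round, so $\Phi_{t + 1} = \Phi_t - F_t$. The rest of the run is a run from the current switch states. Its profile therefore still splits $R_t(u)$ into $\lceil R_t(u)/2\rceil$ and $\lfloor R_t(u)/2\rfloor$, is conserved up to the tokens currently at $u$, and delivers at most $F_t$ tokens. Your path induction then gives exactly your residual bound $R_t(u) < 2^{\dist(u, T)}(F_t + 1)$. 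Hence $\Phi_t < 2n2^dF_t$ whenever $F_t \ge 1$, that is, $\Phi_{t + 1} < (1 - 2^{-d}/(2n))\Phi_t$. Moreover, $\Phi_t \ge 1$ as long as a token is in flight, and $\Phi_0 < n2^d(t^+ + 1)$ once the terminals have sent. The run therefore ends within $\mathcal{O}(n2^d\log\Phi_0) = \mathcal{O}(n2^d(d + \log n + \log(t^+ + 2)))$ rounds. Each round costs $\poly(n)$ operations on numbers with $\mathcal{O}(d + \log(t^+ + 2))$ bits. Since $d < n$ may be assumed, this is the claimed time $2^d\cdot\poly(n, \log(t^+ + 2))$.
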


\section{Guessing maps and the cap}
\label{sec:guessing}
Throughout this section, $\mathcal{I} = (G, \rho, T, (t^+_v)_{v \in T})$ is a \garrival instance on $n$ vertices.

\begin{definition}[Guessing maps and consistent guesses]
\label{def:guessing-map}
Let $B \subseteq V(G) \setminus T$ be a set of \emph{pivots}, and let $\mathbf{w} \in \N_0^B$ be a vector of \emph{guesses}.
Let $\mathcal{I}(B \leftarrow \mathbf{w})$ be the instance obtained from $\mathcal{I}$ by turning every pivot $b \in B$ into a terminal with $w_b$ starting tokens.
The \emph{guessing map} $\gmap_B$ of~$B$ sends $\mathbf{w}$ to the vector $\gmap_B(\mathbf{w}) \in \N_0^B$ of the numbers of tokens that arrive at the pivots in $\mathcal{I}(B \leftarrow \mathbf{w})$.
A guess $\mathbf{w}$ is \emph{consistent} if it is a fixed point of $\gmap_B$, that is, if every pivot receives as many tokens as it sends.
\end{definition}

The instance $\mathcal{I}(B \leftarrow \mathbf{w})$ satisfies the assumptions of \zcref{sec:instances}.
Its non-terminals reach~$T$, its tokens are those of $\mathcal{I}$ and the guessed ones, and every pivot has an out-arc, since it is a non-terminal of $\mathcal{I}$ that reaches $T$.
So $\gmap_B$ is well-defined, by \zcref{prop:switching_flows}.
\textcite[Theorem~8]{GartnerHH2021} use guessing maps for \arrival.
\zcref{fig:fixed-point-example} shows an example.

\begin{figure}[ht]
	\centering
	\begin{tikzpicture}
		\node[vertex, minimum size=.6cm, label=90:$s$] (s) at (-2,0) {$1$};
		\node[blackvertex, scale=1, label=240:$o$] (o) at (0,0) {};
		\node[blackvertex, scale=1, label=60:$a$] (a) at (2.5,0) {};
		\node[blackvertex, scale=1, label=270:$b$] (b) at (1.25,-1.5) {};
		\node[vertex, minimum size=.6cm, label=90:$t$] (D) at (4.5,0) {$0$};
		\node[vertex, minimum size=.6cm, label=0:$t'$] (Db) at (3.5,-1.5) {$0$};
		\draw[arrow] (s) to (o);
		\draw[arrow] (o) to [bend left = 20] (a);
		\draw[arrow] (a) to [bend left = 20] (o);
		\draw[arrow] (b) to (a);
		\draw[arrow, goodred,densely dashed] (o) to (b);
		\draw[arrow, goodred,densely dashed] (a) to (D);
		\draw[arrow, goodred,densely dashed] (b) to (Db);
	\end{tikzpicture}

	\medskip
	\begin{tabular}{r|cccccccc}
		$g$ & $0$ & $1$ & $2$ & $3$ & $4$ & $5$ & $6$ & $7$\\ \hline
		$\gmap_a(g)$ & $1$ & $2$ & $\mathbf{2}$ & $\mathbf{3}$ & $3$ & $3$ & $3$ & $4$\\
		tokens arriving at $t$ & $0$ & $0$ & $1$ & $1$ & $2$ & $2$ & $3$ & $3$\\
		tokens arriving at $t'$ & $0$ & $0$ & $0$ & $0$ & $0$ & $1$ & $1$ & $1$
	\end{tabular}
	\caption{An instance of \garrival encoding an instance of \arrival. The terminal~$s$ holds the token, and terminals $t,t'$ are the destinations, whose out-arcs are not drawn since they send no tokens. The token travels along $s, o, a, o, b, a, t$. The table shows the guessing map $\gmap_a \coloneqq \gmap_{\{a\}}$ of the single pivot $a$, with a guess written as a number $g \in \N_0$. If $a$ is turned into a terminal that sends $g$ tokens, then $\gmap_a(g)$ tokens arrive at $a$. The fixed points~$2$ and $3$ (bold) are the numbers of tokens that $a$ sends in the two switching flows of the instance. They are the run profile, and the run profile plus one token circulating once along $o, a, o$.}
	\label{fig:fixed-point-example}
\end{figure}

\begin{lemma}[cf.~{\cite[Lemma~11]{GartnerHH2021}} and~{\cite[Section~3.1]{Haslebacher2025}}]
\label{lem:consistent}
  Let $B \subseteq V(G) \setminus T$.
  If $\mathbf{w} \in \N_0^B$ is a consistent guess, then every terminal of $\mathcal{I}$ receives as many tokens in $\mathcal{I}(B \leftarrow \mathbf{w})$ as in $\mathcal{I}$.
  Conversely, for every switching flow $x$ of $\mathcal{I}$, the guess $(x^+(b))_{b \in B}$ is consistent.
\end{lemma}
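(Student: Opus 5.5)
Both directions go through switching flows and \zcref{prop:switching_flows}; the plan is to move switching flows back and forth between $\mathcal{I}$ and $\mathcal{I}(B \leftarrow \mathbf{w})$.

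\emph{First direction.} Let $\mathbf{w}$ be consistent, and let $y$ be a switching flow of $\mathcal{I}(B \leftarrow \mathbf{w})$, for instance the run profile of any run. This instance has the same digraph and labelling as $\mathcal{I}$, so $y$ is a function on $E(G)$. I claim $y$ is also a switching flow of $\mathcal{I}$.
\begin{itemize}
\item Condition~\eqref{eq:rotor} concerns only how each vertex splits its outflow. Every pivot $b$ sends $w_b$ tokens as a terminal, split $\lceil w_b/2\rceil$ and $\lfloor w_b/2\rfloor$, which is exactly~\eqref{eq:rotor} at $b$ with $y^+(b) = w_b$. A pivot with a single out-arc trivially sends everything along it.
\item Condition~\eqref{eq:terminals} for the terminals of $\mathcal{I}$ is unchanged, since their token counts are the same in both instances.
\item For~\eqref{eq:conservation}, non-terminals of $\mathcal{I}$ outside $B$ satisfy it already as non-terminals of $\mathcal{I}(B \leftarrow \mathbf{w})$. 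For a pivot $b$, we have $y^+(b) = w_b$ and, by \zcref{prop:switching_flows}, $y^-(b) = \gmap_B(\mathbf{w})_b$. Consistency says these two numbers are equal, so conservation holds at $b$.
\end{itemize}
Then \zcref{prop:switching_flows} applied to $\mathcal{I}$ gives $t^-_v = y^-(v)$ for every $v \in T$. Applied to $\mathcal{I}(B \leftarrow \mathbf{w})$, it gives that $y^-(v)$ is the arrival count at $v$ there. So the two arrival counts agree.

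\emph{Converse.} Let $x$ be a switching flow of $\mathcal{I}$, and set $\mathbf{w} \coloneqq (x^+(b))_{b \in B}$. I claim $x$ is a switching flow of $\mathcal{I}(B \leftarrow \mathbf{w})$.
\begin{itemize}
\item Condition~\eqref{eq:rotor} holds at every vertex, since it was already required in $\mathcal{I}$.
\item Condition~\eqref{eq:conservation} holds at the remaining non-terminals, which are a subset of those of $\mathcal{I}$.
\item Condition~\eqref{eq:terminals} holds at the old terminals as before, and at each pivot $b$ because $x^+(b) = w_b$ by the choice of $\mathbf{w}$.
\end{itemize}
By \zcref{prop:switching_flows}, the number of tokens arriving at $b$ in $\mathcal{I}(B \leftarrow \mathbf{w})$ is therefore $x^-(b)$. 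Since $b$ is a non-terminal of $\mathcal{I}$, conservation in $\mathcal{I}$ gives $x^-(b) = x^+(b) = w_b$. Hence $\gmap_B(\mathbf{w}) = \mathbf{w}$, that is, $\mathbf{w}$ is consistent.

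\emph{Main obstacle.} There is little real difficulty. The only care needed is in the routing convention for terminals with one or two out-arcs: a pivot-turned-terminal must split its tokens exactly as~\eqref{eq:rotor} prescribes. This is true by the convention fixed in \zcref{sec:instances}, and it is what makes $y$ and $x$ transfer between the two instances verbatim.
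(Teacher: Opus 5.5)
Your proof is correct and follows essentially the paper's argument: you check that a switching flow of one instance satisfies the defining conditions of the other, using \zcref{prop:switching_flows} together with consistency (respectively, conservation in $\mathcal{I}$) at the pivots, and the paper differs only in packaging both directions into a single equivalence. One small remark: \eqref{eq:rotor} at the pivots needs no appeal to the routing convention, because \zcref{def:switching-flow} imposes it at every vertex with two out-arcs, terminals included, in both instances.
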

\begin{proof}
  Let $\mathbf{w} \in \N_0^B$, and let $x \colon E(G) \to \N_0$ with $x^+(b) = w_b$ for every $b \in B$.
  Both~$\mathcal{I}$ and $\mathcal{I}(B \leftarrow \mathbf{w})$ ask for~\eqref{eq:rotor} at each vertex, \eqref{eq:conservation} at non-terminals of $\mathcal{I}$ outside~$B$, and~\eqref{eq:terminals} at the terminals of $\mathcal{I}$.
  Beyond this, $\mathcal{I}(B \leftarrow \mathbf{w})$ asks for $x^+(b) = w_b$ at the pivots, which holds, and $\mathcal{I}$ asks for~\eqref{eq:conservation} at the pivots.
  So if $x$ is a switching flow of $\mathcal{I}$ or of $\mathcal{I}(B \leftarrow \mathbf{w})$, then it is one of $\mathcal{I}(B \leftarrow \mathbf{w})$.
  Then $x^-(b) = \gmap_B(\mathbf{w})_b$ for every $b \in B$ by \zcref{prop:switching_flows}.
  Hence, $x$ is a switching flow of $\mathcal{I}$ if and only if it is one of $\mathcal{I}(B \leftarrow \mathbf{w})$ and $\gmap_B(\mathbf{w}) = \mathbf{w}$.

For the first claim, let $\mathbf{w}$ be consistent, and let $x$ be a switching flow of $\mathcal{I}(B \leftarrow \mathbf{w})$, such as the run profile of a run (\zcref{prop:switching_flows}).
Then $x$ is a switching flow of $\mathcal{I}$ as well.
By \zcref{prop:switching_flows}, applied to both instances, every terminal $v$ of $\mathcal{I}$ receives $x^-(v)$ tokens in both.
For the second claim, apply the equivalence to $\mathbf{w} \coloneqq (x^+(b))_{b \in B}$.
\end{proof}

\begin{definition}[Unit increments]
\label{def:unit-increments}
Let $B$ be a finite set, and let $X$ be a box in $\N_0^B$ or $X = \N_0^B$.
A map $F \colon X \to \mathbb{Z}^B$ has \emph{unit increments} if $F(\mathbf{w} + \e_b) - F(\mathbf{w}) \in \{\0\} \cup \{\e_{b'} : b' \in B\}$ whenever~$\mathbf{w}$ and $\mathbf{w} + \e_b$ are in $X$.
\end{definition}

In our context, $B \subseteq V(G) \setminus T$ is a set of pivots and $F$ is a guessing map.
Unit increments then say that when a pivot sends one more token, this token arrives at a terminal of $T$ or at a pivot $b' \in B$ whose entry grows by one.
A map with unit increments is monotone.
Indeed, let $\mathbf{x} \le \mathbf{y}$ be guesses of its domain.
A \emph{path of guesses} from $\mathbf{x}$ to $\mathbf{y}$ is a sequence of guesses from $\mathbf{x}$ to $\mathbf{y}$ that adds one token per step.
It has $\|\mathbf{y} - \mathbf{x}\|_1$ steps and stays in the domain.
By unit increments, no step decreases an entry of the map, so $F(\mathbf{x}) \le F(\mathbf{y})$.
Unit increments of guessing maps follow from the monotonicity of the arrivals in the starting tokens, which was first proved for \arrival~\cite[Theorem~8(ii)]{GartnerHH2021} and then for \garrival~\cite[Lemma~4]{Haslebacher2025}:

\begin{proposition}[{\cite[Lemma~4]{Haslebacher2025}}]
\label{prop:monotone}
Let $(G, \rho)$ be a switch digraph, and let $T \subseteq V(G)$ be a set of terminals.
For \garrival instances on $(G, \rho)$ with terminal set~$T$, the map that sends the vector $(t^+_v)_{v \in T}$ of starting tokens to the vector $(t^-_v)_{v \in T}$ of arriving tokens is monotone.
\end{proposition}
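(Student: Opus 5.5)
The plan is to prove monotonicity one token at a time, using the freedom in the order of moves together with \zcref{prop:switching_flows}. Every vector of starting tokens dominating $(t^+_v)_{v\in T}$ is reached from it by adding tokens one at a time at single terminals. Monotonicity is transitive along such chains. So it suffices to compare an instance $\mathcal{I}$ with the instance $\mathcal{I}'$ obtained by adding one token at a single terminal $u\in T$, and to show $t^-_v(\mathcal{I})\le t^-_v(\mathcal{I}')$ for every $v\in T$.

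\textbf{Step 1: the initial sends of $\mathcal{I}'$ dominate those of $\mathcal{I}$ by exactly one token.} If $u$ has one out-arc, it sends one more token along it. If $u$ has two out-arcs, compare $(\lceil m/2\rceil,\lfloor m/2\rfloor)$ with $(\lceil (m+1)/2\rceil,\lfloor (m+1)/2\rfloor)$, where $m\coloneqq t^+_u$. Exactly one entry grows, by one: the first if $m$ is even, the second if $m$ is odd. All other terminals send identically. Hence, after the initial sends, the configuration of $\mathcal{I}'$ is that of $\mathcal{I}$ plus one extra token $\theta$ sitting at the head $h$ of one out-arc of $u$. If $h$ is a terminal, $\theta$ stays there and we are done by Step 3 below with nothing left to route.

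\textbf{Step 2: build a run of $\mathcal{I}'$ that simulates a run of $\mathcal{I}$.} Fix any run $R$ of $\mathcal{I}$; it ends by \zcref{prop:switching_flows}. In $\mathcal{I}'$, perform the moves of $R$ in the same order, never moving $\theta$.

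I expect the main obstacle to be checking that this is a legal prefix of a run of $\mathcal{I}'$. The rule at a non-terminal $w$ depends only on how often each out-arc of $w$ has been used. The rule does not care which token is moved, and the order in which tokens at different non-terminals are moved is free. Since $\theta$ has not moved, the usage counters in $\mathcal{I}'$ coincide with those in $\mathcal{I}$ at every step. So each move of $R$ is a legal move in $\mathcal{I}'$ at the corresponding time. The one thing to verify is that a run may keep a token waiting while other tokens move; this holds because a run only requires moving \emph{some} token at a non-terminal as long as one exists.

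\textbf{Step 3: continue and compare.} After the prefix, every token of $\mathcal{I}$ sits at the terminal where it ends in $R$. Only $\theta$ may still be at a non-terminal. We continue routing $\theta$ according to the rules; since $\theta$ is the only token moving, any continuation is a run, and it ends by \zcref{prop:switching_flows}. Tokens that have reached terminals never move again, so each terminal $v$ ends with at least the $t^-_v(\mathcal{I})$ tokens it held after the prefix. By the run-independence in \zcref{prop:switching_flows}, the final counts of this particular run are the values $t^-_v(\mathcal{I}')$. This gives $t^-_v(\mathcal{I}')\ge t^-_v(\mathcal{I})$. In fact exactly one terminal gains one token, which is the unit-increment behaviour used later.
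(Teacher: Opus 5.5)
Your proof is correct. The paper gives no proof of this statement. It imports it from \cite[Lemma~4]{Haslebacher2025}, which goes back to \cite[Theorem~8(ii)]{GartnerHH2021} for \arrival. It then combines the statement with conservation of tokens in \zcref{lem:unit} to obtain unit increments.

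Your simulation argument is a self-contained replacement that needs only \zcref{prop:switching_flows}. The order of moves is free, and the switching rule depends only on the usage counters, which the waiting token $\theta$ does not touch. So a run of $\mathcal{I}$ embeds as a prefix of a run of $\mathcal{I}'$, and $\theta$ is routed afterwards.

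This buys two things over the citation:
\begin{itemize}
\item It works directly with the paper's routing rules, including vertices with one out-arc and the $\lceil m/2\rceil$/$\lfloor m/2\rfloor$ initial sends of terminals. So the conversion to out-degree two that the paper needs in order to apply the cited results is unnecessary.
\item It yields unit increments at once. After the prefix, the final counts are those of $\mathcal{I}$ plus one token at the terminal where $\theta$ stops. So \zcref{lem:unit} becomes immediate without invoking conservation separately.
\end{itemize}

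You should state one detail explicitly in the reduction to single-token steps. Every intermediate vector of your chain must itself define a valid instance: $t^+ \ge 1$, and every terminal holding tokens has an out-arc. Otherwise \zcref{prop:switching_flows} cannot be applied to it. This holds because the intermediate totals only grow from the smaller vector, and the supports of the intermediate vectors lie inside the support of the larger one.
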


In words, if no terminal starts with fewer tokens, then no terminal receives fewer tokens.
For completeness, we derive unit increments from \zcref{prop:monotone}.

\begin{lemma}
\label{lem:unit}
Every guessing map $\gmap_B$ has unit increments.
\end{lemma}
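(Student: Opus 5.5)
We fix $\mathbf{w} \in \N_0^B$ and a pivot $b \in B$, and compare the instances $\mathcal{I}(B \leftarrow \mathbf{w})$ and $\mathcal{I}(B \leftarrow \mathbf{w} + \e_b)$. Both live on the same switch digraph $(G, \rho)$ with the same terminal set $T \cup B$. They differ only in the starting tokens: $b$ starts with one more token, and everything else is unchanged. So the vectors of starting tokens satisfy $\mathbf{t}^+ \le \mathbf{t}'^+$ with $\|\mathbf{t}'^+ - \mathbf{t}^+\|_1 = 1$.

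By \zcref{prop:monotone}, applied to $(G,\rho)$ with terminal set $T \cup B$, the arrival vectors satisfy $\mathbf{t}^- \le \mathbf{t}'^-$ entrywise. It remains to bound the total increase of arrivals by one. For this we use token conservation. By \zcref{prop:switching_flows}, every run ends with all tokens at terminals, so $\sum_{v \in T \cup B} t^-_v$ equals the total number of starting tokens. Here we use that terminals never resend, so arriving tokens stay put. Hence the total number of arrivals increases by exactly one.

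Combining the two facts, the difference $\mathbf{t}'^- - \mathbf{t}^-$ is a non-negative integer vector whose entries sum to $1$. It is therefore a unit vector $\e_v$ for some $v \in T \cup B$. Restricting to the coordinates in $B$ gives $\gmap_B(\mathbf{w} + \e_b) - \gmap_B(\mathbf{w}) = \e_v$ if $v \in B$, and $\0$ if $v \in T$.

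The only step needing care is the conservation identity. We have to check that it counts arrivals at all terminals, including the pivot $b$ itself, which may receive its own extra token back. We also have to check that both instances satisfy the standing assumptions, which was noted after \zcref{def:guessing-map}. This holds because every pivot with tokens has an out-arc, and the total token count is at least one whenever the original instance has $t^+ \ge 1$.
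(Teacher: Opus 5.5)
Your proposal is correct and follows essentially the same argument as the paper. You compare the two instances on terminal set $T \cup B$, use \zcref{prop:monotone} for entrywise monotonicity and \zcref{prop:switching_flows} for conservation, conclude that the total arrival increase is exactly one, and restrict the resulting unit vector to the coordinates in $B$.
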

\begin{proof}
Let $\mathbf{w} \in \N_0^B$ and $b \in B$.
The instances $\mathcal{I}(B \leftarrow \mathbf{w})$ and $\mathcal{I}(B \leftarrow \mathbf{w} + \e_b)$ have the same switch digraph and the same terminal set $T \cup B$, and the second one has one more starting token, at $b$.
By \zcref{prop:monotone}, no terminal receives fewer tokens in the second instance.
Every token arrives at a terminal, since every run ends (\zcref{prop:switching_flows}).
So the total number of arriving tokens is the number of starting tokens, which grows by exactly one.
Hence exactly one terminal $z \in T \cup B$ receives one more token, and every other terminal receives as many tokens as before.
So $\gmap_B(\mathbf{w} + \e_b) - \gmap_B(\mathbf{w})$ is $\e_z$ if $z$ is a pivot, and $\0$ if $z \in T$.
\end{proof}

Our searches need a map $H \colon [0, N]^B \to [0, N]^B$, and a guessing map need not be one.
For example, let $a$ and $b$ be pivots such that both out-arcs of $b$ lead to $a$, and let a terminal with one starting token have its only out-arc to $a$.
Then, for every guess~$\mathbf{w}$, the pivot $a$ receives at least $w_b + 1$ tokens, so $\gmap_B(N\1)_a > N$ for every $N$.
We therefore cap the guessing map at~$N$.
For a single pivot $b$, no cap is needed, since~$\gmap_{\{b\}}$ maps a large enough $[0, N]$ to itself~\cite[proof of Lemma~5]{Haslebacher2025}.
\textcite[Lemma~11]{GartnerHH2021} cap guessing maps of \arrival in the same way, at $2^n$.
They show that a solver for fixed points of monotone maps, run on the capped map, finds a consistent guess.
Our solver finds the least fixed point, and \zcref{lem:capped} shows briefly that it is consistent.

\begin{definition}[Capped guessing maps]
\label{def:capped}
  For a set $B$ of pivots and $N \in \N$, the \emph{capped guessing map} $H \colon [0, N]^B \to [0, N]^B$ is the guessing map bounded from above by $N\1$, that is, $H(\mathbf{w}) \coloneqq \min\{\gmap_B(\mathbf{w}), N\1\}$, with the minimum taken entrywise.
\end{definition}

\begin{lemma}
\label{lem:capped}
  Let $B \subseteq V(G) \setminus T$ be a set of pivots, and let $H$ be its capped guessing map for $N \coloneqq 2^{n + 1} \cdot t^+$.
Then $H$ has unit increments, and its least fixed point is a consistent guess.
\end{lemma}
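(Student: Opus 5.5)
The plan has two parts: unit increments of $H$, and consistency of its least fixed point.

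\emph{Unit increments.} Let $\mathbf{w}, \mathbf{w} + \e_b \in [0, N]^B$. By \zcref{lem:unit}, $\gmap_B(\mathbf{w} + \e_b) - \gmap_B(\mathbf{w})$ is either $\0$ or $\e_{b'}$ for some $b' \in B$. Taking the entrywise minimum with $N\1$ is monotone in each entry. It also changes an entry by at most as much as the uncapped entry changes. So every entry $c \ne b'$ of $H$ is unchanged. The entry $b'$ either stays (if $\gmap_B(\mathbf{w})_{b'} \ge N$) or increases by exactly one. Hence $H(\mathbf{w} + \e_b) - H(\mathbf{w}) \in \{\0\} \cup \{\e_{b'}\}$. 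Since $H$ maps into $[0, N]^B$, it is monotone, and by \zcref{lem:kleene} it has a least fixed point $\mathbf{q}^*$.

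\emph{Consistency.} First I take a switching flow $x$ of $\mathcal{I}$, which exists as the run profile of any run (\zcref{prop:switching_flows}). Let $\mathbf{q} \coloneqq (x^+(b))_{b \in B}$. By \zcref{lem:consistent}, $\gmap_B(\mathbf{q}) = \mathbf{q}$. By \zcref{prop:flow-bound}, every entry satisfies $q_b < 2^{n+1} t^+ = N$, so $\mathbf{q} \in [0, N]^B$. Moreover $H(\mathbf{q}) = \min\{\mathbf{q}, N\1\} = \mathbf{q}$, so $\mathbf{q}$ is a fixed point of $H$. By \zcref{lem:kleene}, $\mathbf{q}^* \le \mathbf{q}$, and therefore every entry of $\mathbf{q}^*$ is at most $N - 1$.

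It remains to show $\gmap_B(\mathbf{q}^*) = \mathbf{q}^*$, meaning the cap is not active at $\mathbf{q}^*$. Suppose some entry had $\gmap_B(\mathbf{q}^*)_b > N$. Then $H(\mathbf{q}^*)_b = N$, which differs from $q^*_b \le N - 1$, contradicting $H(\mathbf{q}^*) = \mathbf{q}^*$. Hence every entry has $\gmap_B(\mathbf{q}^*)_b \le N$, so $\gmap_B(\mathbf{q}^*) = H(\mathbf{q}^*) = \mathbf{q}^*$, and $\mathbf{q}^*$ is consistent.

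Monotonicity of $\gmap_B$ gives a cleaner alternative for this last step: $\gmap_B(\mathbf{q}^*) \le \gmap_B(\mathbf{q}) = \mathbf{q} < N\1$, so the cap never binds at $\mathbf{q}^*$. The only delicate point is making sure the fixed point coming from the switching flow lies strictly below the cap, and the strict inequality in \zcref{prop:flow-bound} supplies exactly that.
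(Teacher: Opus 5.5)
Your proof is correct and follows the paper's argument essentially step for step. You get unit increments from \zcref{lem:unit} together with the cap. The fixed point $\mathbf{q}$ read off a run profile lies strictly below $N\1$ by \zcref{prop:flow-bound}, so $\mathbf{q}^* \le \mathbf{q}$ by \zcref{lem:kleene} keeps the cap inactive at $\mathbf{q}^*$.

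One small wording fix: monotonicity of $H$ comes from the unit increments you just proved (via paths of guesses), not from $H$ mapping into $[0, N]^B$. Mapping into $[0, N]^B$ is instead what \zcref{lem:kleene} additionally needs.
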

\begin{proof}
  If $\mathbf{w}$ and $\mathbf{w} + \e_b$ lie in $[0, N]^B$, then $\gmap_B(\mathbf{w} + \e_b) - \gmap_B(\mathbf{w})$ is $\0$ or a unit vector~$\e_{b'}$ by \zcref{lem:unit}.
  So the capped entry $b'$ grows by at most one and no other entry changes, that is, $H(\mathbf{w} + \e_b) - H(\mathbf{w}) \in \{\0, \e_{b'}\}$.
Hence $H$ has unit increments, and so it is monotone.

  For the least fixed point, let $x$ be the run profile of a run on $\mathcal{I}$, which is a switching flow of $\mathcal{I}$ by \zcref{prop:switching_flows}.
  The guess $\mathbf{q} \coloneqq (x^+(b))_{b \in B}$ is consistent by \zcref{lem:consistent}, and $q_b < N$ for every $b \in B$ by \zcref{prop:flow-bound}.
  So $\mathbf{q}$ lies in $[0, N]^B$, and it is a fixed point of~$H$, as no entry of $\gmap_B(\mathbf{q}) = \mathbf{q}$ reaches the cap.
  By \zcref{lem:kleene}, the least fixed point $\mathbf{q}^*$ of $H$ satisfies $\mathbf{q}^* \le \mathbf{q} < N\1$.
  So no entry of $H(\mathbf{q}^*) = \mathbf{q}^*$ reaches the cap either, which means that $\gmap_B(\mathbf{q}^*) = H(\mathbf{q}^*) = \mathbf{q}^*$.
\end{proof}

\section{The block recursion}
\label{sec:recursion}
Throughout this section, $\mathcal{I} = (G, \rho, T, (t^+_v)_{v \in T})$ is a \garrival instance on $n$ vertices with $t^+ \le 2^n$ tokens, and $d, k \in [n]$.
We first choose the pivots.

\begin{lemma}
\label{lem:pivots}
  One can compute in polynomial time a set $S \subseteq V(G) \setminus T$ with $|S| \le n/(d + 1)$ such that $S \cup T$ is distance-$d$ dominating.
\end{lemma}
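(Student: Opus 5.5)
We compute, by a single backward breadth-first search from $T$, the distance $\dist(v, T)$ of every vertex $v$ to $T$. Every vertex has finite distance: non-terminals reach a terminal by assumption, and terminals have distance $0$. For each residue $r \in [0, d]$ we form the class
\begin{equation*}
  S_r \coloneqq \{v \in V(G) \setminus T : \dist(v, T) \equiv r \pmod{d + 1}\}\enspace.
\end{equation*}
The $d + 1$ classes partition the at most $n$ non-terminals, so a smallest one has at most $n/(d + 1)$ vertices. We let $S$ be such a smallest class. All of this clearly takes polynomial time.

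It remains to check that $S \cup T$ is distance-$d$ dominating, and this is the only step with any content. Let $v$ be a vertex with $h \coloneqq \dist(v, T)$. If $h \le d$, then $v$ is within distance $d$ of $T$ and there is nothing to show. Otherwise, fix a shortest path $v = u_0, u_1, \ldots, u_h$ from $v$ to $T$. Along a shortest path to $T$, every vertex satisfies $\dist(u_j, T) = h - j$. Indeed, $\dist(u_j, T) \le h - j$ via the suffix of the path, and a strictly shorter route from $u_j$ would shorten the path from $v$.

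For $j \in [0, d]$ we have $h - j \ge h - d \ge 1$, so the vertices $u_0, \ldots, u_d$ are non-terminals. Their distances $h, h - 1, \ldots, h - d$ are $d + 1$ consecutive integers, so they cover every residue modulo $d + 1$. Hence some $u_j$ with $j \le d$ lies in $S$, and $\dist(v, S) \le j \le d$.

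I expect no real obstacle. The one point needing care is that $u_0, \ldots, u_d$ are non-terminals, which follows from $h > d$. Because of this, the class $S$ may be restricted to $V(G) \setminus T$ as the statement requires.
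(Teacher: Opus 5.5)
Your proof is correct and follows the paper's argument essentially step by step. Like the paper, you use a BFS from $T$, classify the non-terminals by $\dist(\cdot, T)$ modulo $d + 1$, take a smallest class as $S$, and observe that the first $d + 1$ vertices of a shortest path to $T$ from a vertex at distance $h > d$ are non-terminals at the consecutive distances $h, \ldots, h - d$, so one of them lies in $S$. You also check explicitly that distances drop by exactly one along a shortest path, which the paper leaves implicit.
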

\begin{proof}
  Every non-terminal has a finite distance to $T$, since it reaches a terminal, and breadth-first search computes these distances.
  For $r \in [0, d]$, let $S_r$ be the set of non-terminals $v$ with $\dist(v, T) \equiv r \pmod{d + 1}$.
  These $d + 1$ sets partition the non-terminals, so a smallest one has at most $n/(d + 1)$ vertices, and we let $S \coloneqq S_r$ for an $r$ with $|S_r|$ smallest.
  Now let $v$ be a vertex with $h \coloneqq \dist(v, T)$.
  If $h \le d$, then~$v$ is within distance $d$ of $T$.
  Otherwise, a shortest path from $v$ to $T$ passes through vertices at distances $h, h - 1, \ldots, 0$ from $T$.
  Its first $d + 1$ vertices lie at the distances $h, \ldots, h - d \ge 1$ from $T$, so they are non-terminals, and $v$ is within distance $d$ of each of them.
  One of these $d + 1$ consecutive distances is congruent to~$r$ modulo $d + 1$, so one of these vertices lies in $S$, and $v$ is within distance $d$ of $S$.
\end{proof}

Let $S$ be the set computed in \zcref{lem:pivots}, which may be empty.
If $S=\varnothing$, then $M=0$ and \zcref{algo:blocks} immediately uses the base case.
We cut it, in any order, into $M \coloneqq \lceil|S|/k\rceil$ \emph{blocks} $B_1, \ldots, B_M$ of at most $k$ pivots each.
Then $M \le |S|/k + 1 \le n/(k(d + 1)) + 1$.
Every block is non-empty, since $(M - 1)k < |S|$.
For $j \in [0, M]$, let $T^{(j)} \coloneqq T \cup B_1 \cup \cdots \cup B_j$, so that $T^{(0)} = T$ and $T^{(M)} = T \cup S$.
The recursion runs from $T^{(0)}$ to $T^{(M)}$, and it guesses every block with an algorithm of the following kind.

\begin{definition}[Block solvers]
\label{def:block-solver}
A \emph{block solver} for block size $k$ is an algorithm that, given $k' \in [k]$, $N \in \N_0$ and oracle access to a map $H \colon [0, N]^{k'} \to [0, N]^{k'}$ with unit increments, returns the least fixed point of $H$.
Its costs are its number of evaluations of $H$ and its \emph{overhead}, the time that it spends besides the evaluations.
\end{definition}

A call of \zcref{algo:blocks} at \emph{level} $j$ receives an instance $\mathcal{I}_j$ on the switch digraph $(G, \rho)$ with terminal set $T^{(j)}$.
For $j = 0$, this is the input instance.
For $j > 0$, the vertices of $B_1, \ldots, B_j$ have been turned into terminals with guessed numbers of tokens.
To run the block solver for the block $B_{j + 1}$, a call at level $j < M$ identifies $[0, N]^{B_{j + 1}}$ with $[0, N]^{|B_{j + 1}|}$ by a fixed ordering of $B_{j + 1}$.
The recursion is that of \textcite[Algorithm~1 and Section~3.2]{Haslebacher2025}, with blocks instead of single pivots.
Our bookkeeping of the numbers of tokens in the proof of \zcref{lem:recursion} follows the analysis for single pivots~\cite[Lemma~5]{Haslebacher2025}.
The goal of \zcref{sec:counting,sec:deep} is to build a fast block solver.

\begin{algorithm}
	\caption{Find-Arrivals$(j, \mathcal{I}_j)$.}
	\label{algo:blocks}
	\begin{algorithmic}[1]
		\Require A level $j \in [0, M]$ and a \garrival instance $\mathcal{I}_j$ on $(G, \rho)$ with terminal set $T^{(j)}$. The parameters $d$ and $k$, the blocks $B_1, \ldots, B_M$ and a block solver $\mathcal{A}$ for block size $k$ are fixed.
		\Ensure The numbers $t^-_v$ of tokens that arrive at the terminals $v \in T^{(j)}$ of $\mathcal{I}_j$.
		\If{$j = M$} \Comment{Base case}
		\State \Return the numbers $t^-_v$ computed by the algorithm of \zcref{prop:base} on $\mathcal{I}_j$
		\EndIf
		\State $N \gets 2^{n + 1} \cdot \tau$, where $\tau$ is the number of tokens of $\mathcal{I}_j$ \label{line:alg2-cap}
		\State Run $\mathcal{A}$ on the map $H \colon [0, N]^{B_{j + 1}} \to [0, N]^{B_{j + 1}}$, where $H(\mathbf{w})$ is evaluated as follows:
		\State \qquad $(s_v)_{v \in T^{(j + 1)}} \gets$ Find-Arrivals$\bigl(j + 1, \mathcal{I}_j(B_{j + 1} \leftarrow \mathbf{w})\bigr)$ \label{line:alg2-call}
		\State \qquad $H(\mathbf{w}) \gets \min\{(s_b)_{b \in B_{j + 1}}, N\1\}$ \label{line:alg2-read}
		\State Let $\mathbf{w}^*$ be the least fixed point returned by $\mathcal{A}$
		\State $(s_v)_{v \in T^{(j + 1)}} \gets$ Find-Arrivals$\bigl(j + 1, \mathcal{I}_j(B_{j + 1} \leftarrow \mathbf{w}^*)\bigr)$ \label{line:alg2-final}
		\State \Return $(s_v)_{v \in T^{(j)}}$ \label{line:alg2-return}
	\end{algorithmic}
\end{algorithm}

\clearpage
\pagebreak
\begin{lemma}
\label{lem:recursion}
  Let $\mathcal{A}$ be a block solver for block size $k$ (\zcref{def:block-solver}).
  Suppose that, on every input with $N \le 2^{4n^2}$,
  \begin{itemize}
	\item $\mathcal{A}$ makes at most $Q$ evaluations of $H$, and
	\item the overhead of $\mathcal{A}$ is at most $I$.
  \end{itemize}
  Then \zcref{algo:blocks}, called with $j = 0$ on $\mathcal{I}$, returns the numbers $t^-_v$ of $\mathcal{I}$ in time $(Q + 1)^M \cdot (2^d + I) \cdot n^{O(1)}$.
  Also, each cap $N$ used in the recursion is at most~$2^{4n^2}$.
\end{lemma}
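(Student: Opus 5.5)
We prove three things in turn: correctness of Find-Arrivals by downward induction on the level $j$, a bound on the token counts and caps along the recursion, and a count of calls and per-call work.

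\emph{Correctness.} We show by downward induction on $j$ that Find-Arrivals$(j, \mathcal{I}_j)$ returns the arrival numbers of $\mathcal{I}_j$ whenever $\mathcal{I}_j$ is an instance on $(G,\rho)$ with terminal set $T^{(j)}$.

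For $j = M$, the terminal set $T^{(M)} = T \cup S$ is distance-$d$ dominating by \zcref{lem:pivots}. So \zcref{prop:base} applies and returns the right numbers.

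For $j < M$, the induction hypothesis says that line~\ref{line:alg2-call} returns the true arrivals of $\mathcal{I}_j(B_{j+1} \leftarrow \mathbf{w})$. Hence the map $H$ on line~\ref{line:alg2-read} is exactly the capped guessing map of $B_{j+1}$ in $\mathcal{I}_j$ with $N = 2^{n+1}\tau$. By \zcref{lem:capped}, $H$ has unit increments, so $\mathcal{A}$ is applicable and returns its least fixed point $\mathbf{w}^*$. The same lemma says $\mathbf{w}^*$ is consistent. By \zcref{lem:consistent}, every terminal of $\mathcal{I}_j$ then receives the same number of tokens in $\mathcal{I}_j(B_{j+1}\leftarrow\mathbf{w}^*)$ as in $\mathcal{I}_j$. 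So the restriction on line~\ref{line:alg2-return} is correct.

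\emph{Caps and token counts.} Let $\tau_j$ be an upper bound on the number of tokens of any instance at level $j$. An instance at level $j+1$ has the tokens of its parent plus at most $k$ new pivots with at most $N = 2^{n+1}\tau_j$ tokens each. Hence
\[
\tau_{j+1} \le \tau_j\,(1 + k\,2^{n+1}) \le \tau_j\, 2^{2n+1},
\]
using $k \le n \le 2^{n-1}$, which holds since $n \ge 2$. With $\tau_0 = t^+ \le 2^n$ and $M \le n$ levels, every instance has at most $2^{n + n(2n+1)}$ tokens. Every cap is therefore at most
\[
2^{n+1}\cdot 2^{2n^2+2n} = 2^{2n^2+3n+1} \le 2^{4n^2}
\]
for $n \ge 2$. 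This constant bookkeeping is the only delicate step: a sloppy per-level factor could overshoot the exponent $4n^2$, so one must check the inequality $2n^2+3n+1 \le 4n^2$ at $n \ge 2$. It also shows that the hypotheses on $\mathcal{A}$ apply at every call, and that $\log(\tau+2) = O(n^2)$ throughout.

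\emph{Time.} Each call at level $j < M$ makes at most $Q$ evaluations of $H$ plus one final call on line~\ref{line:alg2-final}. So it makes at most $Q+1$ calls at level $j+1$, and by induction there are at most $(Q+1)^j$ calls at level $j$, in total at most $(M+1)(Q+1)^M \le n^{O(1)}(Q+1)^M$ calls. Excluding its recursive calls, a call at level $j<M$ costs the overhead $I$ of $\mathcal{A}$ plus polynomial bookkeeping per evaluation: building $\mathcal{I}_j(B_{j+1}\leftarrow\mathbf{w})$, reading off and capping the entries, and computing $N$. Since the numbers have $O(n^2)$ bits, this bookkeeping is $n^{O(1)}$ per evaluation. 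We charge it to the child call it creates, so each call is charged $n^{O(1)}$ for its own creation. A base-case call costs $2^d\cdot\poly(n,\log(\tau+2)) = 2^d n^{O(1)}$. Every call thus costs at most $(2^d+I)n^{O(1)}$ on its own, and multiplying by the number of calls gives the claimed total $(Q+1)^M(2^d+I)n^{O(1)}$.
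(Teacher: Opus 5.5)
Your proposal is correct and follows essentially the same route as the paper: correctness by downward induction via \zcref{lem:capped} and \zcref{lem:consistent}, the same per-level token recursion $\tau_{j+1} \le 2^{2n+1}\tau_j$ giving caps at most $2^{4n^2}$, and at most $(Q+1)^j$ calls at level $j$. The only cosmetic difference is in the time count. You charge the per-evaluation bookkeeping to the child call it creates and multiply the number of calls by a uniform per-call cost, whereas the paper sums $(Q+1)^j\bigl(I + (Q+2)P\bigr)$ over the levels directly; both give $(Q+1)^M \cdot (2^d + I) \cdot n^{O(1)}$.
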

\begin{proof}
  We first bound the numbers of tokens that occur, which the hypothesis on $\mathcal{A}$ needs.
  For $j \in [0, M]$, let $\tau_j$ be the largest number of tokens of an instance at level $j$, so that $\tau_0 = t^+$.
  Consider a call at level $j < M$ whose instance has $\tau \le \tau_j$ tokens.
  It guesses the pivots of $B_{j + 1}$, and every guess is at most $N = 2^{n + 1} \cdot \tau$.
  So, as $|B_{j + 1}| \le k$,  every call that it makes has at most $\tau + k \cdot 2^{n + 1} \cdot \tau$ tokens.
  Since $k \le n$, this gives the recursive inequality
  \begin{equation*}
	\tau_{j + 1} \le (1 + n \cdot 2^{n + 1}) \cdot \tau_j \le 2^{2n + 1} \cdot \tau_j\enspace.
  \end{equation*}
  Hence $\tau_j \le 2^{(2n + 1)j} \cdot t^+$ for every $j \in [0, M]$.
  We have $M \le |S| \le n$ and $t^+ \le 2^n$, so every instance of the recursion has at most
  \begin{equation*}
	2^{(2n + 1)n} \cdot 2^n = 2^{2n^2 + 2n} \le 2^{3n^2}
  \end{equation*}
  tokens, using $n \ge 2$.
  So the cap of every call satisfies $N \le 2^{n + 1} \cdot 2^{3n^2} \le 2^{4n^2}$.

  Now we show by induction on $M - j$ that a call at level $j$ returns the numbers of tokens that arrive at the terminals of its instance $\mathcal{I}_j$.
  For $j = M$, the set $T^{(M)} = T \cup S$ is distance-$d$ dominating by \zcref{lem:pivots}, so \zcref{prop:base} applies.
  For $j < M$, note that $B_{j + 1}$ is a set of at most $k$ non-terminals of $\mathcal{I}_j$.
  Every evaluation of $H$ at a guess $\mathbf{w} \in [0, N]^{B_{j + 1}}$ is a call at level $j + 1$ on $\mathcal{I}_j(B_{j + 1} \leftarrow \mathbf{w})$, whose terminal set is $T^{(j + 1)}$.
  By induction, this call returns the numbers of arriving tokens of this instance, and their entries at the pivots form $\gmap_{B_{j + 1}}(\mathbf{w})$.
  So line~\ref{line:alg2-read} computes $H(\mathbf{w}) = \min\{\gmap_{B_{j + 1}}(\mathbf{w}), N\1\}$, the capped guessing map of \zcref{lem:capped} for~$\mathcal{I}_j$ and $B_{j + 1}$.
  By \zcref{lem:capped}, $H$ has unit increments, so $\mathcal{A}$ returns the least fixed point~$\mathbf{w}^*$ of~$H$, which is a consistent guess.
  By induction, line~\ref{line:alg2-final} computes the numbers of arriving tokens of $\mathcal{I}_j(B_{j + 1} \leftarrow \mathbf{w}^*)$.
  By \zcref{lem:consistent}, they agree with those of $\mathcal{I}_j$ at the terminals $T^{(j)}$ of $\mathcal{I}_j$.

  All numbers that occur have $\mathcal{O}(n^2)$ bits, since the numbers of tokens is at most~$2^{3n^2}$ and the guesses are at most $N \le 2^{4n^2}$.
  Hence there is a polynomial $P \ge 1$ in $n$ with the following two properties.
  Besides the recursive calls, a call at level $j < M$ spends time at most $P$ on line~\ref{line:alg2-cap}, at most $P$ per evaluation on lines~\ref{line:alg2-call} and~\ref{line:alg2-read}, and at most $P$ on lines~\ref{line:alg2-final} and~\ref{line:alg2-return}.
  A call at level $M$ takes time at most $2^d \cdot P$, by \zcref{prop:base}.

  We now analyse the run time. A call at level $j < M$ makes at most $Q + 1$ calls at level $j + 1$, one per evaluation of $H$ and one in line~\ref{line:alg2-final}.
  Besides these calls, it spends time at most $I + (Q + 2)P$.
  This is the overhead $I$ of $\mathcal{A}$ plus the time of the lines above.
  So there are at most $(Q + 1)^j$ calls at level $j$, and the total time is at most
  \begin{align*}
	& \sum_{j = 0}^{M - 1}(Q + 1)^j\bigl(I + (Q + 2)P\bigr) + (Q + 1)^M \cdot 2^d \cdot P\\
	& \le M \cdot (Q + 1)^M \cdot (I + 2P) + (Q + 1)^M \cdot 2^d \cdot P\\
	& \le (Q + 1)^M \cdot (2^d + I) \cdot (2M + 1)P\enspace,
  \end{align*}
  where $(2M + 1)P \le (2n + 1)P$ is again polynomial in $n$.
\end{proof}

\section{Searching a block by counting tokens}
\label{sec:counting}

In this section, we describe the fixed-point search used by the block solver. Each round tests a guess and removes an $\ell_1$-half-space from the remaining candidates. We show that this removal can include a fixed point when tokens circulate among the pivots (\zcref{sec:half-spaces}). We then introduce a leaky map that lets the search identify the least fixed point of the original map even when this happens (\zcref{sec:sinks}). Finally, we give the cutting procedure and prove that it returns the least fixed point whenever it stops (\zcref{sec:cutting}).

Throughout \zcref{sec:counting,sec:deep}, $k \ge 1$ and $N \ge 0$ are integers, and $F \colon [0, N]^k \to [0, N]^k$ is a map with unit increments (\zcref{def:unit-increments}).
When \zcref{algo:blocks} calls the solver at level $j$, the $k$ used here is $|B_{j+1}|$, and $F$ is the capped guessing map $H$ for that block. We prove the properties needed for the search using only unit increments, so \zcref{sec:sinks} can also use them for its leaky map and the corresponding map without leaks on a larger box.
We keep the language of tokens, in which $w_i$ is the number of tokens that pivot $i$ sends and $F(\mathbf{w})_i$ is the number that arrive at pivot $i$.
If $F(\mathbf{w} + \e_i) - F(\mathbf{w}) = \e_j$, then the additional token that pivot $i$ sends \emph{returns at} pivot~$j$.
If $F(\mathbf{w} + \e_i) = F(\mathbf{w})$, then it is \emph{absorbed}.
For a capped guessing map, a token is absorbed if it arrives at a terminal of $T$ or at a pivot whose entry is capped.
As in \zcref{sec:approach}, the \emph{surplus} of a guess $\mathbf{w}$ is $\surplus(\mathbf{w}) \coloneqq F(\mathbf{w}) - \mathbf{w}$, so $\mathbf{w}$ is a fixed point of $F$ if and only if $\surplus(\mathbf{w}) = \0$.

Unit increments bound how much $F$ changes between two guesses.
Every step of a path of guesses from~$\mathbf{x}$ to $\mathbf{y} \ge \mathbf{x}$ (the remark after \zcref{def:unit-increments}) adds one token, which returns or is absorbed.
Summing over the $\|\mathbf{y} - \mathbf{x}\|_1$ steps gives
\begin{equation}
\label{eq:no-multiply}
	\0 \le F(\mathbf{y}) - F(\mathbf{x})\quad\text{and}\quad\|F(\mathbf{y}) - F(\mathbf{x})\|_1 \le \|\mathbf{y} - \mathbf{x}\|_1\qquad\text{for all }\mathbf{x} \le \mathbf{y}\text{ in }[0, N]^k,
\end{equation}
where $\|F(\mathbf{y}) - F(\mathbf{x})\|_1$ counts the tokens of the path that return, and it does not depend on the path.
So equality holds on the right of~\eqref{eq:no-multiply} if and only if every token of every path from $\mathbf{x}$ to $\mathbf{y}$ returns. The norm inequality in~\eqref{eq:no-multiply} can be extended to arbitrary $\mathbf{x}, \mathbf{y} \in [0, N]^k$. Let $\mathbf{m}=\min\{\mathbf{x}, \mathbf{y}\}$ be their entrywise minimum. Then  the triangle inequality and~\eqref{eq:no-multiply} for $\mathbf{m} \le \mathbf{x}$ and $\mathbf{m} \le \mathbf{y}$
give
\begin{equation}
\label{eq:non-expansive}
	\|F(\mathbf{x}) - F(\mathbf{y})\|_1 \le \|\mathbf{x} - \mathbf{m}\|_1 + \|\mathbf{y} - \mathbf{m}\|_1 = \|\mathbf{x} - \mathbf{y}\|_1\qquad\text{for all }\mathbf{x}, \mathbf{y} \in [0, N]^k\enspace.
\end{equation}
\textcite[Lemma~12]{Haslebacher2025} shows the same inequality for the map that moves every token one step, which connects \garrival to the $\ell_1$-distance.
Since $F$ is monotone, it has a least fixed point $\mathbf{q}^*$ (\zcref{lem:kleene}).

\subsection{\texorpdfstring{$\ell_1$}{l1}-half-spaces}
\label{sec:half-spaces}
For a guess $\mathbf{w}$ with $\surplus(\mathbf{w}) \ne \0$, the \emph{wrong side} of $\mathbf{w}$ is the $\ell_1$-half-space $\halfspace{\mathbf{w}}{-\surplus(\mathbf{w})}$ (\zcref{def:half-space}).
We first describe $\ell_1$-half-spaces concretely.
For a point $\mathbf{c}$ and a direction $\mathbf{v}$, whether $\mathbf{z} \in \halfspace{\mathbf{c}}{\mathbf{v}}$ depends on $\mathbf{z}$ only through the sign vector $\sgn(\mathbf{z} - \mathbf{c}) \in \{-1, 0, 1\}^k$.
It records, at each pivot, whether $\mathbf{z}$ sends fewer, as many or more tokens than $\mathbf{c}$.
The next lemma follows from results of \textcite[Lemma~4.5 and Observation~4.6]{HaslebacherLSW2025}, and we include a short direct proof for the $\ell_1$-distance.
\zcref{fig:half-space} shows two examples.

\begin{lemma}
\label{lem:orthant}
  Let $\mathbf{z}, \mathbf{c} \in \mathbb{R}^k$, $\mathbf{v} \in \mathbb{R}^k \setminus \{\0\}$, and $\boldsymbol{\sigma} \coloneqq \sgn(\mathbf{z} - \mathbf{c})$.
  Then $\mathbf{z} \in \halfspace{\mathbf{c}}{\mathbf{v}}$ if and only if $\sum_{i : \sigma_i \ne 0}\sigma_iv_i + \sum_{i : \sigma_i = 0}|v_i| \ge 0$.
\end{lemma}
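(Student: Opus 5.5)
The plan is to compute the one-sided behaviour of the function $\varepsilon \mapsto \|\mathbf{c} - \varepsilon\mathbf{v} - \mathbf{z}\|_1$ near $\varepsilon = 0$ coordinate by coordinate. Since this function is convex and piecewise linear, the defining condition of $\halfspace{\mathbf{c}}{\mathbf{v}}$ reduces to a sign condition on its right derivative at $0$.

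Write $g(\varepsilon) \coloneqq \|\mathbf{c} - \varepsilon\mathbf{v} - \mathbf{z}\|_1 = \sum_{i \in [k]} g_i(\varepsilon)$ with $g_i(\varepsilon) \coloneqq |c_i - z_i - \varepsilon v_i|$. Each $g_i$ is convex and piecewise linear in $\varepsilon$, so $g$ is too. I will determine $g_i$ on a small interval $[0, \varepsilon_0]$, taking $\varepsilon_0 > 0$ small enough that $\varepsilon_0 |v_i| < |z_i - c_i|$ for every $i$ with $\sigma_i \neq 0$. There are three cases.
\begin{itemize}
\item If $\sigma_i = 1$, then $z_i > c_i$, so $c_i - z_i - \varepsilon v_i < 0$ on this interval and $g_i(\varepsilon) = g_i(0) + \varepsilon v_i$.
\item If $\sigma_i = -1$, then symmetrically $g_i(\varepsilon) = g_i(0) - \varepsilon v_i$.
\item If $\sigma_i = 0$, then $g_i(\varepsilon) = \varepsilon |v_i|$ and $g_i(0) = 0$.
\end{itemize}
In all three cases $g_i(\varepsilon) = g_i(0) + \varepsilon D_i$ on $[0, \varepsilon_0]$, where $D_i = \sigma_i v_i$ if $\sigma_i \neq 0$ and $D_i = |v_i|$ otherwise. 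Summing over $i$ gives $g(\varepsilon) = g(0) + \varepsilon D$ on $[0, \varepsilon_0]$, with $D$ equal to the expression in the statement.

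Now compare with the definition. If $D < 0$, then $g(\varepsilon) < g(0)$ for small $\varepsilon > 0$, so $\mathbf{z} \notin \halfspace{\mathbf{c}}{\mathbf{v}}$. If $D \ge 0$, then $g(\varepsilon) \ge g(0)$ on $[0, \varepsilon_0]$. This inequality must then be extended to every $\varepsilon > 0$, and this is the one point that needs care, since the definition quantifies over all $\varepsilon > 0$ and not just small ones. Convexity settles it: for $\varepsilon > \varepsilon_0$, write $\varepsilon_0$ as a convex combination of $0$ and $\varepsilon$. Convexity gives $g(\varepsilon_0) \le (1 - \varepsilon_0/\varepsilon)\, g(0) + (\varepsilon_0/\varepsilon)\, g(\varepsilon)$. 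Combined with $g(\varepsilon_0) \ge g(0)$, this yields $g(\varepsilon) \ge g(0)$.

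Hence $\mathbf{z} \in \halfspace{\mathbf{c}}{\mathbf{v}}$ if and only if $D \ge 0$, which is the claim.
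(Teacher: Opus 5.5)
Your proof is correct and follows essentially the same route as the paper: both write $\varepsilon \mapsto \|\mathbf{c} - \varepsilon\mathbf{v} - \mathbf{z}\|_1$ as a sum of coordinate terms, identify its right derivative at $0$ with the stated sum via the three sign cases, and use convexity to pass from small $\varepsilon$ to all $\varepsilon > 0$. Your explicit convex-combination step simply spells out the paper's remark that a non-negative right derivative makes this convex function non-decreasing on $[0, \infty)$.
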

\begin{proof}
  Let $\varphi(\varepsilon) \coloneqq \|\mathbf{c} - \varepsilon\mathbf{v} - \mathbf{z}\|_1 = \sum_{i \in [k]}|z_i - c_i + \varepsilon v_i|$ be the distance from $\mathbf{z}$ to the moved point, which is a convex and piecewise linear function of $\varepsilon$.
  We show that the sum in the statement is the right derivative of $\varphi$ at $0$.
  By convexity, the sign of this derivative then decides whether $\mathbf{z} \in \halfspace{\mathbf{c}}{\mathbf{v}}$.
  For all small enough $\varepsilon > 0$, the $i^{\textnormal{th}}$ term equals $\sigma_i(z_i - c_i) + \varepsilon\sigma_iv_i$ if $\sigma_i \ne 0$, since $z_i - c_i + \varepsilon v_i$ then has the sign~$\sigma_i$.
  If $\sigma_i = 0$, it equals $\varepsilon|v_i|$.
  So the right derivative of $\varphi$ at $0$ is the sum in the statement.
  If this sum is non-negative, then $\varphi$ is non-decreasing for $\varepsilon \ge 0$ by convexity, so $\varphi(\varepsilon) \ge \varphi(0) = \|\mathbf{c} - \mathbf{z}\|_1$ for all $\varepsilon > 0$, and $\mathbf{z} \in \halfspace{\mathbf{c}}{\mathbf{v}}$.
  Otherwise, $\varphi(\varepsilon) < \varphi(0)$ for all small enough $\varepsilon > 0$, and $\mathbf{z} \notin \halfspace{\mathbf{c}}{\mathbf{v}}$.
\end{proof}

\begin{figure}[ht]
	\centering
	\begin{tikzpicture}[scale=0.9]
		\begin{scope}
		\fill[black!12] (0,-1.6) rectangle (1.8,1.6);
		\draw[thick] (0,-1.6) -- (0,1.6);
		\draw[dotted] (-1.8,0) -- (1.8,0);
		\draw[arrow, thick] (0,0) -- (1.2,0.6) node[right] {$\mathbf{v}$};
		\node[blackvertex, scale=1.2, label=below left:$\mathbf{c}$] at (0,0) {};
		\node at (0, -2.1) {$\mathbf{v} = (2, 1)$};
		\end{scope}
		\begin{scope}[xshift=6cm]
		\fill[black!12] (-1.8,0) rectangle (1.8,1.6);
		\fill[black!12] (0,-1.6) rectangle (1.8,0);
		\draw[thick] (-1.8,0) -- (0,0) -- (0,-1.6);
		\draw[dotted] (0,0) -- (1.8,0);
		\draw[dotted] (0,0) -- (0,1.6);
		\draw[arrow, thick] (0,0) -- (0.9,0.9) node[right] {$\mathbf{v}$};
		\node[blackvertex, scale=1.2, label=below left:$\mathbf{c}$] at (0,0) {};
		\node at (0, -2.1) {$\mathbf{v} = (1, 1)$};
		\end{scope}
	\end{tikzpicture}
	\caption{Two $\ell_1$-half-spaces $\halfspace{\mathbf{c}}{\mathbf{v}}$ in the plane (shaded, with their boundaries drawn solid where they belong to the set). On the left, $\mathbf{v} = (2, 1)$. By \zcref{lem:orthant}, a point $\mathbf{z}$ with $\boldsymbol{\sigma} = \sgn(\mathbf{z} - \mathbf{c})$ lies in $\halfspace{\mathbf{c}}{\mathbf{v}}$ if and only if $\sigma_1 \ge 0$, so $\halfspace{\mathbf{c}}{\mathbf{v}}$ is the closed half-plane to the right of $\mathbf{c}$. On the right, $\mathbf{v} = (1, 1)$. The sum of \zcref{lem:orthant} is negative only for $\boldsymbol{\sigma} = (-1, -1)$, so $\halfspace{\mathbf{c}}{\mathbf{v}}$ is the plane without the open quadrant below and to the left of $\mathbf{c}$.}
	\label{fig:half-space}
\end{figure}

The next lemma concerns the wrong side of a guess, and also any two fixed points.
For maps with unit increments, it is analoguous
to a result of \textcite[Lemma~4.1, Observation~3.4]{HaslebacherLSW2025}.
For maps that shrink all distances by a constant factor smaller than one, they show that the fixed point is strictly closer to $F(\mathbf{w})$ than to $\mathbf{w}$.

\begin{lemma}
\label{lem:tie}
  Let $\mathbf{q} \in [0, N]^k$ be a fixed point of $F$, and let $\mathbf{w} \in [0, N]^k$ be a fixed point of $F$ as well, or a guess with $\surplus(\mathbf{w}) \ne \0$ and $\mathbf{q} \in \halfspace{\mathbf{w}}{-\surplus(\mathbf{w})}$.
  Let $\mathbf{m}=\min\{\mathbf{w}, \mathbf{q}\}$.
  Then every step along every path of guesses from $\mathbf{m}$ to $\mathbf{w}$, and along every path of guesses from $\mathbf{m}$ to $\mathbf{q}$, causes a unit increase in one coordinate of $F$.
\end{lemma}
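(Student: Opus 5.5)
The plan is to show that both inequalities $\|F(\mathbf{w}) - F(\mathbf{m})\|_1 \le \|\mathbf{w} - \mathbf{m}\|_1$ and $\|F(\mathbf{q}) - F(\mathbf{m})\|_1 \le \|\mathbf{q} - \mathbf{m}\|_1$ from~\eqref{eq:no-multiply} hold with equality. By the remark after~\eqref{eq:no-multiply}, equality there holds if and only if every token of every path of guesses between the two endpoints returns. That is, every step causes a unit increase in one coordinate of $F$, which is exactly the claim. So the whole argument is a sandwich of $\ell_1$-distances.

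First I bound $\|\mathbf{w} - \mathbf{q}\|_1$ from above. Since $\mathbf{m} = \min\{\mathbf{w}, \mathbf{q}\}$ satisfies $\mathbf{m} \le \mathbf{w}$ and $\mathbf{m} \le \mathbf{q}$, the triangle inequality and~\eqref{eq:no-multiply} give
\begin{equation*}
\|F(\mathbf{w}) - F(\mathbf{q})\|_1 \le \|F(\mathbf{w}) - F(\mathbf{m})\|_1 + \|F(\mathbf{m}) - F(\mathbf{q})\|_1 \le \|\mathbf{w} - \mathbf{m}\|_1 + \|\mathbf{q} - \mathbf{m}\|_1 = \|\mathbf{w} - \mathbf{q}\|_1\enspace.
\end{equation*}
The last equality holds because, in each coordinate, one of $w_i - m_i$ and $q_i - m_i$ is zero and the other is $|w_i - q_i|$.

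It then suffices to show the reverse bound $\|\mathbf{w} - \mathbf{q}\|_1 \le \|F(\mathbf{w}) - F(\mathbf{q})\|_1$ in both cases. Then every inequality in the chain is an equality, and in particular both applications of~\eqref{eq:no-multiply} are tight: if either were strict, the chain would be strict.

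\emph{Case 1: $\mathbf{w}$ is a fixed point.} Then $F(\mathbf{w}) = \mathbf{w}$ and $F(\mathbf{q}) = \mathbf{q}$, so the reverse bound is trivial.

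\emph{Case 2: $\surplus(\mathbf{w}) \ne \0$ and $\mathbf{q} \in \halfspace{\mathbf{w}}{-\surplus(\mathbf{w})}$.} By \zcref{def:half-space} with $\mathbf{c} = \mathbf{w}$ and $\mathbf{v} = -\surplus(\mathbf{w})$, we have $\|\mathbf{w} - \mathbf{q}\|_1 \le \|\mathbf{w} + \varepsilon\surplus(\mathbf{w}) - \mathbf{q}\|_1$ for every $\varepsilon > 0$. Taking $\varepsilon = 1$ gives $\mathbf{w} + \surplus(\mathbf{w}) = F(\mathbf{w})$, so $\|\mathbf{w} - \mathbf{q}\|_1 \le \|F(\mathbf{w}) - \mathbf{q}\|_1 = \|F(\mathbf{w}) - F(\mathbf{q})\|_1$, using $F(\mathbf{q}) = \mathbf{q}$.

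I expect no real obstacle. The only point needing care is the characterisation of equality in~\eqref{eq:no-multiply}. The quantity $\|F(\mathbf{y}) - F(\mathbf{x})\|_1$ counts the returning tokens along any path, because increments are non-negative unit vectors and hence there are no cancellations. So it equals $\|\mathbf{y} - \mathbf{x}\|_1$ exactly when no step is absorbed, and this holds independently of the path chosen. This yields the conclusion simultaneously for every path from $\mathbf{m}$ to $\mathbf{w}$ and every path from $\mathbf{m}$ to $\mathbf{q}$.
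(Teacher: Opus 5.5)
Your proposal is correct and follows essentially the same argument as the paper: the chain $\|\mathbf{w}-\mathbf{q}\|_1 \le \|F(\mathbf{w})-F(\mathbf{q})\|_1 \le \|F(\mathbf{w})-F(\mathbf{m})\|_1 + \|F(\mathbf{q})-F(\mathbf{m})\|_1 \le \|\mathbf{w}-\mathbf{m}\|_1 + \|\mathbf{q}-\mathbf{m}\|_1 = \|\mathbf{w}-\mathbf{q}\|_1$, whose first inequality is trivial when $\mathbf{w}$ is a fixed point and otherwise follows from the $\ell_1$-half-space definition with $\varepsilon = 1$. Equality throughout forces equality in both instances of~\eqref{eq:no-multiply}, and the path-independent count of returning tokens then gives the claim for every path.
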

\begin{proof}
  We show that $\|\mathbf{w} - \mathbf{q}\|_1 \le \|F(\mathbf{w}) - F(\mathbf{m})\|_1 + \|F(\mathbf{q}) - F(\mathbf{m})\|_1$, which forces equality in~\eqref{eq:no-multiply} between $\mathbf{m}$ and $\mathbf{w}$ and between $\mathbf{m}$ and $\mathbf{q}$.
  If $\mathbf{w}$ is a fixed point, then $\|\mathbf{w} - \mathbf{q}\|_1 = \|F(\mathbf{w}) - F(\mathbf{q})\|_1$.
  Otherwise, $\mathbf{q} \in \halfspace{\mathbf{w}}{-\surplus(\mathbf{w})}$, and \zcref{def:half-space} with $\varepsilon = 1$ gives $\|\mathbf{w} - \mathbf{q}\|_1 \le \|\mathbf{w} + \surplus(\mathbf{w}) - \mathbf{q}\|_1 = \|F(\mathbf{w}) - F(\mathbf{q})\|_1$.
  In both cases, the triangle inequality and~\eqref{eq:no-multiply} give
  \begin{align*}
	\|\mathbf{w} - \mathbf{q}\|_1 &\le \|F(\mathbf{w}) - F(\mathbf{q})\|_1 \le \|F(\mathbf{w}) - F(\mathbf{m})\|_1 + \|F(\mathbf{q}) - F(\mathbf{m})\|_1\\
	&\le \|\mathbf{w} - \mathbf{m}\|_1 + \|\mathbf{q} - \mathbf{m}\|_1,
  \end{align*}
  and the right-hand side equals $\|\mathbf{w} - \mathbf{q}\|_1$, since $\mathbf{m}=\min\{\mathbf{w}, \mathbf{q}\}$.
  So equality holds throughout, in particular in~\eqref{eq:no-multiply} for the pairs $\mathbf{m} \le \mathbf{w}$ and $\mathbf{m} \le \mathbf{q}$.
  This means that every token of every path from $\mathbf{m}$ to $\mathbf{w}$ or to $\mathbf{q}$ returns, and the assertion follows.
\end{proof}

In the second case of \zcref{lem:tie}, 
$F(\mathbf{w})$ and $\mathbf{w}$ have the same sum of entries.
Indeed, every step of a path of guesses from $\mathbf{m}$ to $\mathbf{q}$ increases the sum of the entries of the guess by one.
By \zcref{lem:tie}, it also increases the sum of the entries of $F$
by one. Summing over the path gives $\sum_{i\in[k]}F(\mathbf{q})_i-\sum_{i\in[k]}F(\mathbf{m})_i
=\sum_{i\in[k]}q_i-\sum_{i\in[k]}m_i$.
Since $F(\mathbf{q})=\mathbf{q}$, we obtain
$\sum_{i\in[k]}F(\mathbf{m})_i=\sum_{i\in[k]}m_i$.
So $F(\mathbf{m})$ and $\mathbf{m}$ have the same sum of entries.
Every token of a path from $\mathbf{m}$ to $\mathbf{w}$ returns as well, so $\mathbf{w}$ has the same property.
So a fixed point lies on the wrong side of $\mathbf{w}$ only if $\mathbf{w}$ is a \emph{tie}, that is, a guess that is not a fixed point and  
$F(\mathbf{w})$ and $\mathbf{w}$ have the same sum of entries.
Ties do occur. The map $F(w_1, w_2) \coloneqq (w_2, w_1)$ on $[0, 1]^2$ has unit increments.
The guess $\mathbf{w} = (1, 0)$ has surplus $(-1, 1)$, both fixed points $(0, 0)$ and $(1, 1)$ lie in $\halfspace{\mathbf{w}}{-\surplus(\mathbf{w})}$, and the token that pivot~$1$ adds on the way from $(0, 0)$ to $\mathbf{w}$ returns at pivot~$2$.
The leaky map of \zcref{sec:sinks} makes ties harmless.
\zcref{fig:removed} shows a removed $\ell_1$-half-space at a guess that is not a tie.

\begin{figure}[ht]
	\centering
	\begin{tikzpicture}[scale=.8]
		\fill[black!12] (-0.4,1.6) rectangle (4.4,3.4);
		\foreach \x in {0,...,4} {
			\foreach \y in {0,...,3} {
				\node[vertex] at (\x,\y) {};
			}
			\node at (\x,-0.6) {$\x$};
		}
		\foreach \y in {0,...,3} {
			\node at (-0.8,\y) {$\y$};
		}
		\node at (2,-1.2) {$w_a$};
		\node at (-1.5,1.5) {$w_b$};
		\node[blackvertex, scale=1.4] at (2,1) {};
		\node[blackvertex, scale=1.4] at (3,1) {};
		\node[draw, thick, minimum size=9pt, inner sep=0pt] at (2,2) {};
	\end{tikzpicture}
	\caption{The guesses $(w_a, w_b)$ with $w_a \le 4$ and $w_b \le 3$ for the pivots $a, b$ in the instance of \zcref{fig:fixed-point-example}. The fixed points of $\gmap_{\{a, b\}}$ in this range are $(2, 1)$ and $(3, 1)$ (black). The guess $\mathbf{w} = (2, 2)$ (square) gives $\gmap_{\{a, b\}}(\mathbf{w}) = (2, 1)$, so $\surplus(\mathbf{w}) = (0, -1)$. The pivot $a$ receives as many tokens as it sends, and the pivot $b$ receives one token less. By \zcref{lem:orthant}, $\halfspace{\mathbf{w}}{-\surplus(\mathbf{w})}$ consists of the guesses with $w_b \ge 2$ (shaded), and since the entries of $\surplus(\mathbf{w})$ do not sum to zero, none of them is consistent, by the remark after \zcref{lem:tie}.}
	\label{fig:removed}
\end{figure}

\subsection{The leaky map}
\label{sec:sinks}
If a fixed point lies on the wrong side of a tie, \zcref{lem:tie} says that every token added to reach either guess from their entrywise minimum returns to a pivot. The same holds for two fixed points. We now modify the map to absorb tokens at regular intervals, so that either kind of pair must be close in every coordinate.
This is similar in spirit to the discounting of \textcite[Section~4.2]{Haslebacher2025}.
In that technique, the map that moves every token one step is multiplied by a factor smaller than one, so that a fraction of the tokens disappears in every step and the map shrinks $\ell_1$-distances by this factor~\cite[Corollary~16]{Haslebacher2025}.
We assume $N \ge 1$ in the rest of \zcref{sec:counting}, since the case $N = 0$ is trivial.

\begin{definition}[Leaky map and rounding]
\label{def:leaky-map}
Let $\lambda \coloneqq 16(kN)^2$ and let $\Lambda \coloneqq \{0, \lambda/4, \lambda/2, 3\lambda/4\}$.
The \emph{leaky map} $F'$ and the \emph{rounding} $\rd$ are defined, for $\mathbf{y} \in [0, \lambda N]^k$, by
\begin{equation*}
	F'(\mathbf{y}) \coloneqq \sum_{a \in [0, \lambda - 1] \setminus \Lambda}F\Bigl(\Bigl\lfloor\frac{\mathbf{y} + a\1}{\lambda}\Bigr\rfloor\Bigr)\qquad\text{and}\qquad\rd(\mathbf{y}) \coloneqq \Bigl\lfloor\frac{\mathbf{y} + (\lambda/2)\1}{\lambda}\Bigr\rfloor\enspace.
\end{equation*}
\end{definition}

Imagine $\lambda$ copies of the set of pivots. For each pivot $i$, split its $y_i$ tokens as evenly as possible among the copies: copy $a$ gets $\lfloor(y_i+a)/\lambda\rfloor$ tokens at that pivot.
Apply~$F$ to the token counts of each copy.
The map $F'$ adds the arrivals from all copies except the four with indices in $\Lambda$; their arrivals are discarded, which creates the \emph{leak}.
The map $\rd$ turns $\mathbf{y}$ into a guess in the original box by rounding $\mathbf{y}/\lambda$. Finally, $\surplus'(\mathbf{y})\coloneqq F'(\mathbf{y})-\mathbf{y}$ records, at each pivot, the difference between the arrivals counted by $F'$ and the tokens in the guess.

Recall that $\mathbf{q}^*$ denotes the least fixed point of $F$.
\begin{lemma}
\label{lem:sinks}
The leaky map is a map $F' \colon [0, \lambda N]^k \to [0, \lambda N]^k$ with unit increments.
For every $\mathbf{y} \in [0, \lambda N]^k$, we have $\rd(\mathbf{y}) \in [0, N]^k$, and $F'(\mathbf{y})$ and $F(\rd(\mathbf{y}))$ can be computed with at most $k + 1$ evaluations of $F$.
Every fixed point $\mathbf{y}$ of $F'$ satisfies $\rd(\mathbf{y}) = \mathbf{q}^*$.
Moreover, if $\mathbf{y} \in [0, \lambda N]^k$ and $\rd(\mathbf{y}) \ne \mathbf{q}^*$, then $\surplus'(\mathbf{y}) \ne \0$, and $\halfspace{\mathbf{y}}{-\surplus'(\mathbf{y})}$ contains no fixed point of~$F'$.
\end{lemma}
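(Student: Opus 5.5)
The plan is to check the statement piece by piece: the range and unit increments of $F'$, the evaluation count, the rounding of fixed points, and finally the half-space claim. Write $\mathbf{x}_a \coloneqq \lfloor(\mathbf{y} + a\1)/\lambda\rfloor$ for the copy $a \in [0, \lambda - 1]$.

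\emph{Range and unit increments.} For $\mathbf{y} \in [0, \lambda N]^k$ we have $\0 \le \mathbf{x}_a \le \lfloor(\lambda N + \lambda - 1)/\lambda\rfloor\1 = N\1$. So every term of $F'$ is defined and lies in $[0, N]^k$. Since there are $\lambda - 4$ summands, $F'(\mathbf{y}) \in [0, (\lambda - 4)N]^k \subseteq [0, \lambda N]^k$. The same computation with $a = \lambda/2$ gives $\rd(\mathbf{y}) \in [0, N]^k$.

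Raising $y_i$ by one changes $\lfloor(y_i + a)/\lambda\rfloor$ for exactly one $a \in [0, \lambda - 1]$, namely the $a$ with $y_i + a + 1 \equiv 0 \pmod\lambda$. So exactly one copy moves from $\mathbf{x}_a$ to $\mathbf{x}_a + \e_i$, and by the unit increments of $F$ its term changes by $\0$ or by some $\e_j$. If $a \in \Lambda$, the term is discarded and the change is $\0$. Hence $F'$ has unit increments.

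\emph{Evaluation count.} As $a$ runs from $0$ to $\lambda - 1$, the vectors $\mathbf{x}_a$ form a chain in which coordinate $i$ increases exactly once, at the threshold $a = \lambda - (y_i \bmod \lambda)$, and not at all when $y_i \equiv 0 \pmod\lambda$. So at most $k + 1$ distinct vectors occur. I will sort the thresholds, evaluate $F$ once per distinct vector, and multiply each value by the number of indices $a \notin \Lambda$ that give this vector. The vector $\rd(\mathbf{y}) = \mathbf{x}_{\lambda/2}$ is one of them, so $F(\rd(\mathbf{y}))$ comes for free.

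\emph{Fixed points round to $\mathbf{q}^*$.} I would first prove a \emph{closeness principle} for $F'$. Let $\mathbf{y}, \mathbf{p}$ be such that every step of every path of guesses from $\mathbf{m} = \min\{\mathbf{y}, \mathbf{p}\}$ to either of them returns. A run of $\lambda/4$ consecutive increments of coordinate $i$ passes through a copy index in $\Lambda$, and that token is absorbed. Hence $|y_i - m_i| < \lambda/4$ and $|p_i - m_i| < \lambda/4$, so $|y_i - p_i| < \lambda/4$ for every $i$. By \zcref{lem:tie}, applied to $F'$ on $[0, \lambda N]^k$ (only unit increments are used), this holds for any two fixed points of $F'$. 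It therefore suffices to exhibit one fixed point of $F'$ that sits well inside the cell that rounds to $\mathbf{q}^*$, and to use the offsets in $\Lambda$ to turn ``within $\lambda/4$'' into ``same rounding''.

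For the fixed point I take the least fixed point $\mathbf{p}^*$ of $F'$ from \zcref{lem:kleene}.
\begin{itemize}
\item Upper bound: from $F'(\lambda\mathbf{q}^*) = (\lambda - 4)\mathbf{q}^* \le \lambda\mathbf{q}^*$ and the Kleene induction I get $\mathbf{p}^* \le \lambda\mathbf{q}^*$.
\item Lower bound: I use $F'(\mathbf{y}) \ge (\lambda - 4)F(\lfloor\mathbf{y}/\lambda\rfloor)$ together with the fact that the Kleene sequence of $F$ reaches $\mathbf{q}^*$ in at most $kN$ steps. Each step loses at most $4N$ per coordinate plus rounding, so after the $kN$ steps the loss is $O(kN^2)$. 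Since $\lambda = 16(kN)^2$, this is far below $\lambda/4$, and I aim for $\mathbf{p}^* \ge \lambda\mathbf{q}^* - \lambda/8\cdot\1$.
\end{itemize}
This bookkeeping is the main obstacle. The floors must be tracked carefully, and the lower estimate must iterate the genuine map $F$, not a shrunk copy of it. If the direct estimate fails, I would instead bound the surplus of $F'$ along the chain by the at most $4$ leaked copies per coordinate, and use \eqref{eq:no-multiply}.

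Then every fixed point $\mathbf{p}$ satisfies $|p_i - p^*_i| < \lambda/4$. Together with $\lambda q^*_i - \lambda/8 \le p^*_i \le \lambda q^*_i$, this gives $p_i \in (\lambda q^*_i - \lambda/2, \lambda q^*_i + \lambda/2)$, that is, $\rd(\mathbf{p}) = \mathbf{q}^*$. If this interval is too loose at the ends, I will sharpen the closeness principle using which copies in $\Lambda$ are crossed.

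\emph{The half-space claim.} Let $\rd(\mathbf{y}) \ne \mathbf{q}^*$. Then $\mathbf{y}$ is not a fixed point of $F'$ by the previous part, so $\surplus'(\mathbf{y}) \ne \0$. Suppose some fixed point $\mathbf{p}$ of $F'$ lay in $\halfspace{\mathbf{y}}{-\surplus'(\mathbf{y})}$. By \zcref{lem:tie} for $F'$ and the closeness principle, $|y_i - p_i| < \lambda/4$ for all $i$. I then need to sharpen this so that $\mathbf{y}$ lands in the same rounding cell as $\mathbf{p}$, which would give $\rd(\mathbf{y}) = \rd(\mathbf{p}) = \mathbf{q}^*$, a contradiction.

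The sharpening goes as follows. On the paths from $\mathbf{m}$ no copy in $\Lambda$ may be crossed, which confines $y_i$ and $p_i$ to a common gap between consecutive points of $\Lambda$ modulo $\lambda$. Together with the location of $p_i$ near $\lambda q^*_i$ from the previous part, this gives $\rd(\mathbf{y})_i = q^*_i$ for every $i$.
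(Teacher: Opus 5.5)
\paragraph{The gap.} The missing step is the lower bound on the least fixed point $\mathbf{p}^*$ of $F'$. You flag it as the main obstacle but do not establish it.

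Your primary estimate $F'(\mathbf{y}) \ge (\lambda - 4)F(\lfloor\mathbf{y}/\lambda\rfloor)$ cannot be iterated. The floor loses a whole unit in every coordinate where $\mathbf{y}$ falls even slightly short of a multiple of $\lambda$. For example, if $F(\0) = \e_1$, then $F'(\0) = (\lambda - 4)\e_1$ and $\lfloor F'(\0)/\lambda\rfloor = \0$. The resulting chain of lower bounds then never gets past $(\lambda - 4)\e_1$.

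Your fallback points in the right direction, but it has to be run in $\ell_1$. Both \eqref{eq:no-multiply} and \eqref{eq:non-expansive} control only $\ell_1$-distances, and deficits at different pivots can return at the same pivot. So there is no per-coordinate loss of $4N$ per step, and no $\mathcal{O}(kN^2)$ total.

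What works, and what the paper does, is to compare the Kleene iterates $\mathbf{u}'_i = F'^i(\0)$ with $\lambda\mathbf{u}_i$, where $\mathbf{u}_i = F^i(\0)$. Since $F'(\lambda\mathbf{u}_i) = (\lambda - 4)\mathbf{u}_{i + 1}$ and $F'$ is $\ell_1$-non-expansive, you get $\|\lambda\mathbf{u}_{i + 1} - \mathbf{u}'_{i + 1}\|_1 \le \|\lambda\mathbf{u}_i - \mathbf{u}'_i\|_1 + 4\|\mathbf{u}_{i + 1}\|_1 \le \|\lambda\mathbf{u}_i - \mathbf{u}'_i\|_1 + 4kN$. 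The paper phrases this with the leak-free map $F_\lambda$, which satisfies $F_\lambda(\lambda\mathbf{x}) = \lambda F(\mathbf{x})$. After the $m \le kN$ steps that $F$ needs to reach $\mathbf{q}^*$, this gives $\|\lambda\mathbf{q}^* - \mathbf{u}'_m\|_1 \le 4(kN)^2 = \lambda/4$. Hence $\mathbf{p}^* \ge \mathbf{u}'_m \ge \lambda\mathbf{q}^* - (\lambda/4)\1$. This margin is exactly $\lambda/4$, not far below it, so the argument does not yield your target of $\lambda/8$.

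\paragraph{The rest of your plan.} The $\lambda/4$ bound is nevertheless enough, and the rest of your plan is sound. Your sharpening at the end is correct and takes a genuinely different route from the paper's:
\begin{itemize}
\item A step from $v$ to $v + 1$ in coordinate $i$ is leaky exactly when $\lambda/4$ divides $v + 1$.
\item So absorption-free paths from $\mathbf{m}$ force $\lfloor 4y_i/\lambda\rfloor = \lfloor 4p_i/\lambda\rfloor$.
\item Moreover, $\rd(\mathbf{y})_i = \lfloor(\lfloor 4y_i/\lambda\rfloor + 2)/4\rfloor$ depends only on this block index.
\end{itemize}
Hence any two fixed points of $F'$ round alike, and so do $\mathbf{y}$ and any fixed point on its wrong side. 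The half-space claim then follows from $\rd(\mathbf{p}) = \mathbf{q}^*$ alone, without using where $p_i$ lies. Likewise, $\rd(\mathbf{p}^*) = \mathbf{q}^*$ needs only $\lambda\mathbf{q}^* - (\lambda/2)\1 \le \mathbf{p}^* \le \lambda\mathbf{q}^*$.

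The paper instead works with the distance bound $|q_i - y_i| \le \lambda/4 - 1$. It pins every fixed point into $[\lambda q^*_i - \lambda/4, \lambda q^*_i + \lambda/4 - 1]$ and concludes by interval arithmetic, which uses the full $\lambda/4$ margin. Your block argument leaves a factor of two to spare.
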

\begin{proof}
For each $a\in[0,\lambda-1]$, let $\mathbf{g}_a(\mathbf{y})\coloneqq\lfloor(\mathbf{y}+a\1)/\lambda\rfloor$ be the vector of token counts assigned to copy $a$.
To see what the leak changes, first count the arrivals from \emph{all} copies: set $F_\lambda(\mathbf{y})\coloneqq\sum_{a=0}^{\lambda-1}F(\mathbf{g}_a(\mathbf{y}))$.
The leaky map uses the same token counts but omits the outputs of the four copies in $\Lambda$, so $F'(\mathbf{y})=F_\lambda(\mathbf{y})-\sum_{a\in\Lambda}F(\mathbf{g}_a(\mathbf{y}))$.
We use~$F_\lambda$ as a reference: on a scaled guess $\lambda\mathbf{x}$ it equals $\lambda F(\mathbf{x})$, allowing us to measure how much the omitted outputs change the fixed-point search.

We first show that $F'$ and $F_\lambda$ have unit increments and map $[0, \lambda N]^k$ to itself, and that~$F'$ absorbs one of any $\lambda/4$ consecutive tokens of a pivot.
For $\mathbf{y} \in [0, \lambda N]^k$ and every $a$, we have $0 \le \mathbf{g}_a(\mathbf{y})_i \le \lfloor(\lambda N + \lambda - 1)/\lambda\rfloor = N$, so $\mathbf{g}_a(\mathbf{y}) \in [0, N]^k$.
Consider a step from~$\mathbf{y}$ to $\mathbf{y} + \e_i$ in $[0, \lambda N]^k$.
The difference $\lfloor(y_i + 1 + a)/\lambda\rfloor - \lfloor(y_i + a)/\lambda\rfloor$ is one if~$\lambda$ divides $y_i + 1 + a$ and zero otherwise.
So exactly one copy receives the extra token at pivot~$i$, namely the copy $a^*$ with $a^* \equiv -(y_i + 1) \pmod\lambda$, and $\mathbf{g}_{a^*}(\mathbf{y} + \e_i) = \mathbf{g}_{a^*}(\mathbf{y}) + \e_i$.
Hence $F'(\mathbf{y} + \e_i) - F'(\mathbf{y})$ is $F(\mathbf{g}_{a^*}(\mathbf{y}) + \e_i) - F(\mathbf{g}_{a^*}(\mathbf{y})) \in \{\0, \e_1, \ldots, \e_k\}$ if $a^* \notin \Lambda$, and~$\0$ if $a^* \in \Lambda$.
The same holds for~$F_\lambda$ without the exception $a^* \in \Lambda$.
Also $\0 \le F' \le F_\lambda$, and $F_\lambda(\mathbf{y})_j \le \lambda N$ (as $\mathbf{g}_a(\mathbf{y})$ lies in $[0,  N]^k$ and $F$ maps this box to itself). So~$F'$ and~$F_\lambda$ map $[0, \lambda N]^k$ to itself.
As $y_i$ runs through~$\lambda/4$ consecutive values, $a^*$ runs through $\lambda/4$ cyclically consecutive residues modulo $\lambda$, one of which is a multiple of~$\lambda/4$, that is, in $\Lambda$.
In any $\lambda/4$ consecutive steps that increase the same coordinate, at least one leaves $F'$ unchanged because it reaches a leaky copy.

Next, we show the claims about $\rd$.
Every coordinate of $\mathbf{g}_a(\mathbf{y})$ is non-decreasing in $a$, and it increases at most once as $a$ runs from $0$ to $\lambda - 1$.
A change can occur only when $a=\lambda-(y_i\bmod\lambda)$ for some $i\in[k]$. These at most $k$ change points divide $[0,\lambda-1]$ into at most $k+1$ intervals, and $\mathbf g_a(\mathbf y)$ is constant on each of them. So it takes at most $k + 1$ values, and $\rd(\mathbf y)=\mathbf g_{\lambda/2}(\mathbf y)$ is one of them.
Hence $\rd(\mathbf{y}) \in [0, N]^k$, and $F'(\mathbf{y})$ and $F(\rd(\mathbf{y}))$ are computed with at most $k + 1$ evaluations of $F$.
To compute~$F'(\mathbf{y})$, evaluate $F$ once for each such interval $J$, multiply the result by $|J \setminus \Lambda|$, and add the results. Apart from these evaluations of $F$, this takes $\mathcal{O}(k^2)$ arithmetic operations.

Next, we show that the fixed points of $F'$ are close to each other.
We also show that a fixed point $\mathbf{q}$ is close to a guess $\mathbf{y}$ if $\mathbf{q}$ lies on the wrong side of $\mathbf{y}$.
Let $\mathbf{q}$ be a fixed point of $F'$, and let $\mathbf{y} \in [0, \lambda N]^k$ be a fixed point of $F'$ as well, or a guess with $\surplus'(\mathbf{y}) \ne \0$ and $\mathbf{q} \in \halfspace{\mathbf{y}}{-\surplus'(\mathbf{y})}$.
We show that $|q_i - y_i| \le \lambda/4 - 1$ for all $i \in [k]$.
Let $\mathbf{m}=\min\{\mathbf{y}, \mathbf{q}\}$,
and let $i \in [k]$ with $q_i > y_i$.
A path from $\mathbf{m}$ to $\mathbf{q}$ can begin with the $q_i - y_i$ steps in coordinate $i$, which are consecutive steps along one line parallel to an axis.
By \zcref{lem:tie}, applied to the map $F'$, which has unit increments on the box $[0, \lambda N]^k$, 
none of their tokens is absorbed by $F'$.
But, as shown above, every $\lambda/4$ consecutive steps in coordinate $i$ include an absorbed token.
Thus $q_i - y_i \le \lambda/4 - 1$.
Now let $i \in [k]$ with $q_i < y_i$.
A path from $\mathbf{m}$ to $\mathbf{y}$ can begin with the $y_i - q_i$ steps in coordinate $i$, and \zcref{lem:tie} says that none of their tokens is absorbed by $F'$ either.
So the same argument gives $y_i - q_i \le \lambda/4 - 1$.
In particular, fixed points $\mathbf{p} \le \mathbf{p}'$ of $F'$ satisfy $\mathbf{p}' \le \mathbf{p} + (\lambda/4 - 1)\1$.

Now we show that the fixed points of $F'$ are close to $\lambda\mathbf{q}^*$.
Let $\mathbf{u}_i \coloneqq F^i(\0)$, $\mathbf{u}^\lambda_i \coloneqq F_\lambda^i(\0)$ and $\mathbf{u}'_i \coloneqq F'^i(\0)$ be the iterations of \zcref{lem:kleene} for the three maps, so that $\mathbf{u}_m = \mathbf{q}^*$ for some $m \le kN$.
Since $\mathbf{g}_a(\lambda\mathbf{x}) = \mathbf{x}$ for all $a$ and all $\mathbf{x} \in [0, N]^k$, we have $F_\lambda(\lambda\mathbf{x}) = \lambda F(\mathbf{x})$.
By induction, $\mathbf{u}^\lambda_i = \lambda\mathbf{u}_i$, so $\mathbf{u}^\lambda_m = \lambda\mathbf{q}^*$.
Since $F' \le F_\lambda$ and $F_\lambda$ is monotone, $\mathbf{u}'_i \le \mathbf{u}^\lambda_i$ by induction.
The map $F_\lambda$ has unit increments, so~\eqref{eq:no-multiply} holds for it, and $\|F(\mathbf{u})\|_1 \le kN$ for $\mathbf{u} \in [0, N]^k$, so
\begin{align*}
	\|\mathbf{u}^\lambda_{i + 1} - \mathbf{u}'_{i + 1}\|_1
	& \le \|F_\lambda(\mathbf{u}^\lambda_i) - F_\lambda(\mathbf{u}'_i)\|_1 + \|F_\lambda(\mathbf{u}'_i) - F'(\mathbf{u}'_i)\|_1\\
	& \le \|\mathbf{u}^\lambda_i - \mathbf{u}'_i\|_1 + \sum_{a \in \Lambda}\|F(\mathbf{g}_a(\mathbf{u}'_i))\|_1 \\
	&\le \|\mathbf{u}^\lambda_i - \mathbf{u}'_i\|_1 + 4kN\enspace.
\end{align*}
Hence $\0 \le \lambda\mathbf{q}^* - \mathbf{u}'_m$ and $\|\lambda\mathbf{q}^* - \mathbf{u}'_m\|_1 \le 4kN \cdot m \le 4(kN)^2 = \lambda/4$.
Moreover, $F'(\lambda\mathbf{q}^*) \le F_\lambda(\lambda\mathbf{q}^*) = \lambda\mathbf{q}^*$, so $\mathbf{u}'_i \le \lambda\mathbf{q}^*$ for all $i$, by induction and monotonicity of $F'$.
By \zcref{lem:kleene}, the non-decreasing sequence $(\mathbf{u}'_i)$ becomes stationary at the least fixed point $\mathbf{q}'$ of $F'$, so $\mathbf{u}'_m \le \mathbf{q}' \le \lambda\mathbf{q}^*$.
Every fixed point $\mathbf{y}$ of $F'$ satisfies $\mathbf{y} \ge \mathbf{q}' \ge \mathbf{u}'_m \ge \lambda\mathbf{q}^* - (\lambda/4)\1$.
By the previous paragraph applied to $\mathbf{q}' \le \mathbf{y}$, also $\mathbf{y} \le \mathbf{q}' + (\lambda/4 - 1)\1 \le \lambda\mathbf{q}^* + (\lambda/4 - 1)\1$.
So $\lambda q^*_i + \lambda/4\le y_i + \lambda/2\le\lambda q^*_i + 3\lambda/4 - 1$ for every $i$. 
Dividing by $\lambda$ and taking the floor gives
$\rd(\mathbf y)_i=\lfloor(y_i+\lambda/2)/\lambda\rfloor=q_i^*$
for every $i$, and thus $\rd(\mathbf{y}) = \mathbf{q}^*$.

Finally, let $\mathbf{y} \in [0, \lambda N]^k$ be such that $\rd(\mathbf{y}) \ne \mathbf{q}^*$.
Then $\mathbf{y}$ is not a fixed point of $F'$, so $\surplus'(\mathbf{y}) \ne \0$.
Moreover, $y_i + \lambda/2$ lies outside $\{\lambda q^*_i, \ldots, \lambda q^*_i + \lambda - 1\}$ for some $i$, as otherwise $\rd(\mathbf{y}) = \mathbf{q}^*$.
By the bounds of the previous paragraph, $|y_i - q_i| \ge \lambda/4 + 1$ for every fixed point~$\mathbf{q}$ of $F'$.
So no fixed point $\mathbf{q}$ of $F'$ lies in $\halfspace{\mathbf{y}}{-\surplus'(\mathbf{y})}$, since otherwise $|q_i - y_i| \le \lambda/4 - 1$ for all $i$, as shown above.
\end{proof}

\subsection{Cutting}
\label{sec:cutting}
We now describe an algorithm for finding the least fixed point $\mathbf{q}^*$  of $F$.
It evaluates the leaky map $F'$ at successive guesses and, when a guess
is not a fixed point of $F'$, removes an $\ell_1$-half-space from the
remaining candidates. More precisely, the \emph{cutting algorithm} keeps a set $C \subseteq [0, \lambda N]^k$ of \emph{candidates}, initially $C = [0, \lambda N]^k$, and it repeats the following \emph{round} as long as $C$ is non-empty.
It chooses a guess $\mathbf{w} \in \bbox(C) \cap \mathbb{Z}^k$, and it computes $F'(\mathbf{w})$, which gives the surplus $\surplus'(\mathbf{w})$, and $F(\rd(\mathbf{w}))$.
If $\rd(\mathbf{w})$ is a fixed point of $F$, it records $\rd(\mathbf{w})$.
If $\surplus'(\mathbf{w}) = \0$, it outputs $\rd(\mathbf{w})$, which is $\mathbf{q}^*$ by \zcref{lem:sinks}.
Otherwise, it removes the $\ell_1$-half-space $\halfspace{\mathbf{w}}{-\surplus'(\mathbf{w})}$ from~$C$.
If~$C$ becomes empty, it outputs the recorded fixed point of $F$ with the smallest sum of entries.
A fixed point $\rd(\mathbf{w})$ of $F$ is output at once only if $\surplus'(\mathbf{w}) = \0$, since otherwise it need not be the least one.
A point $\mathbf{w}$ is \emph{$\alpha$-deep} in a finite set $C \subset \mathbb{R}^k$ if $|C \cap \halfspace{\mathbf{w}}{\mathbf{v}}| \ge \alpha|C|$ for all $\mathbf{v} \in \mathbb{R}^k \setminus \{\0\}$.
With deep guesses, this is a variant of the search for approximate fixed points of maps that shrink distances by a constant factor~\cite[Section~4]{HaslebacherLSW2025}, to which \garrival can be reduced for the $\ell_1$-distance.
We count the evaluations as in their proof~\cite[Theorem~4.3]{HaslebacherLSW2025}.

\begin{lemma}
\label{lem:cutting}
  Regardless of how each guess $\mathbf{w}$ is chosen, the following hold.
\begin{itemize}
  \item If the search stops, then it outputs the least fixed point $\mathbf{q}^*$ of $F$.
  \item After $r$ rounds, $C$ is the union of at most $(2r + 1)^k$ disjoint boxes, and the $r^{\textnormal{th}}$ round maintains this representation with $\mathcal{O}(k(2r + 1)^k)$ arithmetic operations on integers with $\mathcal{O}(\log(kN + 2))$ bits.
  \item If, for some real $\alpha$ with $0 < \alpha \le 1$, every guess $\mathbf{w}$ is $\alpha$-deep in the current set $C$, then the search stops after $\mathcal{O}(k \cdot \log(kN + 2)/\alpha)$ rounds and $\mathcal{O}(k^2 \cdot \log(kN + 2)/\alpha)$ evaluations of~$F$.
\end{itemize}
\end{lemma}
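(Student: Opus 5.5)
I would prove the three items separately, in order: correctness, the box representation, and the round count.

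\emph{Correctness.} Suppose the search outputs in a round with $\surplus'(\mathbf{w}) = \0$. Then $\mathbf{w}$ is a fixed point of $F'$, and \zcref{lem:sinks} gives $\rd(\mathbf{w}) = \mathbf{q}^*$. Otherwise $C$ became empty. Here I first note that $F'$ has a fixed point $\mathbf{y}$, by \zcref{lem:kleene} applied to $F'$, which is monotone with unit increments by \zcref{lem:sinks}. This $\mathbf{y}$ lies in the initial $C$, so it was removed in some round at a guess $\mathbf{w}$ with $\mathbf{y} \in \halfspace{\mathbf{w}}{-\surplus'(\mathbf{w})}$. By the last claim of \zcref{lem:sinks}, this forces $\rd(\mathbf{w}) = \mathbf{q}^*$, and that round recorded $\mathbf{q}^*$, since it is a fixed point of $F$. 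Every recorded point is a fixed point of $F$, hence $\ge \mathbf{q}^*$ entrywise. So $\mathbf{q}^*$ is the unique recorded point of smallest entry sum, and it is output.

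\emph{Box representation.} By \zcref{lem:orthant}, $\halfspace{\mathbf{w}}{\mathbf{v}}$ is a union of the ``orthant cells'' determined by $\sgn(\mathbf{z} - \mathbf{w})$. Each of the three conditions $z_i < w_i$, $z_i = w_i$, $z_i > w_i$ is an integer interval. So the integer grid is cut by the $r$ guesses' coordinates into at most $2r + 1$ intervals per axis: the at most $r$ singletons $\{w_i\}$ and the at most $r + 1$ gaps between them. After $r$ rounds, $C$ is a union of cells of this product grid, which gives at most $(2r + 1)^k$ disjoint boxes. To update, I would refine each existing box along the new guess's coordinates into at most $3^k$ pieces. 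A cheaper bookkeeping is to keep the grid cells directly: for each cell, the sign vector is constant, so the test of \zcref{lem:orthant} takes $\mathcal{O}(k)$ operations, giving $\mathcal{O}(k(2r + 1)^k)$ operations per round. All numbers are at most $\lambda N = 16k^2N^3$, hence have $\mathcal{O}(\log(kN + 2))$ bits.

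\emph{Round count.} If $\mathbf{w}$ is $\alpha$-deep, then a round that does not stop removes $|C \cap \halfspace{\mathbf{w}}{-\surplus'(\mathbf{w})}| \ge \alpha|C|$ candidates. So $|C| \le (1 - \alpha)^r(\lambda N + 1)^k$. Since $C$ is non-empty while rounds continue, this gives
\[
r \le \frac{k \ln(\lambda N + 1)}{\alpha} = \mathcal{O}\Bigl(\frac{k \log(kN + 2)}{\alpha}\Bigr).
\]
Each round evaluates $F'(\mathbf{w})$ and $F(\rd(\mathbf{w}))$ with $k + 1$ evaluations of $F$ in total, by \zcref{lem:sinks}. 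This gives the $\mathcal{O}(k^2 \log(kN + 2)/\alpha)$ bound on evaluations.

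The main obstacle I expect is the precise box-count bookkeeping. The cells must be taken as the product of per-axis intervals from all past guesses, not refined independently per box; otherwise the count grows as $3^{kr}$ rather than $(2r + 1)^k$.
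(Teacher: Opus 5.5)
Your proposal is correct and follows the paper's proof essentially step for step. The fixed point of $F'$ that must eventually be removed forces a round with $\rd(\mathbf{w}) = \mathbf{q}^*$, which records $\mathbf{q}^*$. The candidates stay a union of cells of one common grid with at most $2r + 1$ intervals per axis, each tested by an $\mathcal{O}(k)$ sign-vector check. The bound $1 \le (1 - \alpha)^r(\lambda N + 1)^k$, with at most $k + 1$ evaluations of $F$ per round, gives the counts.

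The only detail worth stating explicitly is that each guess lies in $\bbox(C) \cap \mathbb{Z}^k \subseteq [0, \lambda N]^k$. This is what lets you apply \zcref{lem:sinks} to it.
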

\begin{proof}
  We first show that the output is $\mathbf{q}^*$ if the search stops.
  Every guess lies in $\bbox(C) \cap \mathbb{Z}^k \subseteq [0, \lambda N]^k$.
  If the search stops at a guess $\mathbf{w}$ with $\surplus'(\mathbf{w}) = \0$, then $\mathbf{w}$ is a fixed point of $F'$, so its output $\rd(\mathbf{w})$ is $\mathbf{q}^*$ by \zcref{lem:sinks}.
  Otherwise, the search stops because $C$ has become empty.
  The map $F' \colon [0, \lambda N]^k \to [0, \lambda N]^k$ has unit increments, by \zcref{lem:sinks}, so it is monotone and has a fixed point by \zcref{lem:kleene}.
  This fixed point belongs to the initial set $C = [0, \lambda N]^k$, so some round removes a fixed point of $F'$.
  In this round, $\surplus'(\mathbf{w}) \ne \0$ and $\halfspace{\mathbf{w}}{-\surplus'(\mathbf{w})}$ contains a fixed point of $F'$, so $\rd(\mathbf{w}) = \mathbf{q}^*$ by \zcref{lem:sinks}.
  The round has recorded $\mathbf{q}^*$, since $F(\mathbf{q}^*) = \mathbf{q}^*$.
  Any recorded point is a fixed point of $F$, and so at least~$\mathbf{q}^*$ in every entry by \zcref{lem:kleene}.
  So $\mathbf{q}^*$ is the recorded point with the smallest sum of entries, and it is the output.

  For the second claim, suppose that, in each coordinate, $[0,\lambda N]$ is partitioned into at most $m$ intervals of consecutive integers. The products of one interval from each coordinate partition form a \emph{grid}, and we call these products its \emph{cells}.
  Initially, $C = [0, \lambda N]^k$ is the single cell of the grid with the intervals $[0, \lambda N]$.
  A round refines the grid by cutting, in every coordinate $i$, the interval that contains $w_i$ into the parts below $w_i$, at~$w_i$ and above $w_i$, which adds at most two intervals per coordinate.
  By \zcref{lem:orthant}, whether a point $\mathbf{z}$ lies in the removed set depends only on $\sgn(\mathbf{z} - \mathbf{w})$, which is constant on each cell of the refined grid, so the removed set is a union of such cells.
  So after $r$ rounds, $C$ is a union of cells of a grid with at most $2r + 1$ intervals per coordinate, and the cells are disjoint boxes.
  The $r^{\textnormal{th}}$ round refines the cells and tests, for each cell of the refined grid, if it lies in $\halfspace{\mathbf{w}}{-\surplus'(\mathbf{w})}$.
  The test applies \zcref{lem:orthant} to the sign vector of the cell relative to~$\mathbf{w}$, which takes $\mathcal{O}(k)$ operations per cell.
  All coordinates and all surpluses are integers of absolute value at most $\lambda N = 16k^2N^3$.

  For the third claim, consider a round after which the search continues.
  Then $\surplus'(\mathbf{w}) \ne \0$, and as $\mathbf{w}$ is $\alpha$-deep in $C$, the round removes at least $\alpha|C|$ candidates.
  A round starts only if $|C| \ge 1$, so if $r$ such rounds are followed by another one, then
  \begin{equation*}
    1 \le (1 - \alpha)^r \cdot (\lambda N + 1)^k \le e^{-\alpha r} \cdot (\lambda N + 1)^k,
  \end{equation*}
  as $C$ starts with $(\lambda N + 1)^k$ candidates.
  So $r \le k \cdot \ln(\lambda N + 1)/\alpha$, where $k \cdot \ln(\lambda N + 1) = k \cdot \ln(16k^2N^3 + 1) = \mathcal{O}(k \cdot \log(kN + 2))$.
  So the search stops after $1 + \mathcal{O}(k \cdot \log(kN + 2)/\alpha)$ rounds, and any round uses at most $k + 1$ evaluations of~$F$, by \zcref{lem:sinks}.
\end{proof}

\section{Deep guesses}
\label{sec:deep}
In every round of the cutting algorithm, we deterministically compute a guess that is $\frac1{2(k + 1)}$\nobreakdash-deep in the current candidate set $C$ (\zcref{lem:deep-point}).
\zcref{sec:centerpoints} shows that deep points exist and can be rounded down, \zcref{sec:exact} computes a deepest point of few weighted points by brute force, \zcref{sec:compression} computes a small weighted sample that represents the candidates, and \zcref{sec:solver} combines these steps.

\subsection{\texorpdfstring{$\ell_1$}{l1}-centerpoints}
\label{sec:centerpoints}
We use the $\ell_1$-centerpoint theorem of \textcite[Theorem~3.18]{HaslebacherLSW2025}.
It is stated for a family of distances indexed by a parameter $p$, with $p = 1$ for the $\ell_1$-distance, and with half-spaces $\mathcal{H}^p_{c, v}$~\cite[Definition~3.2]{HaslebacherLSW2025}, which for $p = 1$ are the $\ell_1$-half-spaces of \zcref{def:half-space}.
Their $d$ is the dimension, which is $k$ for us, and the directions~$v$ range over the set $S^{d - 1}$ of vectors of Euclidean norm one.
The statement reads ``Let $p \in [1, \infty) \cup \{\infty\}$ be arbitrary, and let $P \subseteq \mathbb{R}^d$ be a finite set of points.
There exists a point~$c$ such that $|\mathcal{H}^p_{c, v} \cap P| \ge \frac{|P|}{d + 1}$ for all $v \in S^{d - 1}$.''

Since $\halfspace{\mathbf{c}}{\mathbf{v}}$ does not change when $\mathbf{v}$ is multiplied by a positive number (the remark after \zcref{def:half-space}), the directions of Euclidean norm one cover all directions $\mathbf{v} \ne \0$.
For $p = 1$ and $d = k$, the theorem reads as follows.

\begin{proposition}[{\cite[Theorem~3.18]{HaslebacherLSW2025}}, for the $\ell_1$-distance]
\label{thm:hlsw}
  Let $X \subseteq \mathbb{R}^k$ be a finite set of points.
  There is a point $\mathbf{c} \in \mathbb{R}^k$ so that $|\halfspace{\mathbf{c}}{\mathbf{v}} \cap X| \ge \frac{|X|}{k + 1}$ for all $\mathbf{v} \in \mathbb{R}^k \setminus \{\0\}$.
\end{proposition}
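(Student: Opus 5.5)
This is the $\ell_1$-centerpoint theorem of \textcite{HaslebacherLSW2025}, which the paper quotes. My plan is to translate their statement, and then sketch how I would prove it directly via \zcref{lem:orthant}.

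\emph{Translation.} The quoted theorem, with $p = 1$ and $d = k$, gives a point $\mathbf{c}$ with $|\halfspace{\mathbf{c}}{\mathbf{v}} \cap X| \ge |X|/(k+1)$ for every $\mathbf{v}$ of Euclidean norm one. For an arbitrary $\mathbf{v} \ne \0$, I would set $\mathbf{v}' \coloneqq \mathbf{v}/\|\mathbf{v}\|_2$. By the remark after \zcref{def:half-space}, $\halfspace{\mathbf{c}}{\mathbf{v}} = \halfspace{\mathbf{c}}{\mathbf{v}'}$, because the defining condition ranges over all $\varepsilon>0$ and is invariant under rescaling $\varepsilon$. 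This yields the proposition.

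\emph{Direct route.} I would follow the classical centerpoint proof via a Helly-type argument. Suppose, for contradiction, that every $\mathbf{c}$ has a direction $\mathbf{v}$ with $|\halfspace{\mathbf{c}}{\mathbf{v}} \cap X| < |X|/(k+1)$. By \zcref{lem:orthant}, membership in $\halfspace{\mathbf{c}}{\mathbf{v}}$ depends only on $\sgn(\mathbf{z}-\mathbf{c})$, so only finitely many combinatorial types arise.

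A cleaner route is a topological fixed-point argument in the spirit of their reduction to $\ell_1$-contractions. Consider the continuous map that moves $\mathbf{c}$ towards the ``coordinatewise median-weighted'' direction. Concretely, define $\Phi(\mathbf{c})_i$ as the signed imbalance $|\{\mathbf{z}\in X: z_i>c_i\}| - |\{\mathbf{z}\in X: z_i<c_i\}|$, suitably smoothed. A zero of this field via Brouwer, or a point where every sign-pattern-weighted sum is balanced, would give depth $\ge |X|/(k+1)$ by a counting argument over the $k+1$ ``rays'' (the $k$ coordinate directions together with the all-negative pattern).

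The main obstacle is this existence step, namely producing the depth $1/(k+1)$ rather than $1/2^k$. I would first try to pick $\mathbf{c}$ as a fixed point of the $\ell_1$-projection scheme of \textcite{HaslebacherLSW2025}. Since the paper cites the theorem, I would ultimately rely on the citation together with the normalization above.
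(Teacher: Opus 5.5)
Your proposal is correct and matches the paper: the paper likewise takes the cited $\ell_1$-centerpoint theorem with $p = 1$ and $d = k$, and extends it from unit directions to all $\mathbf{v} \ne \0$ by noting that $\halfspace{\mathbf{c}}{\mathbf{v}}$ is unchanged when $\mathbf{v}$ is multiplied by a positive number, which is exactly your normalization $\mathbf{v}' = \mathbf{v}/\|\mathbf{v}\|_2$. Your ``direct route'' sketches (the Helly-type argument and the smoothed coordinatewise imbalance field) are not worked out and would not stand as a proof on their own, but they are unnecessary, since you rely on the citation as the paper does.
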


We work with finite sets $X \subset \mathbb{R}^k$ with positive integer weights, since the compression of \zcref{sec:compression} produces weighted points, and with multisets, where the weight of a point is its multiplicity.
For such a set, $|X \cap A|$ denotes the total weight of the points of $X$ in a set $A$, and $|X|$ the total weight.
The \emph{depth} of a point $\mathbf{c} \in \mathbb{R}^k$ in $X$ is $\dep_X(\mathbf{c}) \coloneqq \min_{\mathbf{v} \ne \0}|X \cap \halfspace{\mathbf{c}}{\mathbf{v}}|$, so a point $\mathbf{w}$ is $\alpha$-deep in a finite set $C$ (\zcref{sec:cutting}) if and only if $\dep_C(\mathbf{w}) \ge \alpha|C|$.

The guesses must be integer points, and rounding down does not hurt, by the following lemma.
Its sign argument also appears in an earlier proof that rounds to the nearest point of a grid~\cite[Lemma~4.4]{HaslebacherLSW2025}.

\begin{lemma}
\label{lem:rounding}
  Let $\mathbf{c}, \mathbf{z} \in \mathbb{R}^k$ and $\mathbf{v} \in \mathbb{R}^k \setminus \{\0\}$.
  If $\mathbf{z} \in \halfspace{\mathbf{c}}{\mathbf{v}}$, then $\lfloor\mathbf{z}\rfloor \in \halfspace{\lfloor\mathbf{c}\rfloor}{\mathbf{v}}$.
  In particular, $\dep_C(\lfloor\mathbf{c}\rfloor) \ge \dep_C(\mathbf{c})$ for every finite weighted set $C \subset \mathbb{Z}^k$.
  Moreover, if $C$ is non-empty and $\mathbf{c} \in \bbox(C)$, then $\lfloor\mathbf{c}\rfloor \in \bbox(C)$.
\end{lemma}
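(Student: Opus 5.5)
The plan is to reduce everything to the sign criterion of \zcref{lem:orthant}: membership of a point in $\halfspace{\mathbf{c}}{\mathbf{v}}$ depends only on its sign vector relative to $\mathbf{c}$. The first claim then follows from a coordinatewise comparison of $\sgn(\mathbf{z} - \mathbf{c})$ with $\sgn(\lfloor\mathbf{z}\rfloor - \lfloor\mathbf{c}\rfloor)$.

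Let $\boldsymbol{\sigma} \coloneqq \sgn(\mathbf{z} - \mathbf{c})$ and $\boldsymbol{\tau} \coloneqq \sgn(\lfloor\mathbf{z}\rfloor - \lfloor\mathbf{c}\rfloor)$. Since the floor is non-decreasing, $z_i > c_i$ gives $\lfloor z_i\rfloor \ge \lfloor c_i\rfloor$, so $\tau_i \in \{0, 1\}$. Likewise $z_i < c_i$ gives $\tau_i \in \{-1, 0\}$, and $z_i = c_i$ gives $\tau_i = 0$. So for every $i$, either $\tau_i = \sigma_i$ or $\tau_i = 0$.

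Now compare the per-coordinate contributions to the sum in \zcref{lem:orthant}. Coordinate $i$ contributes $\sigma_i v_i$ if $\sigma_i \ne 0$ and $|v_i|$ otherwise. Both $\sigma_i v_i \le |v_i|$ and $|v_i| \le |v_i|$ hold, so each contribution can only grow when $\sigma_i$ is replaced by $\tau_i$. Hence the sum for $\boldsymbol{\tau}$ is at least the sum for $\boldsymbol{\sigma}$, which is non-negative because $\mathbf{z} \in \halfspace{\mathbf{c}}{\mathbf{v}}$. By \zcref{lem:orthant}, $\lfloor\mathbf{z}\rfloor \in \halfspace{\lfloor\mathbf{c}\rfloor}{\mathbf{v}}$.

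For the depth statement, fix a direction $\mathbf{v} \ne \0$. For $\mathbf{z} \in C \subset \mathbb{Z}^k$ we have $\lfloor\mathbf{z}\rfloor = \mathbf{z}$. So every point of $C$ in $\halfspace{\mathbf{c}}{\mathbf{v}}$ also lies in $\halfspace{\lfloor\mathbf{c}\rfloor}{\mathbf{v}}$, with the same weight. This gives $|C \cap \halfspace{\lfloor\mathbf{c}\rfloor}{\mathbf{v}}| \ge |C \cap \halfspace{\mathbf{c}}{\mathbf{v}}| \ge \dep_C(\mathbf{c})$. Taking the minimum over $\mathbf{v}$ gives $\dep_C(\lfloor\mathbf{c}\rfloor) \ge \dep_C(\mathbf{c})$.

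For the bounding box, note that $\bbox(C) = \{\mathbf{z} : \mathbf{a} \le \mathbf{z} \le \mathbf{b}\}$, where $\mathbf{a}$ and $\mathbf{b}$ are the coordinatewise minima and maxima of $C$. These are integer vectors because $C \subset \mathbb{Z}^k$ is finite and non-empty. From $a_i \le c_i \le b_i$ and $a_i \in \mathbb{Z}$ we get $a_i \le \lfloor c_i\rfloor$, and $\lfloor c_i\rfloor \le c_i \le b_i$. Hence $\lfloor\mathbf{c}\rfloor \in \bbox(C)$.

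The only delicate point is the case analysis $\tau_i \in \{\sigma_i, 0\}$ together with the observation that replacing a signed term $\sigma_i v_i$ by $|v_i|$ never decreases the sum. Everything else is routine.
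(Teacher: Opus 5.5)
Your proposal is correct and follows the paper's proof essentially step for step. Both compare $\sgn(\mathbf{z} - \mathbf{c})$ with $\sgn(\lfloor\mathbf{z}\rfloor - \lfloor\mathbf{c}\rfloor)$ using monotonicity of the floor, note that zeroing an entry replaces $\sigma_i v_i$ by $|v_i|$ in the sum of \zcref{lem:orthant}, and then use $\lfloor\mathbf{z}\rfloor = \mathbf{z}$ on $C$ and the integer corners of $\bbox(C)$. Your bounding-box paragraph simply spells out the one-line remark in the paper.
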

\begin{proof}
  We compare $\boldsymbol{\sigma} \coloneqq \sgn(\mathbf{z} - \mathbf{c})$ with $\boldsymbol{\sigma}' \coloneqq \sgn(\lfloor\mathbf{z}\rfloor - \lfloor\mathbf{c}\rfloor)$.
  Rounding down is monotone, so $z_i > c_i$ gives $\lfloor z_i\rfloor \ge \lfloor c_i\rfloor$, $z_i < c_i$ gives $\lfloor z_i\rfloor \le \lfloor c_i\rfloor$, and $z_i = c_i$ gives $\lfloor z_i\rfloor = \lfloor c_i\rfloor$.
  So~$\boldsymbol{\sigma}'$ arises from $\boldsymbol{\sigma}$ by replacing some entries by $0$, which replaces summands $\sigma_iv_i \le |v_i|$ by $|v_i|$ in the sum of \zcref{lem:orthant} and can only increase the sum.
  Hence $\mathbf{z} \in \halfspace{\mathbf{c}}{\mathbf{v}}$ implies $\lfloor\mathbf{z}\rfloor \in \halfspace{\lfloor\mathbf{c}\rfloor}{\mathbf{v}}$.
  Every $\mathbf{z} \in C \subset \mathbb{Z}^k$ has $\lfloor\mathbf{z}\rfloor = \mathbf{z}$, so $C \cap \halfspace{\mathbf{c}}{\mathbf{v}} \subseteq C \cap \halfspace{\lfloor\mathbf{c}\rfloor}{\mathbf{v}}$ for every $\mathbf{v} \ne \0$, which gives the claim about the depths.
  The last claim holds since $\bbox(C)$ has integer corners.
\end{proof}

\zcref{thm:hlsw} extends to weighted sets as follows.
The first claim of the next lemma is known for the points that \zcref{thm:hlsw} gives~\cite[Lemma~3.19]{HaslebacherLSW2025}, and a version of the second claim for unweighted sets of grid points~\cite[Lemma~4.4]{HaslebacherLSW2025}.

\begin{lemma}
\label{lem:weighted-centerpoint}
  Let $X \subset \mathbb{Z}^k$ be a finite non-empty set with positive integer weights.
  Then each point of positive depth in $X$ lies in $\bbox(X)$, and there is a point $\mathbf{c} \in \bbox(X)$ with $\dep_X(\mathbf{c}) \ge \frac{|X|}{k + 1}$.
\end{lemma}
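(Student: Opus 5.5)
Both claims follow from the sign description of $\ell_1$-half-spaces in \zcref{lem:orthant} together with \zcref{thm:hlsw}.

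\emph{First claim.} Let $\mathbf{c} \notin \bbox(X)$. Then some coordinate $i$ has $c_i$ outside $[\min_{\mathbf{x}\in X} x_i, \max_{\mathbf{x}\in X} x_i]$, say $c_i > x_i$ for all $\mathbf{x} \in X$; the other case is symmetric. Take the direction $\mathbf{v} \coloneqq -\e_i$. For $\mathbf{z} \in X$ we have $\sigma_i = \sgn(z_i - c_i) = -1$. The sum in \zcref{lem:orthant} is then $\sigma_i v_i = (-1)(-1)$ plus zeros from the other coordinates. That gives $+1$, the wrong sign, so I flip and use $\mathbf{v} \coloneqq \e_i$. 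Now the sum equals $\sigma_i v_i = -1 < 0$, since every other coordinate has $v_j = 0$ and contributes $0$ whether or not $\sigma_j = 0$. So no point of $X$ lies in $\halfspace{\mathbf{c}}{\e_i}$, and $\dep_X(\mathbf{c}) = 0$. Hence every point of positive depth lies in $\bbox(X)$.

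\emph{Second claim.} Reduce the weighted set to an unweighted one, apply \zcref{thm:hlsw}, and then use the first claim. If I simply treat $X$ as a multiset, the weight of a point is its multiplicity. \zcref{thm:hlsw} is stated for sets, so I make the copies distinct by perturbation, which must not change membership in half-spaces too much. The cleanest route is to replace each point $\mathbf{x}$ of weight $w$ by $w$ distinct points $\mathbf{x} + \delta_j \1$ with tiny distinct $\delta_j \ge 0$. These perturbed points may change sign patterns relative to a centerpoint $\mathbf{c}$ they produce. Then I apply \zcref{lem:rounding}: the floors of the perturbed points are the original integer points $\mathbf{x}$ when $\delta_j < 1$, and $\mathbf{z} \in \halfspace{\mathbf{c}}{\mathbf{v}}$ implies $\lfloor\mathbf{z}\rfloor \in \halfspace{\lfloor\mathbf{c}\rfloor}{\mathbf{v}}$.

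So let $X'$ be the perturbed unweighted set of size $|X|$, and let $\mathbf{c}$ be given by \zcref{thm:hlsw}. For each $\mathbf{v} \ne \0$, the floor map sends $X' \cap \halfspace{\mathbf{c}}{\mathbf{v}}$ into $\halfspace{\lfloor\mathbf{c}\rfloor}{\mathbf{v}}$, counting weights. Hence $\dep_X(\lfloor\mathbf{c}\rfloor) \ge |X|/(k+1) > 0$. By the first claim, $\lfloor\mathbf{c}\rfloor \in \bbox(X)$.

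The main point to check is the perturbation step. The perturbation $\delta_j \in [0,1)$ with distinct values makes $X'$ a genuine set, and the floor property of \zcref{lem:rounding} holds for arbitrary real $\mathbf{z}$. So no smallness condition beyond $\delta_j<1$ is needed, and the argument goes through directly.
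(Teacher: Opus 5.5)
Your proof is correct and follows essentially the same route as the paper. You handle the first claim with the direction $\pm\e_i$ in \zcref{lem:orthant}. For the second claim you spread each weighted point into distinct points $\mathbf{x}+\delta_j\1$ with $\delta_j\in[0,1)$ (the paper takes $\delta_j=j/\mu$), apply \zcref{thm:hlsw}, round down via \zcref{lem:rounding}, and obtain membership in $\bbox(X)$ from the first claim; distinctness across different original points holds because their floors differ.
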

\begin{proof}
  For the first claim, let $\mathbf{c} \notin \bbox(X)$, so that $c_i < z_i$ for all $\mathbf{z} \in X$ or $c_i > z_i$ for all $\mathbf{z} \in X$, for some coordinate $i$.
  In the first case, every $\mathbf{z} \in X$ has $\sgn(\mathbf{z} - \mathbf{c})_i = 1$, so the sum of \zcref{lem:orthant} for $\mathbf{v} = -\e_i$ is $-1$, and no point of $X$ lies in $\halfspace{\mathbf{c}}{-\e_i}$.
  In the second case, no point of $X$ lies in $\halfspace{\mathbf{c}}{\e_i}$.
  So $\dep_X(\mathbf{c}) = 0$.

  For the second claim, we spread the weight of every point over distinct points nearby, apply \zcref{thm:hlsw}, and round the point that it gives down.
  Let $\mu_{\mathbf{y}}$ be the weight of $\mathbf{y} \in X$, let $\mu \coloneqq \max_{\mathbf{y} \in X}\mu_{\mathbf{y}}$, and let $\tilde X$ be the set of the points $\mathbf{y} + \frac j\mu\1$ with $\mathbf{y} \in X$ and $0 \le j < \mu_{\mathbf{y}}$.
  These are $|X|$ distinct points, since the points of $X$ are integer points and all offsets satisfy $0 \le \frac j\mu < 1$.
  By \zcref{thm:hlsw}, there is a point $\mathbf{c}' \in \mathbb{R}^k$ with $|\halfspace{\mathbf{c}'}{\mathbf{v}} \cap \tilde X| \ge \frac{|X|}{k + 1}$ for all $\mathbf{v} \ne \0$, and we let $\mathbf{c} \coloneqq \lfloor\mathbf{c}'\rfloor$.
  Every point $\mathbf{z} = \mathbf{y} + \frac j\mu\1$ of $\tilde X$ satisfies $\lfloor\mathbf{z}\rfloor = \mathbf{y}$, so $\mathbf{z} \in \halfspace{\mathbf{c}'}{\mathbf{v}}$ implies $\mathbf{y} \in \halfspace{\mathbf{c}}{\mathbf{v}}$ by \zcref{lem:rounding}.
  Since every $\mathbf{y} \in X$ gives $\mu_{\mathbf{y}}$ points of $\tilde X$, we get $|X \cap \halfspace{\mathbf{c}}{\mathbf{v}}| \ge |\tilde X \cap \halfspace{\mathbf{c}'}{\mathbf{v}}| \ge \frac{|X|}{k + 1}$ for all $\mathbf{v} \ne \0$.
  Finally, $\mathbf{c} \in \bbox(X)$ by the first claim, since $\frac{|X|}{k + 1} > 0$.
\end{proof}

\subsection{The case of few points}
\label{sec:exact}
For a weighted set of few points, a deepest point can be computed exactly, since the $\ell_1$-half-spaces cut out only few different subsets of the points, and those can be listed.
\begin{definition}[Traces]
\label{def:traces}
  For a multiset $Y = (\mathbf{y}_1, \ldots, \mathbf{y}_m)$ of points of $\mathbb{R}^k$, the \emph{traces} of the $\ell_1$-half-spaces on $Y$ are the index sets $\{j \in [m] : \mathbf{y}_j \in \halfspace{\mathbf{c}}{\mathbf{v}}\}$ for $\mathbf{c} \in \mathbb{R}^k$ and~\mbox{$\mathbf{v} \ne \0$}.
\end{definition}

In other words, a trace records which points of $Y$ lie in one $\ell_1$-half-space.
In our context, the points are candidates, and the candidates that one round of the cutting algorithm removes form a trace.
Many $\ell_1$-half-spaces have the same trace, and \zcref{lem:traces} shows that there are only few distinct traces, with the following classical bound.
A family $\mathcal{R}$ of subsets of a finite set \emph{shatters} a set $A$ if the sets $R \cap A$ with $R \in \mathcal{R}$ are all the subsets of $A$.

\begin{proposition}[Sauer--Shelah lemma~\cite{Sauer1972,Shelah1972}]
\label{prop:sauer}
  Let $m, s \in \N_0$.
  A family of subsets of an $m$-element set that shatters no $(s + 1)$-element set has at most $\sum_{i \le s}\binom mi$ members.
\end{proposition}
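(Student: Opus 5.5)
We prove this classical bound by the standard induction on $m$, removing one element of the ground set at a time; the cited references~\cite{Sauer1972,Shelah1972} are the source, and we use the statement only in \zcref{lem:traces}. To make the induction uniform, we allow $s = -1$, with the convention that the empty sum $\sum_{i \le -1}\binom mi$ is $0$. Every non-empty family shatters the empty set, so a family that shatters no $0$-element set is empty, and the claim holds for $s = -1$. For $m = 0$ and $s \ge 0$, the ground set is empty, so any family has at most $1 = \binom 00$ members, which again matches the bound.

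For the induction step, let $m \ge 1$ and $s \ge 0$, let $\mathcal{R}$ be a family of subsets of an $m$-element set $U$ that shatters no $(s + 1)$-element set, and fix an element $x \in U$. We form two families on $U \setminus \{x\}$. The first is the family of traces $\mathcal{R}' \coloneqq \{R \setminus \{x\} : R \in \mathcal{R}\}$. The second is the family of doubled sets $\mathcal{R}'' \coloneqq \{A \subseteq U \setminus \{x\} : A \in \mathcal{R}\text{ and }A \cup \{x\} \in \mathcal{R}\}$. The map $R \mapsto R \setminus \{x\}$ from $\mathcal{R}$ onto $\mathcal{R}'$ is at most two-to-one. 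Its fibres of size two are exactly the pairs $\{A, A \cup \{x\}\}$ with $A \in \mathcal{R}''$. Hence
\begin{equation*}
	|\mathcal{R}| = |\mathcal{R}'| + |\mathcal{R}''|\enspace.
\end{equation*}

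Next we check the hypotheses for the two families. The family $\mathcal{R}'$ shatters no $(s + 1)$-element subset of $U \setminus \{x\}$: for a set $A$ that avoids $x$, the restrictions of $\mathcal{R}'$ to $A$ coincide with those of $\mathcal{R}$. The family $\mathcal{R}''$ shatters no $s$-element set. Indeed, suppose that $\mathcal{R}''$ shatters a set $A \subseteq U \setminus \{x\}$. For every $B \subseteq A$, there is then some $D \in \mathcal{R}''$ with $D \cap A = B$. Both $D$ and $D \cup \{x\}$ lie in $\mathcal{R}$, so both $B$ and $B \cup \{x\}$ arise as restrictions of $\mathcal{R}$ to $A \cup \{x\}$. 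Thus $\mathcal{R}$ shatters $A \cup \{x\}$, which forces $|A| + 1 \le s$.

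By induction, applied with $(m - 1, s)$ and with $(m - 1, s - 1)$, and by Pascal's rule $\binom{m-1}{i} + \binom{m-1}{i-1} = \binom mi$, we get
\begin{equation*}
	|\mathcal{R}| \le \sum_{i \le s}\binom{m - 1}i + \sum_{i \le s - 1}\binom{m - 1}i = \sum_{i \le s}\binom mi\enspace.
\end{equation*}
The only step that needs care is the shattering claim for $\mathcal{R}''$, which is exactly where the doubled sets are used; the rest is bookkeeping.
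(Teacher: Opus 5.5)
Your proof is correct. It is the standard deletion argument for the Sauer--Shelah lemma, and each step checks out:
\begin{itemize}
\item the fibre count $|\mathcal{R}| = |\mathcal{R}'| + |\mathcal{R}''|$;
\item the two shattering claims, for $\mathcal{R}'$ and for $\mathcal{R}''$;
\item Pascal's rule;
\item the boundary cases $s = -1$, where the family must be empty, and $m = 0$.
\end{itemize}
The induction on $m$ is also well-founded, since the step uses only $(m-1, s)$ and $(m-1, s-1)$ with $s - 1 \ge -1$.

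The paper gives no proof of this statement. It only cites \cite{Sauer1972,Shelah1972} and uses the bound as a black box in the proof of \zcref{lem:traces}, so there is no argument to compare yours against.

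One small wording point. From ``$\mathcal{R}$ shatters $A \cup \{x\}$'' you conclude $|A| + 1 \le s$. Excluding $|A| > s$ that way tacitly uses the fact that shattering passes to subsets. For the induction you only need that $|A| = s$ is impossible, and that follows directly from the hypothesis that $\mathcal{R}$ shatters no $(s + 1)$-element set.
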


\begin{lemma}
\label{lem:traces}
  Let $Y = (\mathbf{y}_1, \ldots, \mathbf{y}_m)$ be a multiset of $m \ge 1$ points of $[0, 2^\beta]^k$, where $\beta \in \N$.
  One can compute deterministically, in time $(m + 1)^{\mathcal{O}(k)} \cdot \poly(\beta)$, a set $Z \subset (\frac12\mathbb{Z})^k$ and, for any $\mathbf{c} \in Z$, a family $\mathcal{U}_{\mathbf{c}}$ of subsets of $[m]$ for which it holds:
\begin{itemize}
	\item For every $\mathbf{c}' \in \mathbb{R}^k$, some $\mathbf{c} \in Z$ has $\sgn(\mathbf{y}_j - \mathbf{c}) = \sgn(\mathbf{y}_j - \mathbf{c}')$ for all $j \in [m]$.
	\item For every $\mathbf{c} \in Z$, the family $\mathcal{U}_{\mathbf{c}}$ consists of $[m]$ and of the traces on $Y$ of the $\ell_1$-half-spaces through $\mathbf{c}$.
	\item Every member of these families is a trace on $Y$, every trace on $Y$ is a member of one of them, and they have at most $(12m^2)^k$ members in total.
\end{itemize}
\end{lemma}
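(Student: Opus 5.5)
The plan is to discretise the choice of centre $\mathbf{c}$ coordinate by coordinate, and then to enumerate, for each centre, the possible directions through sign patterns of $\mathbf{v}$.

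\textbf{Constructing $Z$.} For each coordinate $i$, let $a_1 < \cdots < a_p$ with $p \le m$ be the distinct values $y_{j,i}$; these are integers or at least points of $[0,2^\beta]$. In what follows I take the coordinates to be integers, as in our application; otherwise one replaces half-integers by midpoints. The sign of $y_{j,i} - c'_i$ depends only on where $c'_i$ falls among the $a$'s:
\begin{itemize}
\item below $a_1$,
\item equal to some $a_l$,
\item strictly between $a_l$ and $a_{l+1}$,
\item above $a_p$.
\end{itemize}
I pick one representative per cell: $a_1 - \frac12$, each $a_l$, a point strictly between consecutive values, and $a_p + \frac12$. If consecutive integers differ by at least one, a half-integer representative exists between them, namely $a_l + \frac12$. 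Each coordinate then has at most $2m+1$ representatives in $\frac12\mathbb{Z}$. Let $Z$ be their product, so $|Z| \le (2m+1)^k$. This gives the first claim, computable in time $(m+1)^{\mathcal{O}(k)}\poly(\beta)$.

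\textbf{The families $\mathcal{U}_{\mathbf{c}}$.} Fix $\mathbf{c} \in Z$. By \zcref{lem:orthant}, membership of $\mathbf{y}_j$ in $\halfspace{\mathbf{c}}{\mathbf{v}}$ is the condition $L_j(\mathbf{v}) \ge 0$, where
\[
L_j(\mathbf{v}) = \sum_{i:\sigma_{j,i} \ne 0} \sigma_{j,i} v_i + \sum_{i:\sigma_{j,i}=0} |v_i|.
\]
Each $L_j$ is linear on every closed orthant of $\mathbf{v}$-space. So I fix a sign pattern $\boldsymbol{\tau} \in \{-1,0,1\}^k$ of $\mathbf{v}$, giving $3^k$ choices. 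On that face of the orthant arrangement, each $L_j$ is a linear form $\ell_j$ in the free coordinates, those $i$ with $\tau_i \ne 0$, restricted to the open cone where the signs match $\boldsymbol{\tau}$.

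The traces from this face are the sets $\{j : \ell_j(\mathbf{v}) \ge 0\}$ for $\mathbf{v}$ in the cone. These are cut out by the arrangement of $m + k$ hyperplanes through the origin: the $\ell_j = 0$ and the coordinate hyperplanes. The number of faces of a central arrangement of $m + k$ hyperplanes in dimension at most $k$ is at most $\sum_{i \le k}\binom{m+k}{i}2^i$, which is roughly $(2(m+k))^k$. Combined with the $3^k$ sign patterns and the factor $(2m+1)^k$ for $Z$, the total should fit within $(12m^2)^k$ after using $k \le$ something or bounding $m + k$ crudely.

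If this count proves awkward, I fall back on \zcref{prop:sauer}. For fixed $\mathbf{c}$, the traces through $\mathbf{c}$ under a fixed orthant pattern are traces of half-spaces in dimension $k$, so the family shatters no $(k+1)$-set. This gives at most $(m+1)^k$ traces per pattern, and $3^k(m+1)^k(2m+1)^k \le (12m^2)^k$ for $m \ge 1$. I would use this bound for the count.

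\textbf{Listing the traces deterministically.} For each $\mathbf{c}$ and $\boldsymbol{\tau}$, I enumerate the cells of the arrangement. One way is to take every subset of at most $k-1$ of the hyperplanes $\ell_j = 0$ together with coordinate constraints, solve for a ray or point, and perturb it. The cleaner way I would try first is the standard one: cells of an arrangement of $m'$ hyperplanes in $\mathbb{R}^{k'}$ can be listed in time $m'^{\mathcal{O}(k')}$ by incremental construction, or by vertex enumeration followed by small perturbations in rational arithmetic. The coefficients of $\ell_j$ lie in $\{-1,0,1\}$, so the bit sizes stay polynomial, and the time is $(m+1)^{\mathcal{O}(k)}\poly(\beta)$. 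Adding $[m]$ to each $\mathcal{U}_{\mathbf{c}}$ is harmless, since $[m]$ is a trace: take $\mathbf{c}$ outside the bounding box with a suitable $\mathbf{v}$, or note that $\mathbf{y}_1 = \mathbf{c}$ is allowed.

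\textbf{Main obstacle.} The hardest step is the exact bookkeeping that yields the constant $(12m^2)^k$. The count must include $[m]$ and handle the faces where some $v_i = 0$. I expect the Sauer--Shelah route, with VC-dimension at most $k$ per sign pattern, to give this bound cleanly.
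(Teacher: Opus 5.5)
Your outline follows the paper's proof closely. Representatives of the at most $(2m+1)^k$ coordinate types form $Z$. \zcref{lem:orthant} makes membership linear in $\mathbf{v}$ on pieces of direction space. Sauer--Shelah with shatter dimension $k$ then bounds each piece by $(m+1)^k$.

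\textbf{The final count fails for $m=1$.} This is the step you yourself single out as the crux. With the $3^k$ relatively open sign patterns $\boldsymbol{\tau}\in\{-1,0,1\}^k$, your inequality $3^k(m+1)^k(2m+1)^k\le(12m^2)^k$ is false at $m=1$, since $3\cdot2\cdot3=18>12$. It holds only from $m=2$ on.

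The paper instead uses just the $2^k$ closed orthants $O_{\boldsymbol{\varsigma}}$, $\boldsymbol{\varsigma}\in\{-1,1\}^k$. On all of $O_{\boldsymbol{\varsigma}}$, including where $v_i=0$, one has $|v_i|=\varsigma_iv_i$. So setting $a_{j,i}\coloneqq\varsigma_i$ at the zero entries of $\sgn(\mathbf{y}_j-\mathbf{c})$ makes the sum of \zcref{lem:orthant} equal to $\mathbf{a}_j\cdot\mathbf{v}$ on the whole closed orthant. Faces with some $v_i=0$ then need no patterns of their own, and $(2m+1)\cdot2\cdot(m+1)=4m^2+6m+2\le12m^2$ for all $m\ge1$. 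Alternatively, handle $m=1$ by hand: each $\mathcal{U}_{\mathbf{c}}$ has at most two members and $|Z|\le3^k$, so the total is at most $2\cdot 3^k\le 12^k$.

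\textbf{The shattering claim needs homogeneity.} You must use that the half-spaces $\{\mathbf{x}:\mathbf{v}\cdot\mathbf{x}\ge0\}$ pass through the origin. Arbitrary half-spaces in $\mathbb{R}^k$ do shatter $k+1$ points. The paper proves the homogeneous case from the linear dependence of any $k+1$ vectors in $\mathbb{R}^k$.

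\textbf{The listing step is underdetermined, in two ways.}

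First, traces are defined by the non-strict inequalities $\ell_j(\mathbf{v})\ge0$, so enumerating only full-dimensional cells can miss traces. Take $\mathbf{y}_1=(1,0)$, $\mathbf{y}_2=(0,1)$, $\mathbf{y}_3=(0,0)$ and $\mathbf{c}=(\frac12,\frac12)$, which is in $Z$ under either construction. The trace $\{1,2\}$ of $\halfspace{\mathbf{c}}{(1,1)}$ arises exactly for the directions with $v_1=v_2>0$, which form a ray of the arrangement.

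Second, a generic bound for listing the faces of an arrangement of $m+k$ hyperplanes in $\mathbb{R}^k$ gives time $(m+k)^{\mathcal{O}(k)}$. This is not $(m+1)^{\mathcal{O}(k)}$ when $k\gg m$. That regime matters: \zcref{lem:deep-point} uses this listing as a subsystem oracle, which must run in time $\mathcal{O}(|A|)^{\delta+1}$ also for small $|A|$.

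The paper's enumeration avoids both problems because it is output-sensitive. For each $\mathbf{c}$ and $\boldsymbol{\varsigma}$, it grows the realisable subsets of $[r]$ one index at a time. Each extension $J\subseteq[r+1]$ is tested by a linear feasibility problem: $\varsigma_\ell v_\ell\ge0$ for all $\ell$, $\mathbf{a}_j\cdot\mathbf{v}\ge0$ for $j\in J$, and $\mathbf{a}_j\cdot\mathbf{v}\le-1$ for the other $j\le r+1$. By the Sauer--Shelah bound this needs at most $2m(m+1)^k$ tests, each in time $\poly(m,k)$.
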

\begin{proof}
We first choose $Z$.
For $i \in [k]$, let $\xi_1 < \cdots < \xi_{p_i}$ be the distinct $i$-th coordinates of the points of $Y$.
The \emph{type} of a point $\mathbf{c}' \in \mathbb{R}^k$ records, for every $i$, the position of~$c'_i$ among these values, that is, whether $c'_i$ is equal to some $\xi_a$, strictly between $\xi_a$ and $\xi_{a + 1}$, below $\xi_1$ or above $\xi_{p_i}$.
The sign vectors $\sgn(\mathbf{y}_j - \mathbf{c}')$ depend only on the type of $\mathbf{c}'$, and there are at most $(2m + 1)^k$ types.
For every type, $Z$ contains one representative $\mathbf{c}$, with $c_i = \xi_a$, $c_i = (\xi_a + \xi_{a + 1})/2$, $c_i = \xi_1 - 1$ or $c_i = \xi_{p_i} + 1$, so that $\mathbf{c} \in (\frac12\mathbb{Z})^k$.
By \zcref{lem:orthant}, every $\ell_1$-half-space $\halfspace{\mathbf{c}'}{\mathbf{v}}$ has the same trace as $\halfspace{\mathbf{c}}{\mathbf{v}}$ for the representative $\mathbf{c}$ of the type of $\mathbf{c}'$, so the families $\mathcal{U}_{\mathbf{c}}$ contain all traces on $Y$.
The set $[m]$ is a trace as well.
By \zcref{lem:orthant}, the $\ell_1$-half-space through a point below all points of $Y$ in every coordinate, in the direction $\1$, contains all of them.

Next, we describe $\mathcal{U}_{\mathbf{c}}$ for $\mathbf{c} \in Z$ by linear inequalities, one orthant of directions at a time.
For $\boldsymbol{\varsigma} \in \{-1, 1\}^k$, consider the directions $\mathbf{v}$ in the closed orthant $O_{\boldsymbol{\varsigma}} \coloneqq \{\mathbf{v} : \varsigma_iv_i \ge 0\text{ for all }i\}$, where $|v_i| = \varsigma_iv_i$.
For $j \in [m]$, let $\mathbf{a}_j \in \{-1, 1\}^k$ have the entries $a_{j, i} \coloneqq \sgn(\mathbf{y}_j - \mathbf{c})_i$ if this is non-zero, and $a_{j, i} \coloneqq \varsigma_i$ otherwise.
Then the sum of \zcref{lem:orthant} for $\mathbf{y}_j$ and $\mathbf{v} \in O_{\boldsymbol{\varsigma}}$ is $\mathbf{a}_j \cdot \mathbf{v}$, so the trace of $\halfspace{\mathbf{c}}{\mathbf{v}}$ is $\{j : \mathbf{a}_j \cdot \mathbf{v} \ge 0\}$ for every $\mathbf{v} \in O_{\boldsymbol{\varsigma}} \setminus \{\0\}$, while $\mathbf{v} = \0$ gives $[m]$.
Since every $\mathbf{v} \ne \0$ lies in some orthant $O_{\boldsymbol{\varsigma}}$, the family $\mathcal{U}_{\mathbf{c}}$ is the union, over all $\boldsymbol{\varsigma}$, of the families~$\mathcal{U}_{\mathbf{c}, \boldsymbol{\varsigma}}$ of the sets $\{j : \mathbf{a}_j \cdot \mathbf{v} \ge 0\}$ with $\mathbf{v} \in O_{\boldsymbol{\varsigma}}$.

These families are small.
The members of $\mathcal{U}_{\mathbf{c}, \boldsymbol{\varsigma}}$ are subsets of $[m]$ that closed half-spaces $\{\mathbf{x} : \mathbf{v} \cdot \mathbf{x} \ge 0\}$ through the origin cut out from the points $\mathbf{a}_1, \ldots, \mathbf{a}_m$.
Such half-spaces cannot cut out all subsets of any $k + 1$ points $\mathbf{x}_1, \ldots, \mathbf{x}_{k + 1} \in \mathbb{R}^k$.
These points are linearly dependent, so there is a vector $\boldsymbol{\zeta} \ne \0$ with $\sum_j\zeta_j\mathbf{x}_j = \0$ and, say, $\zeta_j > 0$ for some $j$.
A vector $\mathbf{v}$ with $\mathbf{v} \cdot \mathbf{x}_j < 0$ for all $j$ with $\zeta_j > 0$ and $\mathbf{v} \cdot \mathbf{x}_j \ge 0$ for all other $j$ would give $0 = \sum_j\zeta_j(\mathbf{v} \cdot \mathbf{x}_j) < 0$.
So $\mathcal{U}_{\mathbf{c}, \boldsymbol{\varsigma}}$ shatters no $(k + 1)$-element subset of $[m]$, and it has at most $\sum_{i \le k}\binom mi \le (m + 1)^k$ members by \zcref{prop:sauer} with $s = k$.
In total, the families $\mathcal{U}_{\mathbf{c}, \boldsymbol{\varsigma}}$ have at most $(2m + 1)^k \cdot 2^k \cdot (m + 1)^k \le (12m^2)^k$ members.
To list the members of $\mathcal{U}_{\mathbf{c}, \boldsymbol{\varsigma}}$, call $J \subseteq [r]$ \emph{realisable} if $J = \{j \le r : \mathbf{a}_j \cdot \mathbf{v} \ge 0\}$ for some $\mathbf{v} \in O_{\boldsymbol{\varsigma}}$.
The restriction of a realisable subset of $[r + 1]$ to $[r]$ is realisable, by the same $\mathbf{v}$.
So, starting from the empty subset of $[0]$, for $r = 0, \ldots, m - 1$ we extend every realisable subset of $[r]$ in both ways and keep the realisable extensions.
A set $J \subseteq [r + 1]$ is realisable if and only if a system of linear inequalities is feasible,
 which linear programming decides~\cite{Khachiyan1979}.
The system is $\varsigma_\ell v_\ell \ge 0$ for all $\ell \in [k]$, $\mathbf{a}_j \cdot \mathbf{v} \ge 0$ for $j \in J$ and $\mathbf{a}_j \cdot \mathbf{v} \le -1$ for the other $j \le r + 1$.
Here the strict inequalities $\mathbf{a}_j \cdot \mathbf{v} < 0$ have been scaled by a positive factor, which keeps $\mathbf{v}$ in $O_{\boldsymbol{\varsigma}}$.
This takes at most $2m(m + 1)^k$ tests for every $\mathbf{c}$ and $\boldsymbol{\varsigma}$, each in time $\poly(m, k)$.
Computing the representatives and the sign vectors takes $\poly(m, k)$ operations on numbers with $\mathcal{O}(\beta)$ bits per type.
So the total time is $(m + 1)^{\mathcal{O}(k)} \cdot \poly(\beta)$.
\end{proof}

\begin{lemma}
\label{lem:exact-centerpoint}
  Let $\beta \in \N$, and let $Y \subseteq [0, 2^\beta]^k$ be a set of $m \ge 1$ points with positive integer weights.
  One can compute in time $(m + 1)^{\mathcal{O}(k)} \cdot \poly(\beta, \log|Y|)$ a point $\mathbf{c} \in (\frac12\mathbb{Z})^k \cap \bbox(Y)$ with $\dep_Y(\mathbf{c}) \ge \frac{|Y|}{k + 1}$.
\end{lemma}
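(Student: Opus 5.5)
We will proceed by brute force over the candidate set $Z$ of \zcref{lem:traces}, using the families $\mathcal{U}_{\mathbf{c}}$ to evaluate depths exactly. First, list the points of $Y$ as a multiset $(\mathbf{y}_1, \ldots, \mathbf{y}_m)$ of its $m$ distinct points, keeping the weights $\mu_j$ aside. Apply \zcref{lem:traces} to obtain $Z$ and the families $\mathcal{U}_{\mathbf{c}}$ in time $(m + 1)^{\mathcal{O}(k)} \cdot \poly(\beta)$. For each $\mathbf{c} \in Z$ compute
\[
	D(\mathbf{c}) \coloneqq \min_{U \in \mathcal{U}_{\mathbf{c}}} \sum_{j \in U}\mu_j .
\]
Since $\mathcal{U}_{\mathbf{c}}$ consists of $[m]$ together with exactly the traces of the $\ell_1$-half-spaces through $\mathbf{c}$, and including $[m]$ cannot lower the minimum below the true depth (each half-space weight is at most $|Y|$), we get $D(\mathbf{c}) = \dep_Y(\mathbf{c})$. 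Output a $\mathbf{c} \in Z \cap \bbox(Y)$ maximising $D$. Each family has at most $(12m^2)^k$ members, and each sum uses $m$ additions of numbers with $\mathcal{O}(\log|Y|)$ bits, so this step costs $(m + 1)^{\mathcal{O}(k)} \cdot \poly(\log|Y|)$, within the claimed bound.

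For correctness, \zcref{lem:weighted-centerpoint} gives a point $\mathbf{c}' \in \bbox(Y)$ with $\dep_Y(\mathbf{c}') \ge |Y|/(k + 1)$. This lemma is stated for integer points, so if $Y \subset \mathbb{Z}^k$ is not guaranteed, we instead apply \zcref{thm:hlsw} to the spread-out multiset, with ties handled by perturbation. The paper's applications have integer $Y$, so we use the lemma as stated. By the first bullet of \zcref{lem:traces}, some $\mathbf{c} \in Z$ has $\sgn(\mathbf{y}_j - \mathbf{c}) = \sgn(\mathbf{y}_j - \mathbf{c}')$ for all $j$. By \zcref{lem:orthant}, membership of each $\mathbf{y}_j$ in $\halfspace{\cdot}{\mathbf{v}}$ then coincides for $\mathbf{c}$ and $\mathbf{c}'$ for every $\mathbf{v}$, so $\dep_Y(\mathbf{c}) = \dep_Y(\mathbf{c}') \ge |Y|/(k + 1) > 0$. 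Positive depth forces $\mathbf{c} \in \bbox(Y)$ by the first claim of \zcref{lem:weighted-centerpoint}. In fact, the representatives of \zcref{lem:traces} that lie outside the coordinate range have depth $0$ anyway. Hence the maximiser over $Z \cap \bbox(Y)$ attains depth at least $|Y|/(k + 1)$ and lies in $(\frac12\mathbb{Z})^k$.

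The main point needing care is the identity $D(\mathbf{c}) = \dep_Y(\mathbf{c})$, that is, that the minimum over all directions $\mathbf{v} \ne \0$ is attained among the listed traces. This holds because every trace of a half-space through $\mathbf{c}$ appears in $\mathcal{U}_{\mathbf{c}}$, and the extra member $[m]$ is harmless for a minimum. Everything else is bookkeeping on the bit lengths.
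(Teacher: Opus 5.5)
Your proof is correct and follows the paper's argument essentially step for step. You brute-force over the representatives $Z$ of \zcref{lem:traces}, take the depth at each $\mathbf{c} \in Z$ as the least weight of a member of $\mathcal{U}_{\mathbf{c}}$, get existence from \zcref{lem:weighted-centerpoint} transferred to $Z$ through the sign vectors, and obtain $\mathbf{c} \in \bbox(Y)$ from positive depth. Your hedge about non-integer $Y$ is unnecessary: in the paper's notation $[0, 2^\beta]$ denotes a set of integers, so $Y \subset \mathbb{Z}^k$ and \zcref{lem:weighted-centerpoint} applies directly.
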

\begin{proof}
  We output a deepest point of the set~$Z$ of \zcref{lem:traces}, as follows.
  We compute $Z$ and the families $\mathcal{U}_{\mathbf{c}}$ of \zcref{lem:traces} for the $m$ points of $Y$, and we give every index set the total weight of its points.
  By \zcref{lem:orthant}, the depth of a point $\mathbf{c}'$ depends only on the sign vectors $\sgn(\mathbf{y} - \mathbf{c}')$, $\mathbf{y} \in Y$, so every point of $\mathbb{R}^k$ is as deep as some point of $Z$.
  For $\mathbf{c} \in Z$, the depth $\dep_Y(\mathbf{c})$ is the least weight of a trace of an $\ell_1$-half-space through~$\mathbf{c}$, which is the least weight of a member of $\mathcal{U}_{\mathbf{c}}$, since the additional set $[m]$ has the largest weight $|Y|$.
  We output a point $\mathbf{c} \in Z$ for which this least weight is largest.
  By \zcref{lem:weighted-centerpoint}, some point of~$\mathbb{R}^k$, and hence some point of $Z$, has depth at least $\frac{|Y|}{k + 1}$, so $\dep_Y(\mathbf{c}) \ge \frac{|Y|}{k + 1} > 0$.
  Hence $\mathbf{c} \in \bbox(Y)$ by the first claim of \zcref{lem:weighted-centerpoint}.
  The families have at most $(12m^2)^k$ members, and the weight of each is computed in time $\poly(m, \log|Y|)$, so with \zcref{lem:traces}, the total time is $(m + 1)^{\mathcal{O}(k)} \cdot \poly(\beta, \log|Y|)$.
\end{proof}

\subsection{Sampling}
\label{sec:compression}
To find deep guesses, we replace the candidates by a smaller weighted set $X'$ that \emph{represents} them, that is, every $\ell_1$-half-space contains about the same fraction of $X'$ as of the candidates.
For two finite non-empty weighted sets $X', X \subset \mathbb{R}^k$, the error of this representation is
\begin{equation*}
	\err(X', X) \coloneqq \sup_{\mathbf{c} \in \mathbb{R}^k,\,\mathbf{v} \ne \0}\left|\frac{|X' \cap \halfspace{\mathbf{c}}{\mathbf{v}}|}{|X'|} - \frac{|X \cap \halfspace{\mathbf{c}}{\mathbf{v}}|}{|X|}\right|\enspace.
\end{equation*}
It satisfies the triangle inequality $\err(X'', X) \le \err(X'', X') + \err(X', X)$.
Since the supremum runs over all $\ell_1$-half-spaces, every point $\mathbf{c} \in \mathbb{R}^k$, also one computed from~$X'$, satisfies
\begin{equation}
\label{eq:transfer}
	\frac{\dep_X(\mathbf{c})}{|X|} \ge \frac{\dep_{X'}(\mathbf{c})}{|X'|} - \err(X', X)\enspace.
\end{equation}
By \zcref{lem:cutting}, the candidates form a union of few boxes, which we compress as follows.

\begin{lemma}
\label{lem:compress}
  Let $C = \Pi_1 \cup \cdots \cup \Pi_p \subset \mathbb{Z}^k$ be a non-empty disjoint union of non-empty boxes and $g \ge 1$ an integer.
  Cut every side $[a, b]$ of every box $\Pi_j$ into singletons if its length $b - a + 1$ is at most $g$, and otherwise into pieces of $\lceil(b - a + 1)/g\rceil$ consecutive integers, the last one possibly shorter.
  Give every resulting sub-box its size as weight and its smallest point as representative, and let $Y$ be the weighted set of the representatives.
  Then $Y \subseteq C$ has at most $pg^k$ points, $|Y| = |C|$, and $\err(Y, C) \le 2k/g$.
\end{lemma}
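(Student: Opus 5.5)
The counting claims are immediate, and the real work is the error bound, which I would reduce to a per-coordinate boundary count. Each side of each box is cut into at most $g$ pieces. For a side of length $L>g$, pieces of length $\lceil L/g\rceil$ give at most $\lceil L/\lceil L/g\rceil\rceil\le g$ pieces. So each box yields at most $g^k$ sub-boxes and $Y$ has at most $pg^k$ points. The sub-boxes partition $C$, since the boxes are disjoint. Hence the weights, which are the sizes, sum to $|C|$, and every representative (a smallest point) lies in its sub-box and hence in $C$.

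For the error, I fix $\mathbf{c}$ and $\mathbf{v}\ne\0$ and compare $|Y\cap\halfspace{\mathbf{c}}{\mathbf{v}}|$ with $|C\cap\halfspace{\mathbf{c}}{\mathbf{v}}|$. Since $|Y|=|C|$, it suffices to show that these differ by at most $2k|C|/g$. Sub-box by sub-box, the contribution of a sub-box $\Pi$ to the two counts can differ by at most $|\Pi|$. It differs at all only if $\Pi$ is not entirely inside or entirely outside the half-space. By \zcref{lem:orthant}, membership depends only on $\sgn(\mathbf{z}-\mathbf{c})$, which is constant on $\Pi$ unless, for some coordinate $i$, the $i$-th side of $\Pi$ contains a point $z_i$ with $z_i$ on both sides of, or equal to, $c_i$ non-uniformly. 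More precisely, $\sgn(z_i-c_i)$ is non-constant on that side. So a sub-box with a mixed contribution must have, in some coordinate $i$, a side piece on which $\sgn(\cdot-c_i)$ is not constant.

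I would then count these bad sub-boxes per coordinate and per box $\Pi_j$. Within $\Pi_j$, the pieces of side $i$ that are non-constant in sign relative to $c_i$ form at most one piece, namely the one containing $c_i$ or straddling it. Singleton pieces are always constant, so such a piece only arises for sides of length $L>g$. The sub-boxes of $\Pi_j$ using that piece in coordinate $i$ have total size at most $\lceil L/g\rceil\cdot|\Pi_j|/L$. If I can bound this by $2|\Pi_j|/g$, then summing over $i$ and $j$ gives a total mixed weight of at most $2k|C|/g$, hence $\err(Y,C)\le 2k/g$. The bound holds because $\lceil L/g\rceil\le L/g+1<2L/g$ when $L>g$.

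The main obstacle is this last slack step, which is where the factor $2$ comes from. I would also double-check two points: that a piece in which $c_i$ equals an endpoint still counts as sign-non-constant (it does, but it is still at most one piece), and that the case where a piece has sign $0$ throughout cannot occur beyond singletons.
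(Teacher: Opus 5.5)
Your proposal is correct and follows the paper's proof essentially step for step. You count at most $g$ pieces per side, and use \zcref{lem:orthant} to see that only sub-boxes whose side straddles $c_i$ in some coordinate can contribute differently to the two counts. There is at most one such piece per side, and only for sides longer than $g$; you bound its length by $\lceil \ell/g\rceil < 2\ell/g$ and then sum over coordinates and boxes.
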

\begin{proof}
  Every side is cut into at most $g$ pieces, so every box has at most $g^k$ sub-boxes, and the sub-boxes partition $C$, so $|Y| = |C|$.
  Fix $\mathbf{c} \in \mathbb{R}^k$ and $\mathbf{v} \ne \0$.
  If, for every coordinate $i$, the $i$-th side of a sub-box is a singleton or lies strictly on one side of $c_i$, then $\sgn(\mathbf{z} - \mathbf{c})$ is the same for all its points $\mathbf{z}$.
  By \zcref{lem:orthant}, the sub-box then lies inside or outside $\halfspace{\mathbf{c}}{\mathbf{v}}$, and so does its representative.
  So the sub-box contributes the same to $|Y \cap \halfspace{\mathbf{c}}{\mathbf{v}}|$ and to $|C \cap \halfspace{\mathbf{c}}{\mathbf{v}}|$.
  Otherwise, we call the sub-box \emph{bad}, and its contributions to the two counts differ by at most its size.
  For a fixed box $\Pi_j$ and coordinate $i$, at most one piece of the $i^{\textnormal{th}}$ side has its smallest element at most~$c_i$ and its largest at least $c_i$.
  This piece has at least two elements only if the side is longer than $g$.
  In this case, the piece has at most $\lceil\ell/g\rceil < 2\ell/g$ elements, where $\ell > g$ is the length of the side.
  So the bad sub-boxes of $\Pi_j$ due to coordinate~$i$ have total size less than $\frac2g|\Pi_j|$, and summing over the $k$ coordinates and the $p$ boxes, the bad sub-boxes have total size at most $\frac{2k}g|C|$.
  Hence $|Y \cap \halfspace{\mathbf{c}}{\mathbf{v}}|$ and $|C \cap \halfspace{\mathbf{c}}{\mathbf{v}}|$ differ by at most $\frac{2k}g|C|$, and dividing by $|Y| = |C|$ gives the claim.
\end{proof}

An exact deepest point of the representatives, rounded down, is almost as deep in the candidates.
But there are up to $pg^k$ representatives, and \zcref{lem:exact-centerpoint} on $m$ points takes time~$m^{\mathcal{O}(k)}$, which is $2^{\mathcal{O}(k^2 \cdot \log(k + 1))}$ for $m = g^k$.
To reach the time bound of \zcref{thm:lfp}, we replace the representatives by only $\mathcal{O}(k^3 \cdot \log(k + 1))$ points that represent all $\ell_1$-half-spaces up to a small error.
We compute these points with a deterministic algorithm of \textcite{ChazelleM1996}, which follows an earlier scheme~\cite{Matousek1995}, and we use it only through its statement (\zcref{prop:cm}).

\begin{definition}[Set systems, $\varepsilon$-approximations and subsystem oracles~{\cite{ChazelleM1996}}]
\label{def:set-system}
A \emph{set system} $(X, \mathcal{R})$ is a finite set $X$ with a family $\mathcal{R}$ of subsets of $X$, and for $A \subseteq X$, the family $\mathcal{R}|_A$ consists of the different sets $R \cap A$ with $R \in \mathcal{R}$.
A set $A \subseteq X$ is an \emph{$\varepsilon$-approximation} for $(X, \mathcal{R})$ if $\bigl|\frac{|A \cap R|}{|A|} - \frac{|R|}{|X|}\bigr| \le \varepsilon$ for all $R \in \mathcal{R}$.
The set system has a \emph{subsystem oracle of dimension $\delta$} if $|\mathcal{R}|_A| \le \mathcal{O}(|A|)^\delta$ for all non-empty $A \subseteq X$ and if some algorithm lists, for a given non-empty $A \subseteq X$, the sets of $\mathcal{R}|_A$ in time $\mathcal{O}(|A|)^{\delta + 1}$~\cite[Definition~2.2]{ChazelleM1996}.
\end{definition}

The original statement~\cite{ChazelleM1996} writes $d$ for our $\delta$.
In \zcref{def:set-system} and in the next statement, the $\mathcal{O}$-notation hides only constants that do not depend on $\delta$, and time counts arithmetic operations~\cite[Section~1]{ChazelleM1996}.

\begin{proposition}[{\cite[Theorem~2.3]{ChazelleM1996}}]
\label{prop:cm}
Let $(X, \mathcal{R})$ be a set system with a subsystem oracle of dimension $\delta$.
For a given $r > 1$, one can compute a $(1/r)$-approximation of size $\mathcal{O}(\delta r^2 \cdot \log(\delta r))$ for $(X, \mathcal{R})$ in time $\mathcal{O}(\delta)^{3\delta} \cdot r^{2\delta} \cdot \log^\delta(\delta r) \cdot |X|$.
\end{proposition}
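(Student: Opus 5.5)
Since \zcref{prop:cm} is quoted from \textcite{ChazelleM1996} and only its statement is used, we indicate how we would reprove it, following the merge--reduce scheme of \textcite{Matousek1995}. We use two elementary facts about $\varepsilon$-approximations. First, they compose: if $A$ is an $\varepsilon$-approximation for $(X,\mathcal{R})$ and $A'$ is an $\varepsilon'$-approximation for $(A,\mathcal{R}|_A)$, then $A'$ is an $(\varepsilon+\varepsilon')$-approximation for $(X,\mathcal{R})$. Second, they merge: if $X$ is split into disjoint parts of equal size, each with an $\varepsilon$-approximation of equal size, then the union of these approximations is an $\varepsilon$-approximation of $X$.

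\textbf{Halving step.} Given a set $A$ of even size $m$, we use the oracle to list the at most $\mathcal{O}(m)^\delta$ sets of $\mathcal{R}|_A$ in time $\mathcal{O}(m)^{\delta+1}$. We pair up the points of $A$ arbitrarily. We then choose, for each pair, which of its two points to keep, by derandomising a uniformly random choice with the method of conditional probabilities. The pessimistic estimator is the sum, over the listed sets, of Chernoff-type exponential moment bounds. This yields a half $A'$ of $A$ with discrepancy $\mathcal{O}(\sqrt{m\log|\mathcal{R}|_A|})$ on every set of $\mathcal{R}|_A$. Hence $A'$ is an $\varepsilon(m)$-approximation of $A$ with
\begin{equation*}
\varepsilon(m)=\mathcal{O}\bigl(\sqrt{\delta\log m/m}\bigr),
\end{equation*}
computed in time $\mathcal{O}(m)^{\delta+1}$.

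\textbf{Merge--reduce.} We pad $X$ to size $2^L t$ (with a suitable reweighting) and cut it into blocks of size $t$. In each round we merge blocks in pairs and halve each merged set, so every round costs $\mathcal{O}(t)^{\delta+1}$ per block, and the number of blocks halves each round. The total time is therefore $\mathcal{O}(t)^{\delta}\cdot|X|$, which is linear in $|X|$. By the two facts above, the final block of size $t$ is a $\bigl(L\cdot\varepsilon(2t)\bigr)$-approximation of $X$. Finally, we halve this block repeatedly until it has size $s=\Theta(\delta r^2\log(\delta r))$. The errors of these halvings grow geometrically as the set shrinks, so their sum is dominated by the last term $\varepsilon(2s)\le 1/(2r)$.

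\textbf{Main obstacle.} The difficulty is the factor $L=\log(|X|/t)$ in the first phase. It must be absorbed without making $t$ depend on $|X|$, since otherwise the time would not be linear in $|X|$. Our first attempt is Matoušek's remedy: let the block size grow with the level, so that at level $j$ the halvings run on sets of size $t_j = t\cdot 2^{cj}$ (or, equivalently, insert intermediate thinning steps). The per-level errors $\varepsilon(t_j)$ then form a convergent geometric series summing to at most $1/(2r)$ when $t=\mathrm{poly}(\delta r)$. The cost per level then decreases geometrically in $j$ because of the halving number of blocks, provided $c\delta<1$ is arranged in the right exponent. Balancing these choices so that the constants come out as $\mathcal{O}(\delta)^{3\delta}\cdot r^{2\delta}\cdot\log^\delta(\delta r)$ is the delicate bookkeeping. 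If it fails, we would simply fall back on citing \textcite[Theorem~2.3]{ChazelleM1996}, as the paper does.
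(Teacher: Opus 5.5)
The paper gives no proof of this statement. It imports Theorem~2.3 of Chazelle and Matou\v{s}ek as a black box, together with their footnote on bit sizes. You instead reprove it with the merge--reduce scheme of Matou\v{s}ek, which is the scheme behind that theorem, so your route is a genuine reproof rather than a citation. The citation keeps the paper short and inherits the source's analysis, including the bit-size remark. Your reproof makes explicit something the paper asserts and needs, because it applies the result with $\delta=c_0k$ growing with $k$: the constants hidden in $\mathcal{O}$ are absolute. In exchange you must check bit sizes yourself. With a rational exponential pessimistic estimator in the halving step they are polynomial in the set size, and the paper notes that this still suffices for its application.

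The bookkeeping you defer does close, so you need no fallback.
\begin{itemize}
\item \textbf{Schedule.} Skip the halving in every $(\delta+2)$-nd round, so that the set size doubles once per stretch of $\delta+2$ rounds. The exponent you were unsure about is $\delta+1$: you need $c(\delta+1)<1$, because halving $m$ points costs $\mathcal{O}(m)^{\delta+1}$.
\item \textbf{Cost.} Over one stretch the number of sets drops by a factor $2^{\delta+2}$, while the cost per set rises by at most $2^{\delta+1}$. So the total cost is $\mathcal{O}(t)^{\delta}\cdot|X|$.
\item \textbf{Error.} The errors are constant within a stretch and shrink by a factor of about $2^{-1/2}$ from one stretch to the next. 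They therefore sum to $\mathcal{O}(\delta)\cdot\varepsilon(2t)=\mathcal{O}(\delta)\cdot\sqrt{\delta\log t/t}$.
\item \textbf{Choice of $t$.} Making this at most $1/(4r)$ forces $t=\Theta(\delta^3r^2\log(\delta r))$. Then $\mathcal{O}(t)^{\delta}=\mathcal{O}(\delta)^{3\delta}\cdot r^{2\delta}\cdot\log^{\delta}(\delta r)$, which is exactly the stated bound and explains the exponent $3\delta$.
\item \textbf{Final phase.} It starts from at most about $t\,(|X|/t)^{1/(\delta+2)}$ points, so its first halving also costs $\mathcal{O}(t)^{\delta}\cdot|X|$. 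Its errors sum to a constant multiple of $\varepsilon(2s)$, not to $\varepsilon(2s)$ itself. So aim for $\varepsilon(2s)$ to be a suitable constant fraction of $1/r$ rather than $1/(2r)$.
\end{itemize}

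Two small repairs are needed. First, padding $X$ changes the ratios $|R|/|X|$ that the output must match. Instead, discard fewer than $|X|/t$ elements so that $|X|$ becomes $t$ times a power of two; this moves every ratio by less than $1/t\le 1/(4r)$. Second, what you need is skipped halvings. ``Intermediate thinning steps'' are not equivalent: they would make the sets smaller instead of letting them grow.
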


In words, suppose that any $m$ elements of $X$ meet the sets of $\mathcal{R}$ in only $\mathcal{O}(m)^\delta$ different ways, which can be listed quickly.
Then a subset of $X$ whose size does not depend on $|X|$ contains every set of $\mathcal{R}$ in the right proportion, up to an error $1/r$, and it is computed in time linear in $|X|$.
In our context, the elements of the ground set are the indices of a multiset $X$ of candidates, and $\mathcal{R}$ consists of the traces of the $\ell_1$-half-spaces on $X$.
Then a $(1/r)$-approximation is the index set of a sub-multiset $\hat X$ with $\err(\hat X, X) \le 1/r$.

\subsection{Finding deep guesses}
\label{sec:solver}
Since \zcref{prop:cm} counts arithmetic operations, we also need the bit sizes of the numbers that its algorithm uses.
\textcite[footnote~1]{ChazelleM1996} remark that, in the algorithms they consider, the bit size of every number is at most a constant multiple of the bit size of the input numbers. In our application, the subsystem oracle solves linear feasibility systems with at most $m+k$ inequalities in $k$ variables, with coefficients in $\{-1,0,1\}$ and right-hand sides in $\{0,-1\}$ (see the proof of \zcref{lem:traces}).
These tests have polynomial bit complexity in $m+k$, which is absorbed by choosing the constant in the oracle dimension $\delta=c_0k$ sufficiently large.

\begin{lemma}
\label{lem:deep-point}
  Let $\beta \in \N$, and let $C = \Pi_1 \cup \cdots \cup \Pi_p \subseteq [0, 2^\beta]^k$ be a non-empty disjoint union of~$p$ boxes, given by their corners.
  One can compute deterministically a point $\mathbf{w} \in \bbox(C) \cap \mathbb{Z}^k$ that is $\frac1{2(k + 1)}$-deep in $C$, in time $(2k)^{\mathcal{O}(k)} \cdot p \cdot \poly(\beta, \log p)$.
\end{lemma}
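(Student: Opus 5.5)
We compress, thin out, solve exactly, and round. The error budget is split so that the final depth fraction is at least $\frac1{k+1} - \frac1{2(k+1)}$. Set $g \coloneqq \lceil 8k(k+1)\rceil$ and $r \coloneqq 4(k+1)$.

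First, apply \zcref{lem:compress} to $C = \Pi_1 \cup \cdots \cup \Pi_p$ with this $g$. This gives a weighted set $Y \subseteq C$ of at most $pg^k = p \cdot (2k)^{\mathcal{O}(k)}$ representatives, with $|Y| = |C|$ and $\err(Y, C) \le 2k/g \le \frac1{4(k+1)}$. The weights are box sizes of at most $(2^\beta+1)^k$, so they have $\mathcal{O}(k\beta)$ bits, and the cutting takes time $pg^k \cdot \poly(k, \beta)$.

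To use \zcref{prop:cm}, which works with unweighted ground sets, I would view $Y$ as a multiset. Its size $|C|$ can be huge, though, so the point weights would first be made small. One way is to round every weight down to a multiple of $|C|/(pg^k \cdot 8(k+1))$ and replace it by the quotient. This changes every half-space fraction by at most $\frac1{8(k+1)}$ and leaves a multiset $X$ of $\mathcal{O}(pg^k k)$ elements. To keep the budget, I would tighten $g$ by a constant so that the two errors together stay below $\frac1{4(k+1)}$.

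The ground set is the index set of $X$, and $\mathcal{R}$ is the family of traces of $\ell_1$-half-spaces. By \zcref{lem:traces}, any $m$ indices meet $\mathcal{R}$ in at most $(12m^2)^k = \mathcal{O}(m)^{2k}$ ways, and these are listed in time $(m+1)^{\mathcal{O}(k)}\poly(\beta)$. So this is a subsystem oracle of dimension $\delta = c_0 k$, with the bit-size remark of \zcref{sec:solver} absorbing $\poly(\beta)$ into the constant. \zcref{prop:cm} with $r = 4(k+1)$ then yields $\hat X$ of size $\mathcal{O}(k^3\log(k+1))$ with $\err(\hat X, X) \le \frac1{4(k+1)}$. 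The time is $\mathcal{O}(k)^{3c_0k} r^{2c_0k} \log^{\delta} \cdot |X| = (2k)^{\mathcal{O}(k)} \cdot p$ arithmetic operations on $\poly(\beta, \log p)$-bit numbers.

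Next, run \zcref{lem:exact-centerpoint} on $\hat X$, treated as a weighted set of at most $m = \mathcal{O}(k^3\log(k+1))$ points in $[0, 2^\beta]^k$. This takes time $(m+1)^{\mathcal{O}(k)}\poly(\beta) = (2k)^{\mathcal{O}(k)}\poly(\beta)$ and returns $\mathbf{c}$ with $\dep_{\hat X}(\mathbf{c})/|\hat X| \ge \frac1{k+1}$. By the triangle inequality for $\err$ and~\eqref{eq:transfer}, we get $\dep_C(\mathbf{c})/|C| \ge \frac1{k+1} - \frac1{4(k+1)} - \frac1{4(k+1)} = \frac1{2(k+1)}$. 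Finally, set $\mathbf{w} \coloneqq \lfloor\mathbf{c}\rfloor$. By \zcref{lem:rounding}, $\mathbf{w}$ is at least as deep in $C \subset \mathbb{Z}^k$. Since its depth is positive, the first claim of \zcref{lem:weighted-centerpoint} gives $\mathbf{w} \in \bbox(C) \cap \mathbb{Z}^k$.

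The main obstacle is the interface with \zcref{prop:cm}: turning the weighted compressed set into a genuinely finite set system of size polynomial in $p g^k$ without losing more than a constant fraction of $\frac1{k+1}$ in error. A related difficulty is certifying that the oracle dimension is $\Theta(k)$, so that $r^{2\delta}$ and $\mathcal{O}(\delta)^{3\delta}$ stay $(2k)^{\mathcal{O}(k)}$. If the rounding of weights is awkward, I would instead apply \zcref{prop:cm} separately to each sub-box class, or use a weighted variant, since $\varepsilon$-approximation arguments transfer to weights by duplication.
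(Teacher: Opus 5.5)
Your proposal is correct and is essentially the paper's proof: compress with \zcref{lem:compress}, turn the weights into a multiset of size $p \cdot (2k)^{\mathcal{O}(k)}$, thin it out with \zcref{prop:cm} using the trace oracle of \zcref{lem:traces} as a subsystem oracle of dimension $c_0k$, take an exact deepest point by \zcref{lem:exact-centerpoint}, and round down. The differences are cosmetic. Once $g$ is tightened to, say, $16k(k+1)$, your error split of $\frac1{8(k+1)}$, $\frac1{8(k+1)}$ and $\frac1{4(k+1)}$ lands exactly on $\frac1{2(k+1)}$, whereas the paper's three errors of $\frac1{8(k+1)}$ leave some slack. Your plain floor rounding of the weights works because lowering the weights by a total of $\Delta$ moves every fraction by at most $\Delta/|C|$. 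You obtain $\mathbf{w} \in \bbox(C)$ from its positive depth rather than from $\mathbf{c} \in \bbox(\hat X)$.
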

\clearpage
\pagebreak

\begin{proof}
  Let $\theta \coloneqq \frac1{8(k + 1)}$.
  We compute a multiset $\hat X \subseteq C$ of few points with \mbox{$\err(\hat X, C) \le 3\theta$}, and then an exact deepest point of $\hat X$.

  First, we compress $C$ by \zcref{lem:compress} with $g \coloneqq 16k(k + 1)$, so that $2k/g = \theta$, which gives a weighted set $Y \subseteq C$ of $m_0 \le pg^k$ points with $|Y| = |C|$ and $\err(Y, C) \le \theta$.

  Second, we round the weights to multiplicities.
  Let $K \coloneqq \lceil m_0/\theta\rceil$.
  Let $\mu_{\mathbf{y}} \coloneqq \lfloor K \cdot \omega(\mathbf{y})/|C|\rfloor$ for every $\mathbf{y} \in Y$ with weight $\omega(\mathbf{y})$, and add one to the $K - \sum_{\mathbf{y}}\mu_{\mathbf{y}}$ entries with the largest fractional parts of $K \cdot \omega(\mathbf{y})/|C|$.
  This number is the sum of these fractional parts, an integer that is at most the number of $\mathbf{y}$ with a positive fractional part.
  So only such entries are increased, and $|\mu_{\mathbf{y}} - K \cdot \omega(\mathbf{y})/|C|| < 1$ for all $\mathbf{y}$.
  Let $X$ be the multiset that contains every $\mathbf{y} \in Y$ exactly $\mu_{\mathbf{y}}$ times, so that $|X| = K$.
  For every $\ell_1$-half-space $\halfspace{\mathbf{c}}{\mathbf{v}}$, the numbers $|X \cap \halfspace{\mathbf{c}}{\mathbf{v}}|$ and $\frac{K}{|C|}|Y \cap \halfspace{\mathbf{c}}{\mathbf{v}}|$ differ by at most $\sum_{\mathbf{y}}|\mu_{\mathbf{y}} - K \cdot \omega(\mathbf{y})/|C|| < m_0$, and dividing by $K = |X|$ gives $\err(X, Y) \le m_0/K \le \theta$.

  Third, we apply \zcref{prop:cm} to the set system $([K], \mathcal{R})$, where $[K]$ indexes the points of $X$ and $\mathcal{R}$ consists of the traces of the $\ell_1$-half-spaces on $X$.
  For non-empty $A \subseteq [K]$, the sets of $\mathcal{R}|_A$ are the traces on the sub-multiset indexed by $A$.
  By \zcref{lem:traces}, there are at most $(12|A|^2)^k \le (4|A|)^{2k}$ of them, and the union of the families of \zcref{lem:traces} for the points indexed by $A$, with repetitions removed, lists exactly them.
  The listing takes $(|A| + 1)^{\mathcal{O}(k)}$ arithmetic operations, in which the coordinates enter only through comparisons and midpoints.
  Hence, for a suitable absolute constant $c_0 \ge 2$, the system has a subsystem oracle of dimension $\delta \coloneqq c_0k$.
  \zcref{prop:cm} with $r \coloneqq 1/\theta$ gives a sub-multiset $\hat X \subseteq X$, non-empty since $[K] \in \mathcal{R}$, of $\mathcal{O}(\delta r^2 \cdot \log(\delta r)) = \mathcal{O}(k^3 \cdot \log(k + 1))$ points with $\err(\hat X, X) \le \theta$.
  By the triangle inequality, $\err(\hat X, C) \le 3\theta$.

  Finally, \zcref{lem:exact-centerpoint}, applied to the distinct points of $\hat X$ with their multiplicities as weights, gives $\mathbf{c} \in \bbox(\hat X)$ with $\dep_{\hat X}(\mathbf{c}) \ge \frac{|\hat X|}{k + 1}$, and we output $\mathbf{w} \coloneqq \lfloor\mathbf{c}\rfloor$.
  By~\eqref{eq:transfer}, $\dep_C(\mathbf{c}) \ge \bigl(\frac1{k + 1} - \frac3{8(k + 1)}\bigr)|C| \ge \frac{|C|}{2(k + 1)}$, and by \zcref{lem:rounding}, $\mathbf{w}$ is at least as deep.
  Since $\hat X \subseteq X \subseteq Y \subseteq C$ as sets of points, $\mathbf{c} \in \bbox(C)$, and so $\mathbf{w} \in \bbox(C)$ by \zcref{lem:rounding}.

  \subparagraph*{Analysis of the run time.}
  We have $m_0 \le pg^k = p \cdot (2k)^{\mathcal{O}(k)}$ and $K \le m_0/\theta + 1 = p \cdot (2k)^{\mathcal{O}(k)}$.
  The compression and the rounding take $\mathcal{O}(m_0 \cdot (k + \log m_0))$ operations on numbers with $\mathcal{O}(k\beta + \log K)$ bits.
  The algorithm of \zcref{prop:cm} takes $\mathcal{O}(\delta)^{3\delta} \cdot r^{2\delta} \cdot \log^\delta(\delta r) \cdot K = (2k)^{\mathcal{O}(k)} \cdot p$ operations.
  Its input numbers are the indices in $[K]$ and the coordinates that our subsystem oracle uses, so, by the remark at the beginning of this section, all its numbers have $\mathcal{O}(\beta + \log K)$ bits.
  Any bit size polynomial in $K$ and $\beta$ would also suffice for \zcref{thm:lfp}, since the lemma would then take time $(2k)^{\mathcal{O}(k)} \cdot \poly(p, \beta)$, and the proof of \zcref{thm:lfp} applies it with $p = (2k \cdot \log(N + 2))^{\mathcal{O}(k)}$ and $\beta = \mathcal{O}(k \cdot \log(N + 2))$.
  The linear programs of the subsystem oracle have coefficients in $\{-1, 0, 1\}$ and are solved in time $\poly(|A|, k)$, which is absorbed by the choice of $c_0$.
  \zcref{lem:exact-centerpoint} on $\mathcal{O}(k^3 \cdot \log(k + 1))$ points takes time $(2k)^{\mathcal{O}(k)} \cdot \poly(\beta, \log K)$.
  So the total time is $(2k)^{\mathcal{O}(k)} \cdot p \cdot \poly(\beta, \log p)$.
\end{proof}

\begin{proof}[\bf Proof of \zcref{thm:lfp}]
  Let $\mathrm{DCC}_k$, for \emph{deterministic centerpoint cutting}, be the following algorithm.
  If $N = 0$, then $[0, N]^k = \{\0\}$, and it returns the least fixed point~$\0$ of $F$ without evaluations.
  Otherwise, it runs the cutting algorithm of \zcref{sec:cutting} with the deterministic $\frac1{2(k + 1)}$-deep guesses of \zcref{lem:deep-point}, applying the lemma to $C$ written as the union of its boxes (\zcref{lem:cutting}).
  So, by \zcref{lem:cutting} with $\alpha = \frac1{2(k + 1)}$, the search stops after $\mathcal{O}(k^2 \cdot \log(kN + 2))$ rounds and $\mathcal{O}(k^3 \cdot \log(kN + 2))$ evaluations of $F$, and it outputs $\mathbf{q}^*$.

  For the overhead, the time besides the evaluations, let $\vartheta \coloneqq 2k \cdot \log(N + 2) \ge 2$, so that $k \le \vartheta$ and $\log(kN + 2) \le \log k + \log(N + 2) \le \vartheta$.
  The number of rounds is $\mathcal{O}(k^2 \cdot \log(kN + 2)) = \vartheta^{\mathcal{O}(1)}$, so in every round $C$ is the union of $p = \vartheta^{\mathcal{O}(k)}$ boxes.
  The bookkeeping of a round takes $\vartheta^{\mathcal{O}(k)}$ operations on integers with $\mathcal{O}(\vartheta)$ bits, by \zcref{lem:cutting}.
  All coordinates are at most $\lambda N = 16k^2N^3$, so \zcref{lem:deep-point} applies with an integer $\beta = \mathcal{O}(\vartheta)$ and takes time $(2k)^{\mathcal{O}(k)} \cdot p \cdot \poly(\beta, \log p) = \vartheta^{\mathcal{O}(k)}$.
  Forming $F'(\mathbf{w})$, $\surplus'(\mathbf{w})$ and $\rd(\mathbf{w})$ from the at most $k + 1$ values of $F$, as in the proof of \zcref{lem:sinks}, takes time $\poly(k, \vartheta)$ per round.
  Hence the overhead is $\vartheta^{\mathcal{O}(1)} \cdot \vartheta^{\mathcal{O}(k)} = \vartheta^{\mathcal{O}(k)}$.
\end{proof}

\section{Finishing the proof}%
\label{sec:proof}

\begin{proof}[\bf Proof of \zcref{thm:main}]

We use \zcref{algo:blocks} with the set $S$ computed by \zcref{lem:pivots} in polynomial time and partitioned into blocks.
The block solver, given $k' \in [k]$, runs the algorithm of \zcref{thm:lfp} with $k'$ in place of $k$.
The input has $t^+ \le 2^n$ tokens, so \zcref{lem:recursion} applies.
The algorithm returns the numbers $t^-_v$ of the input instance, and it remains to choose $d$ and $k$ and to bound the run time.

Let $L \coloneqq n^{1/3} \cdot \log^{2/3} n$, let $d \coloneqq \lceil L\rceil$ and $k \coloneqq \lceil L/\log n\rceil$.
Since $n \ge 2$ (\zcref{sec:instances}), we have $1 \le \log n \le n$, so $0 < L \le n$ and $1 \le k \le d \le n$.

\medskip
We now analyse the run time.
The hypothesis of \zcref{lem:recursion} concerns boxes $[0, N]^{k'}$ with $k' \in [k]$ and $N \le 2^{4n^2}$, so that $\log(N + 2) \le 4n^2 + 1$.
The bounds of \zcref{thm:lfp} grow with $k'$ and $N$.
So on these boxes the block solver makes at most $Q = \mathcal{O}(k^3 \cdot n^2) = \mathcal{O}(n^5)$ evaluations, and its overhead is at most $I = (2k \cdot (4n^2 + 1))^{\mathcal{O}(k)}$, which gives
\begin{equation*}
	\log(Q + 1) = \mathcal{O}(\log n)\qquad\text{and}\qquad\log I = \mathcal{O}(k \cdot \log n)\enspace.
\end{equation*}
We may assume $I \ge 1$, so that $2^d + I \le 2 \cdot 2^d \cdot I$.
By \zcref{lem:recursion}, the logarithm of the run time is then at most
\begin{equation*}
	M \cdot \log(Q + 1) + d + \log I + \mathcal{O}(\log n)\enspace.
\end{equation*}
We bound the three terms with the parameters.
Since $k \ge L/\log n$ and $d \ge L$, we have $kd \ge L^2/\log n = n^{2/3} \cdot \log^{1/3} n$, and hence
\begin{equation*}
	M \cdot \log(Q + 1) = \mathcal{O}\Bigl(\Bigl(\frac n{kd} + 1\Bigr) \cdot \log n\Bigr) = \mathcal{O}\bigl(n^{1/3} \cdot \log^{2/3} n + \log n\bigr) = \mathcal{O}(L)\enspace.
\end{equation*}
Next, $d \le L + 1$.
Finally, $k \le L/\log n + 1$ gives $\log I = \mathcal{O}(k \cdot \log n) = \mathcal{O}(L + \log n) = \mathcal{O}(L)$.
Since $\log n \le L$, the logarithm of the run time is $\mathcal{O}(L)$, that is, the run time is $2^{\mathcal{O}(n^{1/3} \cdot \log^{2/3} n)}$.

For \arrival, the reduction of \zcref{sec:instances} gives a \garrival instance on $n + 1$ vertices with $t^+ = 1 \le 2^{n + 1}$.
The bound becomes $2^{\mathcal{O}(n^{1/3} \cdot \log^{2/3} n)}$, since $L(n + 1) \le 2L(n)$ for $n \ge 2$, where $L(m) \coloneqq m^{1/3} \cdot \log^{2/3} m$.
\end{proof}

\printbibliography

\end{document}